\theoremstyle{definition}
\newtheorem{assumption}{Assumption}
\newtheorem*{theorem*}{Theorem}
\newtheorem{theorem}{Theorem}
\newtheorem*{rmk*}{remark}
\newtheorem{proposition}{Proposition}
\newtheorem{lemma}{Lemma}
\newtheorem{example}{Example}
\newtheorem{condition}{Condition}
\newtheorem{definition}{Definition}
\newtheorem{remark}{Remark}
\newtheorem{corollary}{Corollary}
\newtheorem*{corollary*}{Corollary}
\apptocmd{\sloppy}{\hbadness 10000\relax}{}{} % Bibliography underfull/overfull box fix
\newcites{sec}{References}
 \def\spacingset#1{\renewcommand{\baselinestretch}%
{#1}\small\normalsize}
\newcommand{\mpt}{\mathcal P_2}
\newcommand{\mpo}{\mathcal P_1}
\newcommand{\mpoc}{\mpo^\textup{c}}
\newcommand{\xotm}{X_{1:m}}
\newcommand{\xotq}{X_{1:Q}}
\newcommand{\xincc}{\tilde X_{i, \textup{dm}}}
\newcommand{\xincf}{(X_{i1}, \ldots, X_{i,K-1})^\T}
\newcommand{\xinc}{\tilde X_i}
\newcommand{\bdiff}{\beta_{\textup{diff}}}\newcommand{\wip}{W_i^\perp}\newcommand{\zri}{Z_iR_i}\newcommand{\pcd}{\pr\left(\uicd\right)}\newcommand{\cxi}{\check X_i}
\newcommand{\assmivnomono}{\assmiv\eqref{it:indep}--\eqref{it:relevance}}\newcommand{\mx}{\mathcal X}
\newcommand{\propv}{\prop~\ref{prop:v}}
\newcommand{\bdx}{\beta_{DX}}
\newcommand{\bdv}{\beta_{DV}}
\newcommand{\tlx}{\tl(\xxi)}
\newcommand{\tcx}{\tc(\xxi)}
\newcommand{\bd}{\beta_\dd}
\newcommand{\ud}{\ui = \dd}
\newcommand{\smf}{strong monotonicity}
\newcommand{\wmf}{weak monotonicity}
\newcommand{\thmbi}{\thm~\ref{thm:bi}}\newcommand{\tl}{\tau_{\textsc{late}}}
\newcommand{\thmnece}{\thm~\ref{thm:nece}}
\newcommand{\where}{\quad\textup{where} \ \ }
\newcommand{\defli}{Definition~\ref{def:li}}
\newcommand{\uicd}{\ui \in \{\cc,\dd\}}
\newcommand{\ui}{U_i}
\newcommand{\uic}{\ui = \cc}
\newcommand{\bs}{\beta^*}
\newcommand{\woxi}{w(1,\xxi)}
\newcommand{\wzxi}{w(0,\xxi)}
\newcommand{\nnb}{\nonumber}
\newcommand{\assmivps}{\assm~\ref{assm:ivps}}
\newcommand{\thm}{Theorem}
\newcommand{\lem}{Lemma}
\newcommand{\assmiv}{\assm~\ref{assm:iv}}
\newcommand{\assmwm}{\assm~\ref{assm:wm}}
\newcommand{\exi}{e(X_i)}
\newcommand{\assm}{Assumption}
\newcommand{\prop}{Proposition}
\def\sec{Section}
\newcommand{\otn}{i = \ot{N}}\def\tslst{\texttt{2sls}}
\newcommand{\mbr}{\mathbb R}
\def\condivf{{the product of the IV propensity score and covariates are linear in the covariates}}
\def\pc{\pi_\cc}
\def\cdi{\hat D_i}
\def\taa{\tau_{++}}
\def\bcv{\bc'}
\def\xio{X_{i1}}
\def\vlem{Assume $X_i = (V_i^\T, W_i^\T)^\T \in \mathbb R^{K}$, where $V_i \in \mathbb R^Q$ with $Q\leq K$ and $\E(\xxit)$ is invertible}
\def\cq{is categorical with $Q$ levels}
\def\xxt{X\xt}
\def\xxit{X_i\xit}
\def\xxit{X_i\xit}
\def\vvit{V_i\vit}
\def\hxic{X_{i,\textup{dm}}}
\def\xt{X^\T}
\def\exx{\E(\xxt)}
\def\exxt{\E(\xxt)}
\def\exxinv{\left\{\E(\xxt)\right\}^{-1}}
\newcommand{\di}{D_i}
\def\dxi{D_iX_i}
\def\zxi{Z_iX_i}
\def\zvi{Z_iV_i}
\def\dvi{D_iV_i}
\def\vit{V_i^\T}
\def\bv{\bi'}
\def\hbv{\hbi'}
\def\its{interacted \tsls}
\newcommand{\lmt}{\texttt{ols}}
\newcommand{\tsls}{\textsc{2sls}}
\newcommand{\yid}{Y_{i}(d)}
\def\dti{\delta_i}
\def\ydset{\{\yizd, \dizz:  z = 0,1; \, d = 0,1\}}
\def\ydxsetiv{\{\yid, \dizz, X_i:  z = 0,1; \, d = 0,1\}}
\def\partitionszm{Let 
\begina
&&\msl = \{X = (X_0, X_1, \ldots, X_m)^\T \in \ms: \,  X_0 = \laz\}, \\
&&\msnl = \{X = (X_0, X_1, \ldots, X_m)^\T \in \ms: \, X_0 \neq \laz\}
\enda denote a partition of $\ms$, containing solutions to \eqref{eq:ls_x0_prop} that have $X_0 = \laz$ and $X_0 \neq \laz$, respectively}
\def\partitionsoqp{let 
\begina
&&\msl = \{X = (X_1, \ldots, X_Q, \xqp)^\T \in \ms: \,  \xqp = \laz\}, \\
&&\msnl = \{X = (X_1, \ldots, X_Q, \xqp)^\T \in \ms: \, \xqp \neq \laz\}
\enda denote a partition of $\ms$, containing solutions to \eqref{eq:ls_xx} that have $\xqp = \laz$ and $\xqp \neq \laz$, respectively}
\def\mxz{\mathcal S_0}
\def\mxq{\mathcal S_{Q+1}}
\def\mxrq{\mathcal S_{\mpt^\textup{c}}}
\def\ms{\mathcal S}
\def\mss{\mathcal S^*}
\def\msls{\mss_{\laz}}
\def\msnls{\msnl^*}
\def\msl{\ms_{\laz}}
\def\msnl{\ms_{\overline{\laz}}}
\def\msne{\ms_{\overline{\sigma(A)}}}
\def\xoqp{(X_1, \ldots, X_{Q+1})^\T}
\def\xoqps{(\xs_1, \ldots, \xs_{Q+1})^\T}
\def\mr{\mathbb R}
\def\yi{Y_i}
\def\zi{Z_i}
\def\forallnzq{for all $X = (X_0, X_1, \ldots, X_{Q})^\T \in \msnl$ with $X_0 \neq \laz$}
\def\foralllong{for all $X = (X_1, \ldots, X_{Q+1})^\T \in \msl$ with $\xqp = \laz$}
\def\forallnlong{for all $X = (X_1, \ldots, X_{Q+1})^\T \in \msnl$ with $\xqp \neq \laz$}
\def\alln{all $X = (X_1, \ldots, X_{Q+1})^\T \in \msnl$ with $\xqp \neq \laz$}
\def\forallslong{for all $\xs = \xoqps \in \msls$ that have $\xqps = \laz$}
\def\alls{all $\xs = \xoqps \in \msls$}
\def\allslong{all $\xs = \xoqps \in \msls$ that have $\xqps = \laz$}
\def\gqq{\gamma_{qq'}}
\def\xs{X^*}
\def\xqps{\xqp^*}
\def\aqqks{\aqqk^*}
\def\xrs{X_{1:r}^*}
\def\xrst{X_{1:r}^{*\T}}
\def\xrqs{X_{(r+1):Q}^*}
\def\aqqk{a_{qq'[k]}}
\def\aijk{a_{ij[k]}}
\def\aqrq{A_{\mpoc}}
\def\bqrq{b_{\mpoc}}
\def\cqrq{c_{\mpoc}}
\def\xr{X_{\mathcal P_2}}
\def\xrr{X_{\mathcal P_2^\textup{c}}}
\def\xrm{\xrr}
\def\xrq{\xrm}
\def\xmr{X_{\mathcal P_1}}
\def\xmrm{X_{\mathcal P_1^\textup{c}}}
\def\xqr{X_{\mathcal P_1}}
\def\xqrq{X_{\mathcal P_1^\textup{c}}}
\def\laz{\lambda_0}
\def\la{\laz I  - A}
\def\xqp{X_{Q+1}}
\def\E{\mathbb E}
\def\vi{V_i}
\def\xit{X_i^\T}
\def\vit{V_i^\T}
\def\ste{systematic treatment effect variation}
\def\hte{treatment effect heterogeneity} 
\def\xic{\check X_{i, \textup{dm}}}
\def\xn{\xinc}
\def\wrt{with respect to}
\def\tc{\tau_\cc}
\def\dmid{(1(\xis = 1), \ldots, 1(\xis= K))^\T}
\def\htwk{\htau_{\text{wald},[k]}}
\def\tdvi{\widetilde{DV_i}}
\def\tdvit{\widetilde{DV_i}^\T}
\def\hdvi{\widehat{DV_i}}
\def\uc{U_i = \cc}
\def\cp{U_i = \cc}
\def\assott{Assumption \ref{assm:iv}}
\def\agv{Assume $X_i = (V_i^\T, W_i^\T)^\T \in \mathbb R^{K}$, where $V_i \in \mathbb{R}^Q$ and $W_i \in \mathbb{R}^{K-Q}$ with $Q \leq K$}
\def\epi{\epsilon_i}
\def\yizd{Y_i(z,d)}
\def\tdxi{ \widetilde{DX}_i}
\def\hdxi{\widehat{D X}_i}
\def\bc{\beta_\cc}
\newcommand{\bcd}{\beta_{\cc,\dd}}
\def\aa{\textup{a}}
\def\cc{\textup{c}}
\def\dd{\textup{d}}
\def\nn{\textup{n}}
\def\hbi{\hb_\tsls}
\def\bi{\beta_\tsls}
\def\hb{\hat\beta}
\def\cc{\textup{c}}
\def\zi{Z_i}
\def\tk{\tau_{[k]\cc}}
\def\hta{\htau_{++}}
\def\hti{\htii}
\def\htii{\htau_{\times\times}}
\def\htia{\htau_{\times+}}
\def\ta{\tau_{++}}
\def\ti{\tau_{\times\times}}
\def\tia{\tau_{\times+}}
\def\sumk{\sum_{k=1}^K}
\def\xxi{X_i}
\def\xis{X_i^*}
\def\hdi{\hat D_i}
\def\proj{\textup{proj}}
\def\res{\textup{res}}
\def\tdi{\tilde D_i}
\def\dizz{D_i(z)}
\def\yizzd{Y_i(z,d)}
\def\yiz{Y_i(0)}
\def\yio{Y_i(1)}
\def\diz{D_i(0)}
\def\dio{D_i(1)}
\def\hmu{\hat\mu}
\def\htau{\hat\tau}
\def\beginp{\begin{pmatrix}}
\def\endp{\end{pmatrix}}
\def\ols{\textsc{ols}}
\def\olss{\textsc{ols }}
\DeclareMathOperator{\rank}{\textup{rank}}
\DeclareMathOperator{\sgn}{\textup{sgn}}
\DeclareMathOperator{\diag}{\textup{diag}}
\newcommand{\pr}{\mathbb P}
 \newcommand{\muk}{\mu_k}
\def\ep{\epsilon}
\def\beginy{\begin{eqnarray}}
\def\endy{\end{eqnarray}}
\def\T{\top}
\def\begina{\begin{eqnarray*}}
\def\enda{\end{eqnarray*}}
\def\beginar{\begin{array}}
\def\endar{\end{array}}
\def\begineqs{\begin{equation*}}
\def\endeqs{\end{equation*}}
\def\begineq{\begin{equation}}
\def\endeq{\end{equation}}
\newcommand{\sm}{Supplementary Material}
\def\begini{\begin{itemize}}
\def\endi{\end{itemize}}
\def\begine{\begin{enumerate}}
\def\ende{\end{enumerate}}
\newcommand{\ind}[1]{1_{\{#1\}}}
\newcommand{\ot}[1]{1, \ldots, #1}
\newcommand{\oeqt}[1]{\overset{\text{#1}}{=}}
\newcommand{\oeq}[1]{\overset{#1}{=}} 
\newcommand{\indep}{\perp \!\!\! \perp}
\DeclareMathOperator\cov{cov}    
\DeclareMathOperator\var{var}
\begin{document}

\spacingset{1}

  \title{\bf \Large Two-stage least squares with treatment-covariate interactions for \hte}
  \date{}
  \author{Anqi Zhao\thanks{
  Anqi Zhao: Fuqua School of Business, Duke University. Peng Ding: Department of Statistics, University of California, Berkeley. Fan Li: Department of Statistical Science, Duke University. 
    Peng Ding was supported by the U.S. National Science Foundation (grants \#1945136, \#2514234).}
    \quad Peng Ding \quad 
    Fan Li 
    }

  \maketitle

\bigskip

\begin{abstract}
Treatment effect heterogeneity with respect to covariates is common in instrumental variable (IV) analyses. 
An intuitive approach, which we call the {\it interacted two-stage least squares} (\tsls),  is to postulate a working linear model of the outcome on the treatment, covariates, and treatment-covariate interactions, 
and instrument it using the IV, covariates, and IV-covariate interactions. 
We clarify the causal interpretation of the \its\ under the local average treatment effect (LATE) framework when the IV is valid conditional on the covariates.
Our main findings are threefold.  
%
%%%%%%%
First, we show that the coefficients on the treatment-covariate interactions from the \its\ are consistent for estimating \hte\ \wrt\ covariates among compliers for any outcome-generating process if and only if \condivf, referred to as the {\it linear IV-covariate interactions} condition.
Second, assuming that the covariate vector has dimension $K$ and includes a constant term, we show that the linear IV-covariate interactions condition holds only if the IV propensity score takes at most $K$ distinct values. 
As a result, this condition is difficult to satisfy beyond two special cases: (a) the covariates are categorical with $K$ levels, or (b) the IV is randomly assigned.
These results underscore the difficulty of interpreting regression coefficients from specifications with treatment-covariate interactions when the covariates are not saturated and the IV is not unconditionally randomized, absent correct specification of the outcome model.   
Third, as an application of our theory, we show that the \its\ with demeaned covariates is consistent for estimating the LATE under the linear IV-covariate interactions condition.  
\end{abstract}

\noindent%
{\it Keywords:} Conditional average treatment effect, instrumental variable, local average treatment effect, potential outcome, treatment effect variation 

\vfill

\newpage
\spacingset{1.6} % DON'T change the spacing!

\section{Introduction}
Two-stage least squares (\tsls) is widely used for estimating treatment effects when instrumental variables (IVs) are available for endogenous treatments. 
Under the potential outcomes framework, \cite{imbens1994identification} and \cite{angrist1996identification} defined the local average treatment effect (LATE) as the average treatment effect over the subpopulation of {\it compliers} whose treatment status is affected by the IV, and formalized the assumptions on the IV that ensure the consistency of \tsls\ for estimating the LATE.
 
Often, IVs only satisfy the IV assumptions after conditioning on a set of covariates. 
Refer to such IVs as {\it conditionally valid} IVs hence. 
A common strategy is to include the corresponding covariates as additional regressors when fitting the \tsls.
When \hte\ \wrt\ covariates is suspected, an intuitive generalization, which we term the {\it interacted} \tsls, is to add the IV-covariate and treatment-covariate interactions to the first and second stages, respectively \citep[Section 4.5.2]{angrist2009mostly}.

To our knowledge, theoretical properties of the interacted \tsls\ have not been discussed under the LATE framework except when the IV is randomly assigned hence {\it unconditionally valid} \citep{ding2019decomposing}.
Our paper closes this gap and clarifies the causal interpretation  of the \its\ in the more prevalent setting where the IV is conditionally valid. 
The main findings are threefold. 

First, define {\it \ste} as the variation in individual treatment effects that can be explained by covariates \citep{heckman1997making, djebbari2008heterogeneous}. We show that the coefficients on the treatment-covariate interactions from the \its\ are consistent for estimating the {\ste} among compliers 
for any outcome-generating process 
if and only if \condivf, referred to as the {\it linear IV-covariate interactions} condition. 
%%%%%%%%%%%%%%%%%
%
This extends the results about the necessity of the {\it rich covariates} condition for additive \tsls\ specifications in \cite{blandhol2022tsls} to specifications with treatment-covariate interactions.

Second, assuming that the covariate vector has dimension $K$ and includes a constant term, we show that the linear IV-covariate interactions condition holds only if the IV propensity score takes at most $K$ distinct values. 
As a result, without additional assumptions on the IV assignment mechanism, this condition is difficult to satisfy beyond two special cases: (a) the covariates are categorical with $K$ levels, or (b) the IV is randomly assigned.

Together, these two findings underscore the difficulty of interpreting regression coefficients from specifications with treatment-covariate interactions when (i) covariates are continuous, and (ii) the IV is not unconditionally randomized, in the absence of correct specification of the outcome model.
This cautions against the inclusion of treatment-covariate interactions in \tsls\ specifications, extending the discussion in \citet[Section 3.2]{chen2025potential} to the IV setting. 

Third, as an application of our theory, we show that the interacted \tsls\ with demeaned covariates recovers the LATE as a single coefficient under the linear IV-covariate interactions condition. 
%%%%%%%%%%%%%%%%%%%%%
In contrast, existing \tsls\ procedures with an additive second stage only recover weighted averages of the conditional LATEs that generally differ from the LATE \citep{angrist1995two, kolesar2013estimation, sloczynski2022not, blandhol2022tsls}.

To further inform practice, we extend our theory to specifications that include interactions between the treatment and only a subset of covariates; see \cite{resnjanskij2024can} for a recent application.
See also \cite{hirano2001estimation} and \cite{hainmueller2019much} for applications of the partially interacted regression in the ordinary least squares (\ols) setting.

\paragraph{Notation.} 
For $u_i\in \mbr$, $\{(v_{i1}, \ldots, v_{iK}): v_{ik} \in \mbr^{p_k}\}$, and $\{(w_{i1}, \ldots, w_{iL}): w_{il} \in \mbr^{q_l}\}$ defined for a population indexed by $\otn$, let $\lmt(u_i \sim v_{i1}+\cdots+v_{iK})$ denote the \ols\ regression of $u_i$ on the stacked vector $v_i = (v_{i1}^\T, \ldots, v_{iK}^\T)^\T$ over $\otn$, and let $\tslst(u_i \sim v_{i1}+\cdots+v_{iK} \mid w_{i1}+\cdots+w_{iL})$ denote the \tsls\ regression of $u_i$ on $v_i = (v_{i1}^\T, \ldots, v_{iK}^\T)^\T$ over $\otn$, instrumented by $w_i = (w_{i1}^\T, \ldots, w_{iL}^\T)^\T$. We allow each $v_{ik}$ and $w_{il}$ to be a scalar or a vector and use + to denote concatenation of regressors. 
Throughout, we use $\lmt(\cdot)$ and $\tslst(\cdot)$ to denote the numerical outputs of \ols\ and \tsls\ without imposing any assumption about the corresponding linear model. 
For two random vectors $Y \in \mathbb R^p$ and $X \in \mathbb R^q$, denote by 
$\proj(Y\mid X)$ the linear projection of $Y$ on $X$ in that $\proj(Y\mid X) = BX$, where $B = \textup{argmin}_{b \in \mathbb R^{p\times q}}\E(\|Y - bX\|^2) = \E(Y X^\T)\{\E(X X^\T)\}^{-1}$.
Let $1(\cdot)$ denote the indicator function. 
Let $\indep$ denote independence.

\section{Interacted 2{\large SLS} and identifying assumptions}\label{sec:setup}

\subsection{Interacted 2{\normalsize SLS}}
Consider a study with two treatment levels, indexed by $d = 0,1$, and a study population of $N$ units, indexed by $i = \ot{N}$. 
For each unit, we observe a treatment status $\di \in \{0,1\}$, an outcome of interest $Y_i \in \mathbb R$, a $K\times 1$ vector of exogenous baseline covariates $X_i$, and a binary IV $Z_i \in \{0,1\}$ that is conditionally valid given $ X_i$.
We formally define conditional validity of IV in \assm~\ref{assm:iv} in \sec~\ref{sec:assumptions}. 

Definition \ref{def:2sls_a} below reviews the standard \tsls\ procedure for estimating the causal effect of the treatment on outcome. 
We call it the {\it additive \tsls} to signify the additive regression specifications in both the first and second stages.

\begin{definition}[Additive \tsls]\label{def:2sls_a}
Consider the \tsls\ regression of $Y_i$ on $(\di, \xxi)$, instrumented by $(\zi, \xxi)$ over $i = \ot{N}$: 
$$
\tslst(Y_i \sim \di + \xxi \mid \zi + X_i).
$$  
Estimate the treatment effect by the coefficient on $\di$.
\end{definition}

The additive \tsls\ in Definition~\ref{def:2sls_a} can be viewed as being motivated by the {\it additive working model} $Y_i = D_i \beta_D + X_i ^\T \beta_X + \eta_i $, where $\E(\eta_i \mid Z_i, X_i) = 0$,  so that $\eta_i$ is mean independent of $(\zi,\xxi)$. The coefficient on $D_i$, $\beta_D$, represents the constant treatment effect of interest, and we use $Z_i$ to instrument the possibly endogenous $D_i$. 
We use the term {\it working model} to refer to a model that is used as a numerical device for constructing estimators, but is not necessarily correctly specified.

When \hte\ \wrt\ covariates is suspected, an intuitive generalization is to consider the {\it interacted} working model 
\begineq\label{eq:wm}
Y_i = D_i \xit \bdx + X_i ^\T \beta_X + \eta_i,
\endeq 
where $\E(\eta_i \mid Z_i, X_i) = 0$. The linear function of covariates, $\xit \beta_{DX}$,  represents the covariate-dependent treatment effect, capturing the systematic treatment effect variation \citep{heckman1997making}. 
We can then use $\zxi$ to instrument the possibly endogenous $D_iX_i$; see, e.g., \citet[][Section 4.5.2]{angrist2009mostly}.
We term the resulting \tsls\ procedure the {\it interacted \tsls}, formalized in Definition \ref{def:2sls_i} below.

\begin{definition}[Interacted \tsls]\label{def:2sls_i}
Consider the \tsls\ regression of $Y_i$ on $(\dxi, \xxi)$, instrumented by $(\zxi, \xxi)$ over $i = \ot{N}$: 
$$
\tslst(Y_i \sim \dxi + \xxi \mid \zxi + X_i).
$$   
Let $\hbi$ denote the coefficient vector of $\dxi $.
\end{definition}

While the interacted \tsls\ is motivated by the working model in \eqref{eq:wm}, we focus on the causal interpretation of the coefficient vector $\hbi$ under the LATE framework, which views the \its\ as a numerical device for computing $\hbi$ without assuming that the working model \eqref{eq:wm} is correctly specified.

\subsection{IV assumptions and causal estimands}\label{sec:assumptions}
We formalize the IV assumptions and estimands using the potential outcomes framework \citep{imbens1994identification, angrist1996identification}.

For $z, d \in \{0,1\}$, 
let
$\dizz$ denote the potential treatment status of unit $i$ if $Z_i=z$, and let 
$\yizzd$ denote the potential outcome of unit $i$ if $Z_i = z$ and $\di = d$. 
The observed treatment status and outcome satisfy $\di = \di (\zi)$ and $\yi = \yi (\zi, \di) = \yi(\zi, \di(\zi))$. 
Following the literature, we classify the units into four compliance types, denoted by $U_i$, based on the joint values of $\dio$ and $\diz$. We call unit $i$ 
an {\it always-taker} if $\dio = \diz = 1$, denoted by $U_i = \aa$; 
a {\it complier} if $\dio = 1$ and $\diz = 0$, denoted by $U_i = \cc$; 
a {\it defier} if $\dio = 0$ and $\diz = 1$, denoted by $U_i = \dd$; and 
 a {\it never-taker} if $\dio = \diz = 0$, denoted by $U_i = \nn$.

Throughout, assume that 
$\{\yizzd, \dizz, X_i, Z_i: z, d = 0,1\}$ are independent and identically distributed across $i = \ot{N}$, and let $\pr(\cdot)$ and $\E(\cdot)$ denote the probability and expectation with respect to the common joint distribution of $\{\yizzd, \dizz, X_i, Z_i: z, d = 0,1\}$.
By the law of large numbers, the probability limit of $\hbi$ equals
\begineq\label{eq:bi_def}
\bi = \text{the first $K$ elements of $\left[ \E\left\{\beginp
\zxi\\
X_i
\endp \Big(D_iX_i^\T, X_i^\T\Big)\right\} \right]^{-1} \E\left\{\beginp
\zxi\\
X_i
\endp Y_i \right\}$},
\endeq 
provided the standard rank condition that the matrix inverse in \eqref{eq:bi_def} is well defined.
Of interest is the causal interpretation of $\bi$ under the LATE framework. 

As a basis, 
Assumption~\ref{assm:iv} below reviews the assumptions for {\it conditionally valid} IVs that we assume throughout the paper.

\begin{assumption}\label{assm:iv}
\begine[(i)]
\item\label{it:indep} \textit{Conditional exogeneity}: $
 Z_i \indep \ydset \mid X_i$. 
 %%%%%%%%%%
\item\label{it:er} \textit{Exclusion restriction}: $Y_i(0, d) = Y_i(1,d) = \yid$  for $d = 0,1$, where $\yid$ denotes the common value.  
 \item\label{it:overlap} \textit{Overlap}: $0 < \pr(Z_i = 1\mid \xxi) < 1$.    
\item\label{it:relevance} \textit{Relevance}: $\E\{\dio - \diz \mid X_i\}\neq 0$.
\item\label{it:mono} \textit{Monotonicity:} $\dio \geq \diz$.   
\ende 
 \end{assumption}

\assmiv\eqref{it:indep} requires that $Z_i$ is as-if randomly assigned given $ X_i$.
\assmiv\eqref{it:er} requires that $Z_i$ has no effect on the outcome once we condition on the treatment status.
\assmiv\eqref{it:overlap} requires that the IV propensity score is strictly between zero and one for any value of $\xxi$, ensuring that the probabilities of $Z_i = 1$ and $Z_i = 0$ are both positive for all values of $X_i$.
\assmiv\eqref{it:relevance} ensures $Z_i$ has a nonzero causal effect on the treatment status at all possible values of $X_i$, so that $\pr\{\uicd\mid \xxi\} > 0$.
\assmiv\eqref{it:mono} precludes defiers. 
\assmiv\eqref{it:relevance}--\eqref{it:mono} together ensure positive proportion of compliers at all possible values of $X_i$, i.e., $\pr(\cp \mid X_i) > 0$.

\assmiv\eqref{it:mono} is also known as {\it strong monotonicity} \citep{sloczynski2022not}.
\cite{sloczynski2022not} and \cite{blandhol2022tsls} studied a weaker version of it, known as {\it \wmf}, that precludes the existence of either defiers or compliers at each covariate value.
We assume \assmiv\ throughout most of the main paper, and present extensions of our theory under \wmf\ in  \sec~\ref{sec:ext} and the \sm.

Under \assmiv, define 
$\tau_i = \yio - \yiz $ as the individual treatment effect of unit $i$, where $\yid$ denotes the common value of $Y_i(1,d) = Y_i(0,d)$ for $d = 0,1$  under Assumption~\ref{assm:iv}\eqref{it:er}. 
Define $\tau_i = \yio - \yiz$ as the individual treatment effect of unit $i$. Define
\beginy\label{eq:tc}
 \tc = \E(\tau_i \mid \cp )
\endy
as the LATE, and define 
\beginy\label{eq:tcx}
 \tc( X_i ) = \E(\tau_i \mid X_i,\, \cp)
\endy
as the conditional LATE given $X_i$. 
The law of iterated expectations implies that $\tc =\E\left\{\tc(X_i) \mid \cp \right\}$.

To quantify \hte\ \wrt\ covariates, 
consider the linear projection of $\tau_i$ onto $X_i$ among compliers, denoted by
$\proj_{\,\uc}(\tau_i \mid X_i) = \xit \bc$. 
The coefficient vector equals 
\begineq\label{eq:bc_def}
\bc = \textup{argmin}_{b \in \mathbb R^K}\E\left\{  (\tau_i - \xit b )^2 \mid \uc\right\} = \left\{\E( X_i \xit \mid \cp ) \right\}^{-1}\E\left( X_i \tau_i \mid \cp \right).  
\endeq
By \citet[Theorem 3.1.6]{angrist2009mostly}, $\xit \bc $ is also the linear projection of the conditional LATE $\tc( X_i)$ on $ X_i$ among compliers, with 
\begineqs
\bc = \textup{argmin}_{b \in \mathbb R^K}\E[ \{\tc(X_i) - \xit b\}^2 \mid \uc].
\endeqs
Therefore, $\xit \bc$ is the best linear approximation to both $\tau_i$ and $\tc(X_i)$ based on $X_i$, generalizing the notion of {\it systematic treatment effect variation} in \cite{djebbari2008heterogeneous} to the IV setting; see also \cite{heckman1997making}. 
This justifies viewing $\bc$ as the causal estimand for quantifying the part of \hte\ among compliers that is linearly explained by $X_i$.

Under the finite-population, design-based framework, \cite{ding2019decomposing} showed that $\bi=\bc$ when $Z_i$ is randomly assigned hence valid without conditioning on $X_i$. 
We establish in Sections~\ref{sec:hte}--\ref{sec:late} the properties of $\bi$ for identifying $\bc$, $\tc$, and $\tc(X_i)$ when $Z_i$ is only conditionally valid given $X_i$.

\section{Causal interpretation of $\bi$}\label{sec:hte} 
Assume \assmiv\ throughout this section. 
Standard \tsls\ theory implies that, when the interacted working outcome model in \eqref{eq:wm} is correctly specified, $\bi$ equals the coefficient on $\dxi$, $\bdx$. 
We can further show that $\tc(\xxi) = \xit\bdx$, so the conditional LATE $\tcx$ is linear in $\xxi$, with $\bc = \bdx$.

Note that \eqref{eq:wm} assumes mean independence between $\eta_i$ and $(Z_i, X_i)$, and therefore requires linearity of the conditional expectation of $\yi$ given $(\zi, \xxi)$.
It effectively assumes away unobserved \hte, running counter to the motivation of the LATE literature \citep{mogstad2024instrumental}.
Complementing this trivial case, we present in \sec~\ref{sec:bi} a necessary and sufficient condition on the IV assignment mechanism under which $\bi$ identifies $\bc$ for all possible potential outcomes models. 
We then discuss the stringency of this condition in Section~\ref{sec:nece}.

\subsection{A necessary and sufficient condition}\label{sec:bi}

As a starting point, \defli\ below reviews the {\it level independence} condition due to \cite{blandhol2022tsls} in the current context under \assmiv. See also the {\it minimally quasi-experimental} condition in \cite{chen2025potential}. 
Let $\mathcal P$ denote the collection of all possible joint distributions of $\{\yid, \dizz, \xxi, Z_i: z,d=0,1\}$ under \assmiv.

\begin{definition}[Level independence] \label{def:li}
Under \assmiv, a functional defined on $\mathcal P$ is {\it level-independent} if, for any $\pr\in\mathcal P$, it depends on $\{\yio, \yiz\}$ only via $\tau_i = \yio - \yiz$.  
\end{definition}

Level independence is arguably the minimal requirement for any functional defined on $\mathcal P$ to admit a causal interpretation.
In particular, by convention in the literature, a quantity is considered causal only if it compares the potential outcomes of the same units, and therefore must depend on $\{\yio, \yiz\}$ only through $\tau_i = \yio - \yiz$.
The LATE $\tc$ in \eqref{eq:tc}, the conditional LATE $\tc(\xxi)$ in \eqref{eq:tcx}, and the projection coefficient $\bc$ in \eqref{eq:bc_def} are all level-independent functionals defined on $\mathcal P$. 

Let 
\begineqs
e(x) = \pr(\zi = 1\mid \xxi = x)
\endeqs 
denote the {\it IV propensity score}  \citep{rosenbaum1983central}.
Assumption~\ref{assm:ivps} below states a necessary and sufficient condition on the functional form of $\exi$ for $\bi$ to be level-independent, which in turn ensures that $\bi = \bc$ without any restrictions on the
potential outcomes.

\begin{assumption}[\bfseries Linear IV-covariate interactions]\label{assm:ivps} $\E(\zxi \mid X_i) = e(X_i)X_i$ is linear in $X_i$; that is, $\E(\zi\xxi\mid \xxi) = \proj(\zi\xxi\mid \xxi)$. 
\end{assumption}

\begin{theorem}\label{thm:bi}
Assume \assmiv. Then 
\begine[(i)]
\item\label{it:thm_bi_nece} $\bi$ is level-independent if and only if \assmivps\  holds. 
\item\label{it:thm_bi_suff} $\bi = \bc$ for all possible potential outcomes models if and only if \assmivps\ holds. 
\ende
\end{theorem}

\thmbi\eqref{it:thm_bi_nece} establishes the necessity and sufficiency of \assmivps\ for $\bi$ to be level-independent. 
%%%%%%%%%%%%
\thmbi\eqref{it:thm_bi_suff} establishes the necessity and sufficiency of \assmivps\ for $\bi$ to identify $\bc$ under arbitrary potential outcomes models. 
This implies a doubly-robust identification result for $\bc$, i.e., $\bi = \bc$ if either the IV assignment mechanism satisfies \assmivps, or the outcome model is correctly specified.

Given that $\bc$ is level-independent, the necessity of \assmivps\ in \thmbi\eqref{it:thm_bi_nece} implies the necessity in \thmbi\eqref{it:thm_bi_suff}, while 
the sufficiency in \thmbi\eqref{it:thm_bi_suff} implies the sufficiency in \thmbi\eqref{it:thm_bi_nece}. 

\thmbi\ extends the necessity and sufficiency of the {\it rich covariates} condition, \begineq\label{eq:rich}
\exi = \proj(\zi\mid \xxi), 
\endeq
introduced by \cite{blandhol2022tsls} for the LATE interpretation of additive \tsls\ specifications to specifications with treatment-covariate interactions.

When $\xxi$ includes a constant term, or there exists a constant vector $c\in\mbr^K$ such that $\xit c$ is constant, \assmivps\ implies the rich covariates condition in \eqref{eq:rich}. 
Write $\exi = \proj(\zi\mid \xxi) = \xit\gamma$, 
where $\gamma = \{\E(\xxi \xit)\}^{-1}\E(\xxi\zi)$. Under this representation, \assmivps\ is equivalent to requiring that $X_i \xit \gamma$ be linear in $\xxi$.
Intuitively, $X_i \xit\gamma$ consists of linear combinations of products of the nonconstant components of $X_i$, making this a rather restrictive condition.  
We formalize this intuition below.

\subsection{Sufficient and necessary conditions for \assmivps}\label{sec:nece}
\thm~\ref{thm:nece} below establishes a necessary condition for \assmivps\ that substantially restricts the class of admissible IV assignment mechanisms.   

\begin{theorem}\label{thm:nece}
Let $K$ denote the dimension of the covariate vector $X_i$.
\begine[(i)]
 \item\label{it:thm_nece_sufficient} \assmivps\ holds if either of the following conditions holds
\begine 
\item\label{it:thm_cat} \textbf{Categorical covariates:} $X_i$ is categorical with $K$ levels. 
\item\label{it:thm_indep} \textbf{Random assignment of IV:} $Z_i \indep X_i$. 
\ende 
%%%%%%%%%%
%%%%%%%%%%%
\item \label{it:assm_ezx_necessary} 
Assume that $X_i = (1, \xio , \ldots, X_{i,K-1})^\T$ includes the constant term as the first element. 
If \assmivps\ holds, then $\exi$ is linear in $X_i$ and takes at most $K$ distinct values over the support of $X_i$.

Let $e_a(x) = a^\T x$ denote the linear propensity score function defined by $a \in \mbr^K$. 
If \assmivps\ holds for all $\{e_a(x):a \in \mbr^K\}$, then $\xxi$ must be categorical with $K$ levels.
\end{enumerate}
\end{theorem}

\thm~\ref{thm:nece}\eqref{it:thm_nece_sufficient} provides two important sufficient conditions of Assumption \ref{assm:ivps}. 
These are also the two special cases of the rich covariates condition discussed in \cite{blandhol2022tsls}. 
\begini
\item The condition of categorical covariates in \thm~\ref{thm:nece}\eqref{it:thm_cat} is equivalent to the {\it saturated covariates} condition in \cite{angrist2009mostly}, where we encode $\xxi$ using a full set of dummies for all values of $\xxi$. 
Under this saturated covariates specification, the \its\ is numerically equivalent to nonparametrically condition on $\xxi$, so that the elements of $\bi$ equal the category-specific LATEs.
We formalize this result in Example~\ref{ex:cat_bi} below.
%%%%%%%%%%%%%%
%%%%%%%%%%%%%
\item The condition of $Z_i \indep X_i$ in \thm~\ref{thm:nece}\eqref{it:thm_indep}, combined with Assumption \ref{assm:iv}, implies that the IV is valid without conditioning on $X_i$. By \thmbi, the resulting $\bi$ identifies $\bc$.  See \citet[][Theorem 7]{ding2019decomposing} for similar results in the finite-population, design-based framework.
\endi

More importantly, \thm~\ref{thm:nece}\eqref{it:assm_ezx_necessary} establishes a necessary condition for Assumption \ref{assm:ivps} that the IV propensity score takes at most $K$ distinct values. 
This substantially restricts the class of admissible IV assignment mechanisms beyond the two special cases in \thm~\ref{thm:nece}\eqref{it:thm_nece_sufficient}.
In particular, recall that the rich covariates condition in \eqref{eq:rich} requires $\exi$ to be a linear function of $\xxi$. 
From \thm~\ref{thm:nece}\eqref{it:assm_ezx_necessary}, when $\xxi$ includes a constant term, \assmivps\ requires not only the rich covariates condition, but also that $\exi$ takes at most $K$ distinct values. 
This requirement rules out the possibility left open under the rich covariates condition that $\exi$ is a linear function of $\xxi$ taking more than $K$ distinct values.

In addition, given that the propensity score function must be linear under \assmivps, \thm~\ref{thm:nece}\eqref{it:assm_ezx_necessary} further implies that if \assmivps\ holds for all linear propensity score functions, in the sense that $\E(Z_i\xxi\mid \xxi) = e(X_i)\xxi$ is linear in $\xxi$ for every $e(x)$ of the form $e(x) = a^\T x$, then $\xxi$ must be categorical with $K$ level.
We provide below a sketch of the proof of Theorem~\ref{thm:nece}\eqref{it:assm_ezx_necessary} to provide intuition.

\begin{proof}[\bf Proof sketch of \thm~\ref{thm:nece}\eqref{it:assm_ezx_necessary}]
Let $\xn = \xincf$ with $X_i = (1, \xn^\T)^\T$. 
\assmivps\ is equivalent to
\beginy\label{eq:linear_main}
\text{$e(\xxi) = \E(Z_i \mid X_i)$ and $\E(Z_i \xn \mid X_i)$ are both linear in $X_i$}.  
\endy
From \eqref{eq:linear_main}, there exists constants $a_0\in \mr$ and $a_X\in\mr^{K-1}$, such that 
\begineq\label{eq:ezx_main}
e(X_i) = \E(Z_i \mid X_i) = a_0 + \xn^\T a_X,
\endeq
and 
\begineq\label{eq:ezx_1:K-1_main}
\E(Z_i \xn \mid X_i) =\xn \cdot \E(\zi \mid X_i)  \overset{\eqref{eq:ezx_main}}{=} 
% \left(a_0 + \xn^\T a_X\right) \cdot \xn = 
 \xn a_0 + \xn \xn ^\T a_X.
\endeq
From \eqref{eq:ezx_1:K-1_main}, for $\E(Z_i \xn \mid X_i)$ to be linear in $(1, \xn )$ as suggested by \eqref{eq:linear_main}, we need $ \xn \xn ^\T a_X$ to be linear in $ (1, \xn ) $.
In Corollary \ref{cor:la_x}\eqref{it:cor_i} in the \sm, we show that this implies $\xn ^\T a_X$ takes at most $K$ distinct values. By \eqref{eq:ezx_main}, this implies that $e(X_i)$ takes at most $K$ distinct values. 
As an illustration, consider the case of $K=2$ with $X_i = (1, X_{i1})^\T$, where $\xn = X_{i1}\in \mr$ is a scalar.
Equation~\eqref{eq:ezx_1:K-1_main} implies that $
\E(Z_i \xn \mid X_i) = \xn a_0 + \xn \xn^2 a_X$. 
\assmivps\ requires that $\E(Z_i \xn \mid X_i)$ is linear in $\xxi$, such that 
\begineqs
 \xn a_0 + \xn^2 a_X = b_0 + b_X \xn  
\endeqs
for some constants $b_0$ and $b_X$. 
This quadratic equation of $\xn$ has at most two distinct solutions, which implies that $\xn$, and therefore $a_X \xn$ is either constant or binary.
Extending this to vector $\xxi$ has some math complications. we relegate the formal proof to the appendix, which is based on mathematical induction.

In addition, if $e(\xxi)\xxi$ is linear in $X_i$ for all $e(x) \in \{e_a(x)=a^\T x:a \in \mbr^K\}$, from \eqref{eq:ezx_1:K-1_main}, we need $ \xn \xn ^\T a_X$ to be linear in $ (1, \xn )$ for all $a_X$. 
This implies that 
\begineq\label{eq:intuition}
\text{$\xn\xn^\T$ is componentwise linear in $(1,\xn)$.}
\endeq 
We show in \prop~\ref{prop:la_xx} in the \sm\ that to ensure \eqref{eq:intuition}, $\xn$, and therefore $\xxi$, takes at most $K$ distinct values. 
Recall the rank condition for $\bi$ in \eqref{eq:bi_def}, which implies that $\xxi$ takes at least $K$ distinct values. 
Therefore, $X_i$ must be categorical with $K$ levels.  
\end{proof}

%%%%%%%%%%%%%%
Together, Theorems~\ref{thm:bi} and \ref{thm:nece} generalize the discussion about challenges in regression estimation of average treatment effect under specifications with treatment-covariate interactions in \cite{chen2025potential} to the IV settings and the estimation of \hte.

Example~\ref{ex:cat_bi} below formalizes the identification of the conditional LATEs by the interacted \tsls\ with saturated covariates. 
The result follows from the numerical equivalence between the \its\ and nonparametrically conditioning on $X_i$.

\begin{example}\label{ex:cat_bi} Assume that $X_i$ encodes a $K$-level categorical variable, represented by $\xis \in \{\ot{K}\}$, using category dummies: $X_i = \dmid$. 
Assume \assmiv. 
Define $\tk = \E(\tau_i \mid X_i^* =k, \uc)$
as the conditional LATE on compliers with $\xis = k$; c.f.~\eqref{eq:tcx}.  
Then
\begine[(i)]
\item\label{it:cor_cat_bc} $\bc = (\tau_{[1]\cc}, \ldots, \tau_{[K]\cc})^\T$, $\tc(X_i) = \xit \bc$; 
\item\label{it:cor_cat_bi_bc} $\bi = \bc$.  
\ende
\end{example}

From Example~\ref{ex:cat_bi}, when $X_i$ is saturated, the corresponding $\bc$ has the subgroup LATEs $\tk$ as its elements, providing a convenient specification for computing the $\tk$'s simultaneously. 
In addition, by the numerical invariance of \tsls\ to nondegenerate linear transformation of $\xxi$, Example~\ref{ex:cat_bi}\eqref{it:cor_cat_bi_bc} implies that for an arbitrary $K$-level categorical $X_i$, the conditional LATE $\tcx$ is linear in $X_i$, and coincides with its linear projection onto $\xxi$ among compliers, $\tc(X_i) = \proj_{\uc}\{\tc(X_i) \mid X_i\} = \xit\bc$. This is not true for general $X_i$, in which case the linear projection is only an approximation.

\section{Identification of LATE}\label{sec:late}
We now establish the properties of the \its\ for estimating the LATE and relate our results to the existing literature. 
Assume throughout this section that $X_i = (1, \xio , \ldots, X_{i,K-1})^\T$ includes a constant term as its first element, and let $\xinc  = (\xio , \ldots, X_{i,K-1})^\T$ denote the vector of nonconstant covariates.
The corresponding \its\ is
\begineq\label{eq:tsls_dm_motivation}
\tslst(Y_i \sim D_i + D_i \xinc  + 1 + \xinc  \mid Z_i + Z_i\xinc  + 1 + \xinc ).
\endeq 
We establish below sufficient conditions under which the coefficient on $\di$ recovers the LATE. 

\subsection{Interacted 2{\small SLS} with demeaned covariates}\label{sec:hti}
Let $\mu_k = \E(X_{ik} \mid \cp )$ denote the complier average of $X_{ik}$ for $k = \ot{K-1}$. 
While $\mu_k$ is generally unknown in practice, we can consistently estimate it using the method of moments proposed by \citet[Equations (8)--(9)]{imbens1997estimating} or the weighted estimator proposed by \citet[Theorem 3.1]{abadie2003semiparametric}.
Let $\hmu_k$ be a consistent estimate of $\mu_k$ for $k = \ot{K-1}$.
Define 
\begineqs
\hxic  = (1, \xio  - \hmu_1, \ldots, X_{i,K-1} - \hmu_{K-1})^\T = (1, \xincc^\T)^\T
\endeqs as a demeaned variant of $\xxi$, where $\xincc = (\xio  - \hmu_1, \ldots, X_{i,K-1} - \hmu_{K-1})^\T$ is the subvector of nonconstant components. 
Parallel to \eqref{eq:tsls_dm_motivation}, the \its\ that uses the demeaned $\hxic$ as the coefficient vector takes the form
\begineq\label{eq:tsls_dm}
\tslst(Y_i \sim D_i + D_i \xincc  + 1 +  \xincc  \mid Z_i + Z_i \xincc  + 1 +  \xincc ).
\endeq 
Let $\htii$ denote the estimated coefficient of $\di$ from \eqref{eq:tsls_dm}. 
We use the subscript $\times\times$ to denote the estimator of the LATE from the \its\ for later unification with the literature.  
Let $\ti$ denote the probability limit of $\htii$. 
Theorem~\ref{thm:ate} below parallels \thmbi\ and establishes the consistency of $\hti$ for identifying the LATE when either \assmivps\ holds or the interacted working model in \eqref{eq:wm} is correctly specified. 

\begin{theorem}\label{thm:ate}
Assume \assott. Then $\ti = \tc$ if either \assmivps\ holds or the interacted working model in \eqref{eq:wm} is correctly specified.
\end{theorem}

Recall from \thm~\ref{thm:nece}\eqref{it:thm_nece_sufficient} that \assm~\ref{assm:ivps} includes as special cases (i) categorical covariates with $K$ levels, and (ii) unconditionally valid IV. 
Therefore, by \thm~\ref{thm:ate}, under either scenario, $\hti$ provides a way to estimate the LATE using a single \tsls\ coefficient with demeaned covariates. 
Example~\ref{ex:cat_tc} below builds on Example~\ref{ex:cat_bi} and provides intuition for the case of categorical $\xxi$. 
See \cite{hirano2001estimation} and \cite{lin2013agnostic} for similar uses of demeaned covariates when using interacted \ols\ regressions to estimate the average treatment effect. 
In addition, we discuss in \sec~\ref{sec:ext} identification based on demeaning by the average covariates among units with $\zi = 1$ based on \cite{kline2011oaxaca}. 

\begin{example}\label{ex:cat_tc}
Assume that $X_i$ encodes a $K$-level categorical variable, represented by $\xis \in \{\ot{K}\}$, using a constant term and $K-1$ category dummies: $X_i = (1, \ind{\xis = 1}, \ldots, \ind{\xis = K-1})^\T$. 
Assume \assmiv. 
Then $\hmu_k$ is a consistent estimate of $\muk = \pr(\xis = k\mid \uic)$ for $k = 1, \ldots, K-1$,  and 
\begineqs
\hti = \sumk \hmu_k \cdot \dfrac{\hat\E(Y_i \mid \xis = k, Z_i = 1) - \hat\E(Y_i \mid \xis = k, Z_i = 0)}{\hat\E(D_i \mid \xis = k, Z_i = 1) - \hat\E(D_i \mid \xis = k, Z_i = 0)},
\endeqs
where $\hat\E(Y_i\mid \xis = k, \zi = z)$ denotes the average of $\{Y_i: \xis = k, \zi= z\}$ and $\hat\E(D_i\mid \xis = k, \zi = z)$ denotes the average of $\{D_i: \xis = k, \zi= z\}$, respectively, for $z = 0,1$. 
That is, $\hti$ is a weighted average of subgroup Wald estimators. 
For comparison, the nonparametric estimators proposed by \cite{tan2006regression} and \cite{frolich2007nonparametric} take the form of a ratio of the estimated IV effects on the outcome and on the treatment:
\begineqs
\dfrac{\sumk \hmu_k \left\{\hat\E(Y_i \mid \xis = k, Z_i = 1) - \hat\E(Y_i \mid \xis = k, Z_i = 0)
\right\}}{
\sumk \hmu_k \left\{\hat\E(D_i \mid \xis = k, Z_i = 1) - \hat\E(D_i \mid \xis = k, Z_i = 0)\right\}}.
\endeqs
\end{example}

\subsection{Unification with the literature}\label{sec:unification}
Theorem~\ref{thm:ate} contributes to the literature on using \tsls\ to estimate the LATE in the presence of \hte. 
Previously, \cite{angrist1995two}, \cite{kolesar2013estimation}, and \cite{sloczynski2022not} studied a variant of the additive \tsls\ in Definition \ref{def:2sls_a} with saturated covariates $X_i$, and showed that the coefficient on $\di$ identifies a weighted average of conditional LATEs that generally differs from $\tc$.
%%%%%%%%%%%%%%%%%%%%%%%%%
%%%%%%%%%%%%%%%%%%%%%%%%%
\cite{kolesar2013estimation}, \cite{sloczynski2022not}, and \cite{blandhol2022tsls} studied the additive \tsls\ in Definition \ref{def:2sls_a}, and showed that when $\exi$ is linear under the rich covariates condition in \eqref{eq:rich}, the coefficient on $\di$ identifies a weighted average of conditional LATEs that generally differs from $\tc$.
%%%%%%%%%%%%%%%%%%%%%%%
%%%%%%%%%%%%%%%%%%%%%%%
In contrast, Theorem \ref{thm:ate} ensures that the interacted \tsls\ with demeaned covariates directly identifies $\tc$ under \assmivps\ with $\exi X_i$ linear in $\xxi$. 

We formalize the above overview in Definition \ref{def:2sls_hybrid} and Proposition \ref{prop:unification} below. 
First, Definition \ref{def:2sls_hybrid} reviews and generalizes the \tsls\ procedure considered by \cite{angrist1995two}, \cite{kolesar2013estimation}, and \citet{sloczynski2022not}, in that we do not restrict $\xxi$ to be saturated.

\begin{definition}[Interacted-additive \tsls]\label{def:2sls_hybrid}
Consider the \tsls\ regression of $Y_i$ on $(\di, \xxi)$, instrumented by $(\zxi, \xxi)$ over $i = \ot{N}$: 
$$\tslst(Y_i \sim \di + \xxi \mid \zxi + X_i).$$  
Estimate the treatment effect by the coefficient on $\hdi $, denoted by $\htia$. 
\end{definition}

We use the subscript $\times+$ to denote the combination of an interacted first stage of $D_i$ on $(\zxi,\xxi)$ and an additive second stage of $\yi$ on $(\di, \xxi)$. 
The additive, interacted, and interacted-additive \tsls\ in Definitions \ref{def:2sls_a}--\ref{def:2sls_i} and \ref{def:2sls_hybrid} are three variants of \tsls\ discussed in the LATE literature.
The combination of additive first stage and interacted second stage leads to degenerate second stage and is hence omitted. 
Let $\hta$ denote the coefficient on $\di$ from the additive \tsls\ in Definition \ref{def:2sls_a}.
Let $\ta$ and $\tia$ denote the probability limits of $\hta$ and $\htia$, respectively. 
We summarize in Proposition \ref{prop:unification} below the identification results for $\taa$, $\tia$, and $\ti$. 

Let 
\begina 
&&w(X_i)= \dfrac{\var\{\E(D_i\mid \zi , X_i)\mid \xxi\}}{ \E\big[\var\{\E(D_i\mid \zi , X_i) \mid \xxi\} \big]}
\enda
with $w(\xxi) > 0$ and $\E\{w(X_i)\} = 1$. 
Let $\pi(X_i) = \pr(U_i = \cc\mid X_i)$ denote the proportion of compliers given $X_i$. 
Let $\tilde\pi(X_i)$ denote the linear projection of $\pi(X_i)$ onto $X_i$ weighted by $\var(Z_i \mid X_i)$. That is, $
\tilde\pi(X_i) = a^\T X_i$, where $a = \E\{\var(Z_i \mid X_i) \cdot X_iX_i^\T\}^{-1}\E\{ \var(Z_i \mid X_i) \cdot X_i \, \pi(X_i)   \}$.

\begin{proposition}\label{prop:unification}
Assume Assumption \ref{assm:iv}.
\begine[(i)]
\item\label{it:unification_taa} \citep[][Corollary~3.4]{sloczynski2022not} If $\exi$ is linear in $X_i$, then $\ta = \E\{ w_+(\xxi) \cdot \tc(\xxi) \}$, where
\begina
w_+(X_i) = \dfrac{\var(Z_i \mid X_i) \cdot \pi(X_i)}{\E\left\{\var(Z_i\mid X_i) \cdot \pi(X_i)\right\}} \quad \text{with \ \ $w_+(X_i) > 0$ and $\E\{w_+(X_i)\} = 1$.}
\enda
Further assume that $\E(D_i\mid \zi , X_i)$ is linear in $(\zi, X_i)$, i.e., the additive first stage in Definition \ref{def:2sls_a} is correctly specified. Then $w_+( X_i) = w( X_i)$. 
%%%%%%%%%%%%%%%%%%%%
\item\label{it:unification_tia} \citep[][Theorem~3]{angrist1995two} If \assmivps\ holds, 
then $\tia = \E\{ w_\times(\xxi) \cdot \tc(\xxi) \}$, where
\begina 
w_\times(X_i) = \dfrac{   \var(Z_i \mid X_i)\cdot \tilde\pi(X_i)   \cdot\pi(X_i)}{  \E\left\{ \var(Z_i\mid X_i) \cdot \tilde\pi(X_i)^2  \right\}}\quad\text{with} \ \ \E\{w_\times(X_i)\} = 1.
\enda
%%%

Further assume that $\E(D_i\mid \zi , X_i)$ is linear in $(\zi X_i, X_i)$, i.e., the interacted first stage in Definition \ref{def:2sls_hybrid} is correctly specified. Then $w_\times( X_i) = w( X_i)$. 
\item\label{it:unification_tii} If \assmivps\ holds, then $\ti = \tc$. 
\ende
\end{proposition}

Proposition~\ref{prop:unification}\eqref{it:unification_taa} reviews \citet[][Corollary 3.4]{sloczynski2022not}.
Given that  
\begineq\label{eq:tc_w}
\tc = \E\left\{w_\cc(X_i)\cdot \tcx\right\}, \where  w_\cc(\xxi) = \dfrac{\pi(\xxi)}{\E\{\pi(\xxi)\}}  
\endeq
\citep{sloczynski2022not},
it implies that when the IV propensity score $\E(Z_i\mid X_i)$ is linear in $X_i$, $\hta$ recovers a weighted average of $\tc(X_i)$ that generally differs from $\tc$. 
We further show that $w_+(X_i)$ simplifies to $w(X_i)$ when the additive first stage is correctly specified. 
See \cite{sloczynski2022not} for a detailed discussion about the deviation of $\ta$ from $\tc$. 
%%%%
%%%%

Proposition~\ref{prop:unification}\eqref{it:unification_tia} extends \cite{angrist1995two} on saturated covariates to general covariates, and implies that under \assmivps, $\htia$ recovers a weighted average of $\tc(X_i)$ that generally differs from $\tc$. 
Note that saturated covariates satisfy both \assmivps\ and the condition that $\E(D_i\mid Z_i, X_i)$ is linear in $(\zxi, X_i)$.
Proposition~\ref{prop:unification}\eqref{it:unification_tia} then simplifies to the special case in \cite{angrist1995two}.

Proposition~\ref{prop:unification}\eqref{it:unification_tii} restates Theorem~\ref{thm:ate} on the condition under which the interacted \tsls\ with demeaned covariates identifies $\tc$. 
See Section \ref{sec:simu} for a simulated example. 

\section{Simulation}\label{sec:simu}
\subsection{Possible inconsistency for estimating $\bc$}\label{sec:simu_inconsistency}
We now illustrate the possible inconsistency of the interacted \tsls\ for estimating $\bc$. 
Assume the following model for $(X_i, Z_i, D_i, Y_i)$:
\begine[(i)]
\item $X_i = (1, \xio )^\T$, where $\xio  \sim$ Uniform(0,1).
\item $Z_i \sim$ Bernoulli($e_i$) with $e_i = \xio $. 
\item $D_i = Z_i \cdot 1(U_i = \cc) + 1(U_i = \aa)$, where $\pr(U_i = \cc) = 0.7$ and $\pr(U_i = \aa) = 0.2$. 
\item $Y_i = D_i \yio + (1-D_i) \yiz$ with $Y_i(0) = 0$ and $\yio = \xio ^2$.
\ende
The data-generating process ensures that $\tau_i = \yio - \yiz = \xio ^2$ and %$U_i \indep (X_{i}, \tau_i)$ so that
$\bc = (-1/6, 1)^\T$. 

For each replication, we generate $N = 1,000$ independent realizations of $(X_i, Z_i, D_i, Y_i)$ and compute $\hbi$ from the interacted \tsls\ in Definition \ref{def:2sls_i}.
Figure~\ref{fig:bias} shows the distribution of $\hbi-\bc$ over 1,000 replications, indicating clear empirical bias in both dimensions.  

\begin{figure}[!t]
\begin{center}
\begin{tabular}{cc}
\includegraphics[width = .4\textwidth]{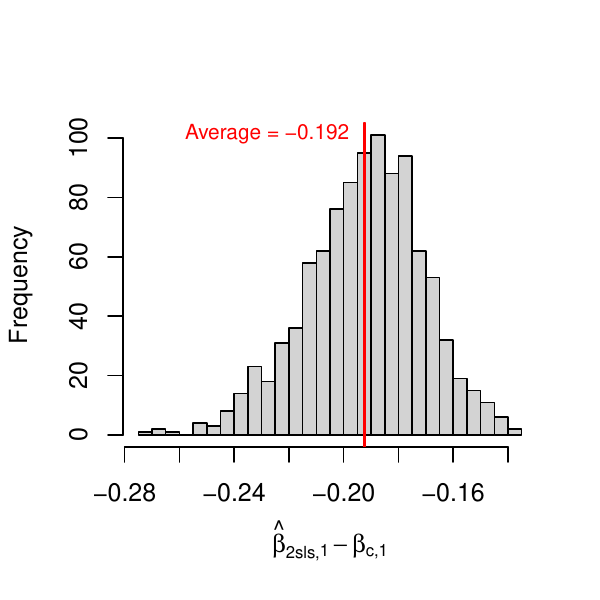}&\includegraphics[width = .4\textwidth]{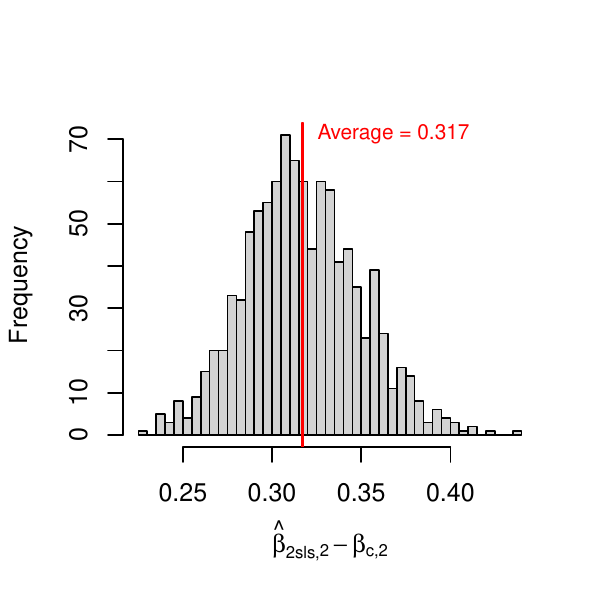}
\end{tabular}
\end{center}
\caption{\label{fig:bias} Distributions of $\hbi-\bc$ over 1,000 replications. The labels $\hat\beta_{\textsc{2sls},1}- \beta_{\cc,1}$ and $ \hat\beta_{\textsc{2sls},2}-\beta_{\cc,2}$ denote the first and second elements of $\hbi -\bc$, respectively.}
\end{figure}

\subsection{Interacted 2{\normalsize SLS} for estimating LATE}
We now use a simulated example to illustrate the utility of the interacted \tsls\ for estimating the LATE when \assmivps\ holds. 
Assume the following model for $(X_i, Z_i, D_i, Y_i)$:
\begine[(i)]
\item $X_i = (1, \xio )^\T$, where $\xio \sim$ Bernoulli(0.5).
%%%%%%%%%%%%%%
%%%%%%%%%%%%%%
\item  
$Z_i \mid X_i \sim$ Bernoulli$(0.5 + 0.4\xio)$. 
%%%%%%%%%%%%%%%%%%%%%
%%%%%%%%%%%%%%%%%%%%
\item  $D_i = 1(U_i = \aa) + Z_i \cdot 1(U_i = \cc)$, where 
$\pr(U_i = \aa \mid X_i) = 0.1$ and $\pr(U_i = \cc \mid X_i) = 0.7 - 0.5\xio$. 

%%%%%%%%%%%%%%%%%%%%%%%% 
\item $Y_i = D_i \yio + (1-D_i) \yiz$ with $Y_i(0) = 0$ and $Y_i(1) = -1 + 5\xio$.
\ende
The data-generating process ensures $\tau_{[1]\cc}  = \E(\tau_i \mid \xio = 1, \uc) = 4$, $\tau_{[2]\cc}  = \E(\tau_i \mid \xio = 0, \uc) = -1$, and $\tc =  1/9$. 
For each replication, we generate $N = 10,000$ independent realizations of $(X_i, Z_i, D_i, Y_i)$ and compute $\hti$, $\hta$, and $\htia$ from the additive, interacted, and interacted-additive \tsls\ in Section~\ref{sec:hti} and Definition~\ref{def:2sls_a} and \ref{def:2sls_hybrid}. We use the method of moments to estimate the complier average of $\xio$ in computing $\hti$. 

Figure \ref{fig:LATE} shows the distributions of the three estimators over 1,000 replications. Almost all realizations of $\hta$ and $\htia$ are below $-0.4$ while the actual $\tc$ is positive. In contrast, the distribution of $\hti$ is approximately demeaned at the actual $\tc$, which is coherent with Theorem \ref{thm:ate} and \prop~\ref{prop:unification}. 

\begin{figure}[!t]
\begin{center}
\includegraphics[width = .8\textwidth]{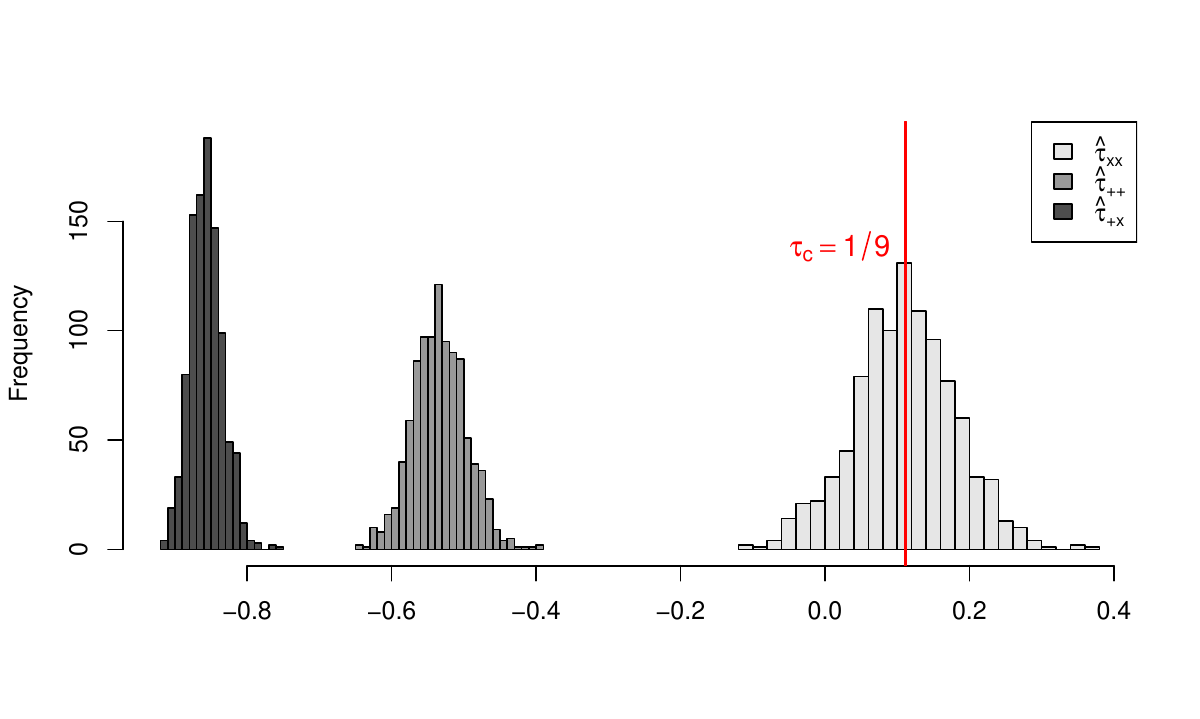}
\end{center}
\caption{\label{fig:LATE} Distributions of $\hti$, $\hta$, and $\htia$ over 1000 replications.}
\end{figure}

\section{Extensions}\label{sec:ext}
\subsection{Extension to partially interacted 2{\normalsize SLS}}\label{sec:partial}
Let $X_i$ denote the covariate vector that ensures the conditional validity of the IV according to \assmiv. 
The interacted \tsls\ in Definition \ref{def:2sls_i} includes interactions of $D_i$ with the full set of $X_i$ in accommodating \hte. 
When \hte\ is of interest or suspected only for a subset of $X_i$, denoted by $V_i \subseteq X_i$, a natural modification is to interact $D_i$ only with $V_i$ in fitting the \tsls, formalized in Definition \ref{def:2sls_p} below. See \cite{resnjanskij2024can} for a recent application.

\begin{definition}[Partially interacted \tsls]\label{def:2sls_p}
Consider the \tsls\ regression of $Y_i$ on $(\di V_i, \xxi)$, instrumented by $(\zi V_i, \xxi)$ over $i = \ot{N}$: 
$$
\tslst(Y_i \sim \di V_i + \xxi \mid \zi V_i + X_i).
$$   
Let $\hbv$ denote the coefficient vector of $\di V_i$. 
\end{definition}

The partially interacted \tsls\ in Definition \ref{def:2sls_p} generalizes the interacted \tsls, and reduces to the interacted \tsls\ when $V_i = X_i$. 
It is the \tsls\ procedure corresponding to the working model
\begineq\label{eq:wm_partial}
Y_i= D_i V_i^\T \beta_{DV} + X_i^\T \beta_X + \eta_i,
\endeq
where $\E(\eta_i \mid Z_i, X_i) = 0$, and the linear function of covariates $V_i^\T \beta_{DV}$ represents the covariate-dependent treatment effect. 
We establish below the causal interpretation of $\hbv$ under the LATE framework when \eqref{eq:wm_partial} is possibly misspecified. 

\paragraph{Causal estimand.} Analogous to the definition of $\bc$ in \eqref{eq:bc_def}, let $\proj_{U_i = \cc}(\tau_i \mid V_i) = V_i^\T \bcv $ denote the linear projection of $\tau_i$ on $V_i$ among compliers with 
\begina
\bcv  = \textup{argmin}_{b \in \mathbb R^Q}\E\Big\{ (\tau_i - \vit b)^2 \mid \uc\Big\}. 
\enda
Parallel to the discussion after \eqref{eq:bc_def}, $\vit \bcv $ is also the linear projection of $\tc( X_i)$ on $V_i$ among compliers with 
$
\bcv  = \textup{argmin}_{b \in \mathbb R^Q}\E[ \{\tc(X_i) - \vit b\}^2 \mid \uc]$.
This ensures $\vit \bcv $ is the best linear approximation to both $\tau_i$ and $\tc(X_i)$ based on $V_i$. 
We define $\bcv $ as the causal estimand for quantifying \hte\  \wrt\ $V_i$ among compliers. 

\paragraph{Properties of $\hbv$ for estimating $\bcv $.} 
Let $\bv$ denote the probability limit of $\hbv$. 
Proposition~\ref{prop:v} below generalizes \thmbi\ and states sufficient conditions for $\bv$ to identify $\bcv $.
%%%%%%%%%%

\begin{proposition}\label{prop:v}
\agv. 
Let $\tdvi$ denote the residual of the linear projection of $\proj(D_i V_i \mid Z_iV_i, X_i)$ on $X_i$.
Let $B' = \E  \{ \tdvi\cdot (Y_i - D_i V_i^\T \bcv) \}$.
Assume Assumption \ref{assm:iv}. Then 
\begine[(i)]
\item\label{it:prop_v_level indep} $\bi'$ is level independent if and only if 
\begineq\label{eq:cond_v}
\E(\zvi\mid \xxi) = \proj(\zvi\mid \xxi).
\endeq
%%%%%%%%%%%
\item\label{it:prop_v_bc} $\bi' = \bc'$ for any possible potential outcomes model if \eqref{eq:cond_v} holds and $\proj(\zvi\mid \xxi) = \proj(\zvi \mid \vi)$.
\ende 
\end{proposition}

Analogous to the results for original \its\ in \thmnece\eqref{it:thm_nece_sufficient}, two special cases where \eqref{eq:cond_v} holds are
 \begine[(a)]
 %\item\label{it:sub_categorical}
 \item\label{it:sub_categorical_a}\textbf{Categorical covariates:} 
$V_i$ \cq, and satisfies $\E(Z_i\mid X_i) = \E(Z_i \mid V_i)$ and $\proj(Z_iV_i \mid X_i) = \proj(Z_iV_i \mid V_i)$.
 \item\label{it:sub_random}\textbf{Random assignment of IV:} $Z_i \indep X_i$.
 \ende
 Both conditions ensure that $\bi' = \bc'$. 

\subsection{Extension to weak monotonicity}\label{sec:wm_main}
We formalize in this section the results when we relax \assmiv\eqref{it:mono} to \wmf\  \citep{kolesar2013estimation, sloczynski2022not, blandhol2022tsls}. 
\begin{assumption}[Weak monotonicity]\label{assm:wm}
There exists a subset of the support of $\xxi$, denoted by $\mx_\cc$ such that
$\pr\{D(1) \ge D(0) \mid \xxi\} = 1$ on it and   $\pr\{D(1) \le D(0) \mid \xxi\} = 1$ on its complement, denoted by $\mx_\dd$.
\end{assumption}

\assmivnomono\ imply that $\pcd > 0$. 
Following \cite{kolesar2013estimation} and \cite{sloczynski2022not}, define
\begineq\label{eq:tl}
 \tl = \E\left(\tau_i \mid \uicd \right)
\endeq
as the local average treatment effect (LATE),  and define 
\begineq\label{eq:tlx}
 \tl( X_i ) = \E\left(\tau_i \mid X_i,\, \uicd\right)
\endeq
as the conditional LATE given $X_i$ in the absence of the strong monotonicity condition in \assmiv\eqref{it:mono}. 
The law of iterated expectations implies that $\tl =\E\left[\tl(X_i) \mid \uicd \right]$. 

Under \assmwm, define
\begineqs
\tau_\dd(\xxi) = \E(\tau_i \mid \xxi, \ud) 
\endeqs
as the conditional LATE among defiers, parallel to $\tc(X_i)$ and $\tl(X_i)$ in \eqref{eq:tcx} and \eqref{eq:tlx}.  
Parallel to $\bc$, let 
\begineq\label{eq:bd_def}
\beginar{rcl}
\bd &=& \textup{argmin}_{b \in \mathbb R^K}\E\left\{  (\tau_i - \xit b )^2 \mid \ud\right\}\\
&=& \textup{argmin}_{b \in \mathbb R^K}\E\left[  \{\tau_\dd(X_i) - \xit b \}^2 \mid \ud\right].  
\endar
\endeq
denote the coefficient vector from the projection of $\tau_i$ onto $\xxi$ among defiers, quantifying the part of \hte\ among defiers that is linearly explained by $\xxi$. 
Let 
\begineqs
\bdiff = \left\{\pi W_\cc - (1-\pi) W_\dd \right\}^{-1}\{\pi W_\cc  \cdot \bc - (1-\pi)W_\dd \cdot \bd\},
\endeqs
where $W_\cc = \E(X_i \xit \mid  \uc)$, $W_\dd = \E(X_i \xit \mid  \ud)$, and $\pi = \pr(\uc \mid \uicd)$.
Then $\bdiff$ is a weighted difference between $\bc$ and $\bd$, and is by definition level-independent. 
 
\prop~\ref{prop:bi_weak} below extends \thm~\ref{thm:bi} by relaxing the \smf\ condition in \assmiv\eqref{it:mono} to the \wmf\ condition in \assmwm. 

\begin{proposition}\label{prop:bi_weak}
Assume \assmivnomono\ and \assmwm. Then 
\begine[(i)]
\item\label{it:prop_bi_weak_nece} $\bi$ is level-independent if and only if \assmivps\  holds. 
\item\label{it:prop_bi_weak_suff} $\bi = \bdiff$ for all possible potential outcomes models if and only if \assmivps\ holds. 
\ende
\end{proposition}

The necessity in \prop~\ref{prop:bi_weak}\eqref{it:prop_bi_weak_nece} follows directly from the necessity in \thmbi\eqref{it:thm_bi_nece}, given that strong monotonicity implies \wmf. 
\prop~\ref{prop:bi_weak}\eqref{it:prop_bi_weak_suff} ensures that, under \assmivps, $\bi$ identifies $\bdiff$.
\thmbi\eqref{it:thm_bi_suff} is a special case when $\pi = 1$.

\subsection{Interacted {\normalsize OLS} as a special case}\label{sec:ext_ols_main} 
Ordinary least squares (\ols) is a special case of \tsls\ with $Z_i = D_i$. 
Therefore, our theory extends to the \olss regression of $Y_i$ on $(D_iX_i, X_i)$ when $D_i$ is conditionally randomized given $\xxi$.
The discussion of interacted \olss in observational studies has  largely focused on the identification of the average treatment effect or the average treatment effect on the treated \citep{chattopadhyay2023implied, imbens2009recent, kline2011oaxaca, chen2025potential}. 
Our results complement this literature by extending the discussion to \hte. 
%
%See \cite{aronow2016does} and \cite{chattopadhyay2023implied} for the corresponding discussion of additive \ols.}

Recall that $\xinc  = (\xio , \ldots, X_{i,K-1})^\T$ denotes the vector of nonconstant covariates. Then from \thm~\ref{thm:ate} or \cite{chen2025potential}, when $\E(Z_i\xxi\mid \xxi)$ is linear in $\xxi$, the coefficient on $Z_i$ from 
$\lmt(D_i \sim 1 + Z_i +  \xinc   + Z_i\{\xinc - \E(\xinc)\})$
identifies $\E\{D_i(1) - D_i(0)\} = \pc$, and 
the coefficient on $Z_i$ from 
$\lmt(Y_i \sim 1 + Z_i +  \xinc   + Z_i\{\xinc - \E(\xinc)\})$
identifies $\E\{Y_i(Z_i(1)) - Y_i(Z_i(0))\} = \pc\tc$, so that the ratio of the two identifies $\tc$. 
The numerical equivalence between \tsls\ and indirect least squares implies that this ratio equals the coefficient on $\di$ from 
\begineq\label{eq:ils_0}
\tslst(\yi\sim 1 + \di + \xinc + Z_i\{\xinc - \E(\xinc)\} \mid 1 + \zi + \xinc + Z_i\{\xinc - \E(\xinc)\}). 
\endeq 
Note that the \tsls\ in \eqref{eq:ils_0} differs from the \its\ in that it includes $\zi\{\xinc - \E(\xinc)\}$ rather than $\di\{\xinc - \E(\xinc)\}$ in the second stage. 

In addition, let $\cxi = \xinc - \E(\xinc\mid \zi = 1)$ denote a variant of $\xinc$, demeaned by the average covariate values among units with $\zi = 1$. 
\cite{kline2011oaxaca} showed that, in the least squares regression of $\yi$ on $(1, Z_i, \cxi^\T, \zi\cxi^\T)^\T$, the coefficient on $\zi$ identifies $\E\{\yi(\dio) - \yi(\diz) \mid \zi = 1\}$, the average effect of $\zi$ among units with $\zi = 1$, provided that the odds $e(\xxi) / \{1 - e(\xxi)\}$ is linear in $\xxi$. 

Consider the \tsls\ of $\yi$ on $(1, \di, \cxi^\T, \zi\cxi^\T)^\T$, instrumented by $(1, \zi, \cxi^\T, \zi\cxi^\T)^\T$, denoted by
\begineq\label{eq:ils}
\tslst(\yi\sim 1 + \di  + \cxi + \zi\cxi  \mid 1 + \zi  + \cxi + \zi\cxi). 
\endeq 
The numerical equivalence between \tsls\ and indirect least squares implies that the coefficient on $\di$ from \eqref{eq:ils} identifies $\E(\tau_i \mid \uc, \zi = 1)$, the average treatment effect among compliers with $\zi = 1$, provided that the odds $e(\xxi) / \{1 - e(\xxi)\}$ is linear in $\xxi$. However, the \tsls\ in \eqref{eq:ils} differs from the \its\ in that it includes $\zi\cxi$ rather than $\di\cxi$ in the second stage. 
 
\bibliographystyle{Chicago}
\bibliography{refs_iv-x}

\newpage
\setcounter{equation}{0}
\setcounter{section}{0}
\setcounter{figure}{0}
\setcounter{example}{0}
\setcounter{proposition}{0}
\setcounter{corollary}{0}
\setcounter{theorem}{0}
\setcounter{table}{0}
\setcounter{condition}{0}
\setcounter{definition}{0}
\setcounter{assumption}{0}
\setcounter{lemma}{0}
\setcounter{remark}{0}

\renewcommand {\thedefinition} {S\arabic{definition}}
\renewcommand {\theassumption} {S\arabic{assumption}}
\renewcommand {\theproposition} {S\arabic{proposition}}
\renewcommand {\theexample} {S\arabic{example}}
\renewcommand {\thefigure} {S\arabic{figure}}
\renewcommand {\thetable} {S\arabic{table}}
\renewcommand {\theequation} {S\arabic{equation}}
\renewcommand {\thelemma} {S\arabic{lemma}}
\renewcommand {\thesection} {S\arabic{section}}
\renewcommand {\thetheorem} {S\arabic{theorem}}
\renewcommand {\thecorollary} {S\arabic{corollary}}
\renewcommand {\thecondition} {S\arabic{condition}}
\renewcommand {\thepage} {S\arabic{page}}
\renewcommand {\theremark} {S\arabic{remark}}

\setcounter{page}{1}

\spacingset{1.5}
\addtolength{\textwidth}{2in}%
\addtolength{\textheight}{.5in}%
\addtolength{\topmargin}{-.4in}%

\begin{center}
\bf \Large 
Supplementary Material
\end{center}

Section~\ref{sec:lemma} gives the lemmas for the proofs. In particular, Section \ref{sec:lem_la} gives two linear algebra results that underlie \thm~\ref{thm:nece}\eqref{it:assm_ezx_necessary}.

Section~\ref{sec:proof} gives the proofs of the results in the main paper. 

Section~\ref{sec:la_proof} gives the proofs of the linear algebra results in Section \ref{sec:lem_la}.

\paragraph{Notation.} For two random vectors $Y \in \mathbb R^p$ and $X \in \mathbb R^q$, let
$\proj(Y\mid X)$ denote the linear projection of $Y$ on $X$, in that $\proj(Y\mid X) = BX$, where $B = \textup{argmin}_{b \in \mathbb R^{p\times q}}\E(\|Y - bX\|^2) = \E(Y X^\T)\{\E(X X^\T)\}^{-1}$. 
Let $\res(Y\mid X) = Y - \proj(Y\mid X)$ denote the corresponding residual.

\agv.  
Assumptions~\ref{assm:rank} and \ref{assm:rank_sub} below state the rank conditions for the \its\ and partially \its\ that we assume implicitly throughout the main paper.
\begin{assumption}\label{assm:rank} 
$\E\left\{\beginp
\zxi\\
X_i
\endp \Big(D_iX_i^\T, X_i^\T\Big)\right\}$ is invertible. 
\end{assumption}

\begin{assumption}\label{assm:rank_sub} 
$\E\left\{\beginp
Z_iV_i\\
X_i
\endp \Big(D_iV_i^\T, X_i^\T\Big)\right\}
$ is invertible. 
\end{assumption}
Under \assm~\ref{assm:rank}, let
\begini
\item $\hdxi = \proj( D_i X_i \mid \zxi, X_i)$ denote the linear projection of $\dxi$ onto $(\zxi, \xxi)$, corresponding to the first stage of the \its\ in Definition~\ref{def:2sls_i}; 
\item $\tdxi=\res(\hdxi \mid \xxi)$ denote the residual from the linear projection of $\hdxi$ onto $\xxi$. 
\endi 
Under \assm~\ref{assm:rank_sub}, let
\begini
\item $\hdvi = \proj( D_i V_i \mid \zvi, X_i)$ denote the linear projection of $\dvi$ onto $(\zvi, \xxi)$, corresponding to the first stage of the partial \its\ in Definition~\ref{def:2sls_p}; 
\item $\tdvi=\res(\hdvi \mid \xxi)$ denote the residual from the linear projection of $\hdvi$ onto $\xxi$.
\endi 

Let $\delta_i = \dio - \diz$, with $\dti =0$ for always-takers and never-takers, $\dti =1$ for compliers, and $\dti = -1$ for defiers. 

\section{Lemmas}\label{sec:lemma}
\subsection{Projection and conditional expectation}
Lemmas \ref{lem:proj}--\ref{lem:fwl} below review some standard results regarding projection and conditional expectation. We omit the proofs.  
\begin{lemma}\label{lem:proj}
For random vectors $A$ and $B$, 
\begine[(i)]
\item\label{it:lem_proj_i} $\E(A \mid B) = \proj(A \mid B)$ if and only if $\E(A \mid B)$ is linear in $B$; 
\item\label{it:lem_proj_ii} if $A \indep B$, then 
$ \E(AB\mid B)=\proj(AB\mid B) = \E(A)B$;
\item\label{it:lem_proj_iii} if $A$ is binary, then $\E(A \mid B)$ is constant implies that $A \indep B$. 
\ende
\end{lemma}

\begin{lemma}\label{lem:exactly m}
Let $A$ be an $m\times 1$ random vector with $\E(AA^\T)$ being invertible. Then 
\begine[(i)]
\item\label{it:exactly m sub} for any subvector of $A$, denoted by $B$,  $\E(BB^\T)$ is invertible. 
\item\label{it:exactly m i} 
$A$ takes at least $m$ distinct values.
\item\label{it:exactly m ii} If $A$ takes exactly $m$ distinct values, denoted by $a_1, \ldots, a_m \in \mathbb R^m$, then 
\item[-] the matrix $\Gamma = (a_1, \ldots, a_m) \in \mathbb R^{m\times m}$ is invertible, and $A = \Gamma(1(A = a_1),\ldots, 1(A = a_m))^\T$. 
\item[-] $AA^\T$ is componentwise linear in $A$. 
\item[-] for any random vector $C$, the conditional expectation $\E(C \mid A) $ is linear in $A$.  
\ende 
\end{lemma}

\begin{lemma}[Population Frisch--Waugh--Lovell (FWL)]\label{lem:fwl}
For a random variable $Y$, a $p_1 \times 1$ random vector $X_1$, and a $p_2 \times 1$ random vector $X_2$, 
let $\tilde Y = \res(Y \mid X_2)$ and $\tilde X_1 = \res(X_1 \mid X_2)$ denote the residuals from the linear projections of $Y$ and $X_1$ on $X_2$, respectively.
Then 
the coefficient vector of $X_1$ in the linear projection of $Y$ on $(X_1^\T, X_2^\T)^\T$ equals the coefficient vector of $\tilde X_1$ in the linear projection of $Y$ or $\tilde Y$ on $\tilde X_1$. 
\end{lemma}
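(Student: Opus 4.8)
The plan is to reduce the joint projection of $Y$ on $(X_1^\T, X_2^\T)^\T$ to a projection onto two mutually orthogonal regressor blocks, so that the coefficient on the $X_1$-block can be read off from a projection on $\tilde X_1$ alone. Throughout I assume the relevant second-moment matrices are invertible, so that projection coefficients are well-defined and unique. The two facts that drive everything are the residual orthogonality relations: writing $\proj(X_1 \mid X_2) = \Pi X_2$, the normal equations for the projection give $\E(\tilde X_1 X_2^\T) = 0$, where $\tilde X_1 = X_1 - \Pi X_2$; and likewise $\tilde Y = Y - \proj(Y \mid X_2)$ is uncorrelated with $X_2$.

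Next I would run the span argument. Since $\tilde X_1 = X_1 - \Pi X_2$, the pair $(\tilde X_1, X_2)$ is a nondegenerate linear reparametrization of $(X_1, X_2)$, so they span the same linear subspace and $\proj(Y \mid X_1, X_2) = \proj(Y \mid \tilde X_1, X_2)$. Because $\E(\tilde X_1 X_2^\T)=0$, the joint projection splits additively, $\proj(Y \mid \tilde X_1, X_2) = \proj(Y \mid \tilde X_1) + \proj(Y \mid X_2)$; I would verify this by checking that the candidate right-hand side leaves a residual orthogonal to both $\tilde X_1$ and $X_2$ and then invoking uniqueness of the projection. It follows that the coefficient of $\tilde X_1$ in the joint projection equals its coefficient in $\proj(Y \mid \tilde X_1)$.

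To bring $X_1$ back in, I would match coefficients. Writing $\proj(Y \mid X_1, X_2) = B_1 X_1 + B_2 X_2$ and $\proj(Y \mid \tilde X_1, X_2) = \Gamma_1 \tilde X_1 + \Gamma_2 X_2$, substituting $\tilde X_1 = X_1 - \Pi X_2$ into the latter gives $\Gamma_1 X_1 + (\Gamma_2 - \Gamma_1 \Pi)X_2$; equating with $B_1 X_1 + B_2 X_2$ and using uniqueness yields $B_1 = \Gamma_1$, which identifies the coefficient of $X_1$ in the full projection with the coefficient of $\tilde X_1$ in $\proj(Y \mid \tilde X_1)$. For the ``$\tilde Y$ or $Y$'' equivalence, I would show the cross-moment feeding the $\tilde X_1$-coefficient is unchanged under $Y \mapsto \tilde Y$: since $\proj(Y\mid X_2)$ is linear in $X_2$ and $\E(\tilde X_1 X_2^\T)=0$, we get $\E(\tilde Y \tilde X_1^\T) = \E(Y \tilde X_1^\T)$, so $\proj(Y \mid \tilde X_1)$ and $\proj(\tilde Y \mid \tilde X_1)$ share the common coefficient $\E(Y \tilde X_1^\T)\{\E(\tilde X_1 \tilde X_1^\T)\}^{-1}$.

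The main obstacle is the additive splitting of the joint projection onto the orthogonal blocks $\tilde X_1$ and $X_2$; everything else is bookkeeping with normal equations. I expect to justify the split cleanly by this orthogonality-plus-uniqueness argument rather than through any explicit matrix inversion of the stacked design.
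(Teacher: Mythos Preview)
Your argument is correct and is a standard proof of the population Frisch--Waugh--Lovell result. The paper itself omits the proof of this lemma, stating only that Lemmas~\ref{lem:proj}--\ref{lem:fwl} ``review some standard results regarding projection and conditional expectation. We omit the proofs.'' So there is nothing to compare against; your orthogonality-plus-uniqueness derivation is exactly the kind of routine verification the authors are leaving to the reader.
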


\subsection{Lemmas for the interacted 2{\normalsize SLS}}
Lemma \ref{lem:nondegenerate_X} below states some implications of Assumption \ref{assm:rank} that are useful for the proofs. 

\begin{lemma}\label{lem:nondegenerate_X}
Let $C_1$ and $C_0$ denote the coefficient matrices from the linear projection of $\dxi$ onto $\zxi,\xxi$; let $C_2$ denote the coefficient matrix of the linear projection of $\zxi$ onto $\xxi$: 
\begineqs
\hdxi = \proj(D_iX_i\mid \zxi, X_i) = C_1 \zxi + C_0 X_i,
\qquad
\proj(\zxi \mid \xxi) = C_2 \xxi.
\endeqs 
Assume Assumption \ref{assm:rank}. We have
\begine[(i)]
\item\label{it:nondegenerate_X_X} $\E\left\{\beginp
\zxi\\
X_i
\endp \Big(\zxi^\T, X_i^\T\Big)\right\}$ and $\E(X_i \xt_i )$ are both invertible with 
%\begina
%\proj(D_iX_i \mid \zxi, X_i) = C_1\zxi + C_0 X_i , \quad \proj(\zxi \mid X_i ) = C_2 X_i, 
%\enda 
%where
\begina
(C_1, C_0) &=& \E\Big\{ D_iX_i ( \zxi^\T, \xt_i ) 
 \Big\} \left[\E\left\{\beginp
\zxi\\
X_i \endp ( \zxi^\T, \xt_i ) \right\} \right]^{-1}, \\
 C_2 &=& \E(\zxi\xt_i ) \big\{\E(X_i \xt_i ) \big\}^{-1}.
 \enda 
\item\label{it:nondegenerate_X_K}$X_i $ takes at least $K$ distinct values. 
\item\label{it:nondegenerate_X_C1} $C_1$, $C_2$, and $I_K - C_2$ are all invertible. 
\ende 
\end{lemma}

\begin{proof}[\bf Proof of Lemma \ref{lem:nondegenerate_X}] 
We verify below Lemma \ref{lem:nondegenerate_X}\eqref{it:nondegenerate_X_X}--\eqref{it:nondegenerate_X_C1}, respectively. We omit the subscript $i$ in the proof. 
Let 
\begina
\Gamma_1 = \E\left\{ 
\beginp DX \\ X \endp (Z\xt,\xt)\right\}, \quad \Gamma_2
= 
\E\left\{\beginp
Z X\\
X
\endp \Big(Z X^\T, X^\T\Big)\right\}
\enda 
be shorthand notation with $\Gamma_1$ being invertible under Assumption \ref{assm:rank}.

\paragraph{\underline{Proof of Lemma \ref{lem:nondegenerate_X}\eqref{it:nondegenerate_X_X}}:}
We prove the result by contradiction. 
Note that $\Gamma_2$ is positive semidefinite. 
If $\Gamma_2$ is degenerate, then there exists a nonzero vector $a$ such that $a^\T \Gamma_2 a = 0$. 
This implies
\begina
0 = a^\T \E\left\{\beginp
Z X \\
X \endp (Z\xt,\xt) \right\} a = \E\left\{a^\T\beginp
Z X \\
X \endp (Z\xt,\xt)a \right\}
\enda
such that $(Z\xt,\xt)a = 0$. As a result, we have $$\Gamma_1 a = \E\left\{ 
\beginp DX \\ X \endp (Z\xt,\xt)\right\} a = \E\left\{ 
\beginp DX \\ X \endp (Z\xt,\xt)a\right\} = 0$$ for a nonzero $a$. This contradicts with $\Gamma_1$ being invertible. 

That $\exxt$ is invertible then follows from Lemma \ref{lem:exactly m}\eqref{it:exactly m sub} by viewing $X$ as a subvector of $(Z\xt,\xt)^\T$.

\paragraph{\underline{Proof of Lemma \ref{lem:nondegenerate_X}\eqref{it:nondegenerate_X_K}}.}
Given $\exxt$ is invertible as we just proved, the result follows from Lemma \ref{lem:exactly m}\eqref{it:exactly m i}. 
\paragraph{\underline{Proof of Lemma \ref{lem:nondegenerate_X}\eqref{it:nondegenerate_X_C1}}:}
Properties of linear projection ensure that 
 \begina
\proj\left\{ \beginp DX \\ X \endp \mid Z X , X \right\} = \beginp C_1 & C_0\\ 0 & I_K\endp\beginp ZX \\ X \endp,
\enda 
with
\beginy\label{eq:proj_mat}
 \beginp C_1 & C_0\\ 0 & I_K\endp = \E\left\{ 
\beginp DX \\ X \endp (Z \xt , \xt ) 
 \right\} \left[\E\left\{\beginp
Z X \\
X \endp (Z \xt , \xt ) \right\} \right]^{-1} = \Gamma_1 \Gamma_2^{-1}.
\endy 
This ensures 
$
\det(C_1) = \det\beginp C_1 & C_0\\ 0 & I_K\endp = \det(\Gamma_1 \Gamma_2^{-1}) \neq 0
$. 

In addition, note that 
\begina
I_K - C_2 
&=&  \exx  \cdot \exxinv  - \E (Z X \xt ) \cdot \exxinv \\
& =& \E\left\{(1-Z ) X \xt \right\} \cdot \exxinv .
\enda
To show that $C_2$ and $I_K - C_2$
are invertible, it suffices to show that $\E(Z X \xt )$ and $\E\{(1-Z )X \xt \} $ are invertible. This is ensured by 
\begina
0 < \det(\Gamma_2) &=& \det\left\{\beginp \E(Z X \xt ) & \E(Z X \xt )\\ \E(Z X \xt ) & \E(X \xt )\endp\right\}\\
& =& \det\left\{\beginp \E(Z X \xt ) & \E(Z X \xt )\\ 0 & \E\left\{(1-Z )X \xt \right\}\endp\right\}\\
&
=&\det\Big\{ \E(Z X \xt ) \Big\}\cdot\det \Big[ \E\left\{(1-Z )X \xt \right\} \Big]. 
\enda
\end{proof}

The values of $\hbi$ and $\bi$ from the \its\ and $\bc$ defined in \eqref{eq:bc_def} all depend on the choice of the covariate vector $X_i$. 
With slight abuse of notation, let $\hbi(X_i)$, $\bi(X_i)$, and $\bc(X_i)$ denote the values of $\hbi$, $\bi$, and $\bc$ corresponding to covariate vectors $\{X_i\in \mr^K: i = \ot{N}\}$; 
for a $K \times K$ invertible matrix $\Gamma$, let $\hbi(\Gamma X_i)$, $\bi(\Gamma X_i)$, and $\bc(\Gamma X_i)$ denote the values of $\hbi$, $\bi$, and $\bc$ corresponding to covariate vectors $\{\Gamma X_i \in \mr^K: i \in \ot{N}\}$. 
Lemma~\ref{lem:2sls_invar} below states their invariance to nondegenerate linear transformation of $X_i$.

\begin{lemma}\label{lem:2sls_invar} 
For $K\times 1$ covariate vectors $\{X_i\in \mr^K: i = \ot{N}\}$ and an invertible $K\times K$ matrix $\Gamma$, we have 
\begina
\hbi(\Gamma X_i) = (\Gamma^\T) ^{-1} \hbi(X_i), \quad \bi(\Gamma X_i) = (\Gamma^\T) ^{-1}\bi(X_i), \quad \bc(\Gamma X_i) = (\Gamma^\T)^{-1}\bc(X_i).
\enda 
\end{lemma}

Lemma~\ref{lem:demeaned} below provides intuition for the use of demeaned covariates.
By \thmbi\eqref{it:thm_bi_suff}, the coefficient on $\di$ from the \its\ using $\{X_i = (1, \tilde X_i^\T)^\T\}_{i=1}^N$ as input covariates, \eqref{eq:tsls_dm_motivation}, identifies the first element of $\bc(X_i)$ when either \assmivps\ holds or the interacted working model in \eqref{eq:wm} is correctly specified. Therefore, to use the coefficient of $\di$ to estimate the LATE, we need the first element of $\bc(X_i)$ to equal the LATE. 
Lemma~\ref{lem:demeaned} below builds on this intuition, and shows that, for any initial encoding $X_i$, the first element of $\bc$ based on demeaned covariate vector equals the LATE. 

\begin{lemma}\label{lem:demeaned}
Assume \assmiv. 
For any $X_i = (1, \xio , \ldots, X_{i,K-1})^\T$, let $\xic = (1, \xio - \mu_1, \ldots, X_{i,K-1} - \mu_{K-1})^\T$ denote the population analog of $\hxic$. Then the first element of $\bc(\xic)$ equals $\tc$. 
\end{lemma} 

\begin{proof}[\bf Proof of Lemma \ref{lem:demeaned}]
Recall from \eqref{eq:bc_def} that $
\bc = \{\E( X_i \xit \mid \cp ) \}^{-1}\E\left( X_i \tau_i \mid \cp \right)$. 
Given $X_i = (1, \xn ^\T)^\T$, where $\xn = (\xio , \ldots, X_{i, K-1})^\T$, we have 
\begina
 X_i \xit = \beginp 1\\\xn \endp \Big(1, \xn ^\T \Big) = \beginp 1& \xn ^\T\\ \xn & \xn \xn ^\T \endp, \quad 
 X_i \tau_i = \beginp 1\\\xn \endp \tau_i = \beginp \tau_i \\\xn \tau_i \endp,
\enda
with 
\begina
\E( X_i \xit \mid \cp ) = \beginp 1& \mu_{X,\cc}^\T \\ \mu_{X,\cc} & \mu_{XX,\cc} \endp, \quad 
\E\left( X_i \tau_i \mid \cp \right) = \beginp \tc \\ \E(\xn \tau_i \mid \uc) \endp,
\enda
where $\mu_{X,\cc} = \E(\xn\mid \cp)$ and $\mu_{XX,\cc} = \E(\xn \xn ^\T\mid \cp)$. Therefore, $\mu_{X,\cc} = 0$ ensures that the first element of $\bc$ equals $\tc$. 
\end{proof} 

\subsection{Lemmas for the partially interacted 2{\normalsize SLS}}
Lemma \ref{lem:nondegenerate_V} below generalizes Lemma \ref{lem:nondegenerate_X} and states some implications of Assumption \ref{assm:rank_sub} useful for the proofs.
The proof is similar to that of Lemma \ref{lem:nondegenerate_X}, hence omitted. 

\begin{lemma}\label{lem:nondegenerate_V}
Assume Assumption \ref{assm:rank_sub}. We have
\begine[(i)]
\item\label{it:nondegenerate_V_Q} $
\E\left\{\beginp
Z_iV_i \\
X_i
\endp \Big(Z_iV_i ^\T, \xit\Big)\right\}$, $\E(\xxit)$, and $\E( \vvit )$ are all invertible with 
\begina
\proj( D_iV_i \mid Z_iV_i , X_i) = C_1 Z_iV_i + C_0 X_i, \qquad 
\proj(Z_iV_i \mid V_i) = C_2 V_i 
\enda with 
\beginy
(C_1, C_0) &=& \E\Big\{ 
 D_iV_i ( Z_i \vit , \xit ) 
 \Big\} \left[ \E\left\{\beginp
Z_iV_i \\
X_i
\endp \Big(Z_i \vit , \xit  \Big)\right\}\right]^{-1},\label{eq:c1}\\
 C_2 &=& \E(Z_i \vvit )\big\{\E( \vvit )\big\}^{-1}.\nnb
\endy
\item\label{it:nondegenerate_V_K} $X_i$ takes at least $K$ distinct values. $V_i$ takes at least $Q$ distinct values.  
\item\label{it:nondegenerate_V_C1} $C_1$,  $C_2$, and $I_Q - C_2$ are all invertible.
\ende 
\end{lemma}

Recall that $\bi'$ denotes the probability limit of $\hbi'$ from the partial \its. 
Let $\hdvi = \proj(D_iV_i\mid Z_iV_i, X_i)$.
Let $\tdvi = \res(\hdvi \mid \xxi)$ denote the residual from the linear projection of $\hdvi$ on $X_i$.  

\begin{lemma}\label{lem:bi}
\begineqs
\bi' = \left\{\E\left(\tdvi\cdot \tdvi^\T\right) \right\}^{-1}\E\left(\tdvi \cdot Y_i\right),
\endeqs  
where $\E\{\tdvi \cdot (D_i V_i)^\T\} =\E ( \tdvi \cdot \tdvi^\T)$. 
\end{lemma}

\begin{proof}[Proof of \lem~\ref{lem:bi}]
Standard theory implies that $\bi'$ equals the coefficient vector of $\hdvi$ in $\proj(Y_i \mid \hdvi, X_i)$. 
The population FWL in \lem~\ref{lem:fwl} implies the expression of $\bi'$. 

In addition, write
\beginy\label{eq:dvi_decomp}
D_i V_i- \tdvi &=& \left(D_i V_i -\hdvi\right) + \left(\hdvi - \tdvi\right)\nonumber\\
&=& \res(D_iV_i \mid \zvi, X_i) + \proj(\hdvi \mid X_i).
\endy
By definition, $\tdvi  = \hdvi - \proj(\hdvi\mid X_i)$ is a linear combination of $(\zvi, \xxi)$.
Properties of linear projection imply that 
\beginy\label{eq:dvi_proj}
\E\left\{ \tdvi \cdot \res(D_iV_i \mid \zvi, X_i)^\T\right\} = 0, \qquad  
\E\left\{ \tdvi \cdot \proj(\hdvi \mid V_i)^\T\right\} = 0. 
\endy
Equations~\eqref{eq:dvi_decomp}--\eqref{eq:dvi_proj} imply 
\begina
&&\E\left\{ \tdvi \cdot (D_i V_i)^\T \right\} -\E\left( \tdvi \cdot \tdvi^\T \right)\\&= &\E\left\{ \tdvi \cdot (D_i V_i - \tdvi)^\T \right\}\\ 
&\overset{\eqref{eq:dvi_decomp}}{=} &
\E\Big\{\tdvi \cdot \res(D_iV_i\mid Z_iV_i, X_i)^\T\Big\} + 
\E\Big\{\tdvi \cdot \proj(\hdvi \mid X_i)^\T\Big\}\\
&\overset{\eqref{eq:dvi_proj}}{=}& 0.
\enda
\end{proof}

\begin{lemma}\label{lem:bi_plim}
Assume \assmivnomono. For any fixed $\beta \in \mbr$, define 
\begineqs
\Delta_i = \yiz  +  \diz  (\tau_i -  \vit\beta), \quad \epi = \dti  (\tau_i -  \vit\beta),
\endeqs and let 
\begineq\label{eq:b1_b2_star}
\beginar{rcl}
B_1(\beta) &=& \E\Big[ \Big\{\E(Z_iV_i \mid X_i) - \proj(Z_iV_i\mid X_i)\Big\} \cdot \Big\{ \E(\Delta_i \mid X_i) - \proj(\Delta_i \mid X_i) \Big\} \Big],\\
%%%%%%%%%%%%
B_2(\beta) &=& \E \Big[\E( \zi \mid X_i ) \cdot \left\{ V_i - \proj(\zvi \mid X_i) \right\} \cdot \ep_i \mid \uicd \Big].
\endar
\endeq
Let 
\begineqs
B(\beta) = C_1\Big[ B_1(\beta) + B_2(\beta) \cdot \pcd  \Big] ,
\endeqs
where 
$C_1$ is the coefficient matrix of $\zvi$ in $\hdvi = \proj( D_iV_i \mid \zvi, X_i)$; c.f.~\eqref{eq:c1}.
Let $\tdvi=\res(\hdvi \mid \xxi)$ denote the residual from the linear projection of $\hdvi$ onto $\xxi$. 
Then 
\begineqs
\bi'= \beta + \{\E(\tdvi\cdot \tdvi^\T) \}^{-1}B(\beta). 
\endeqs
Further assume \assmwm. Then 
\begineqs
B_2(\beta) 
= \E \Big[ \E( \zi \mid X_i ) \cdot \left\{ V_i - \proj(\zvi \mid X_i) \right\}\cdot c(\xxi)\left\{\tlx - \vit\beta\right\} \mid \uicd \Big].
\endeqs
\end{lemma}
 
\begin{proof}[\bf Proof of \lem~\ref{lem:bi_plim}] 
Under \assmwm, we have $\delta_i = c(\xxi)$, so that $\epi = c(\xxi) (\tau_i -  \vit\beta)$. 
Let $g(X_i) = \E( \zi \mid X_i ) \cdot \left\{ V_i - \proj(\zvi \mid X_i) \right\}$. The simplified expression of $B_2(\beta)$ under \assmwm\ follows from 
\beginy\label{eq:eei}
\E(\ep_i\mid \xxi, \uicd) &=& c(\xxi)\cdot \E(\tau_i - \vit\beta\mid \xxi,\uicd) \nnb\\
&=& c(\xxi)\left\{\tlx - \vit\beta\right\},
\endy
so that 
\begina
B_2(\beta) 
&\oeq{\eqref{eq:b1_b2_star}}& \E \Big[ g(X_i ) \cdot \ep_i \mid \uicd \Big]\\
%%%%%%%%%%%%%
&=& \E\Big( \E \Big[ g(X_i) \cdot \ep_i \mid \xxi, \uicd \Big]\mid \uicd \Big)\\
&=& \E \Big[ g(X_i) \cdot \E(\ep_i\mid \xxi, \uicd) \mid \uicd \Big]\\
&\oeq{\eqref{eq:eei}}& \E \Big[ g(X_i) \cdot c(\xxi)\left\{\tlx - \vit\beta\right\} \mid \uicd \Big].
\enda
We verify below the expression for $\bi'$. 

Write
$
Y_i  =  (D_iV_i)^\T \beta + (Y_i - D_iV_i^\T \beta)
$
with
\beginy
\tdvi\cdot Y_i 
&=& \tdvi \cdot (D_iV_i)^\T \beta + \tdvi\cdot (Y_i - D_iV_i^\T \beta),\nnb\\
%%%%%%%%%%%%%%%%
\E\left(\tdvi\cdot Y_i\right) &=& \E\left\{\tdvi \cdot (D_iV_i)^\T \right\} \cdot\beta +\E \left\{ \tdvi\cdot (Y_i - D_iV_i^\T \beta) \right\}\nonumber\\
&\oeq{\textup{Lem.~\ref{lem:bi}}}& \E\left(\tdvi\cdot \tdvi^\T \right) \cdot\beta + B^*,\label{eq:dxy}
\endy
where
\begineq\label{eq:bbs}
B^* = 
\E\left\{ \tdvi\cdot (Y_i - D_iV_i^\T \beta) \right\}.
\endeq
Plugging \eqref{eq:dxy} into \lem~\ref{lem:bi} ensures
\begineqs
\bi' 
= \beta +\left\{\E\left(\tdvi\cdot \tdvi^\T\right) \right\}^{-1}B^*. 
\endeqs
We show below that
\begineq\label{eq:bi_goal}
B^* = B(\beta),
\endeq which completes the proof.

\paragraph{\underline{Proof of \eqref{eq:bi_goal}.}}

It follows from  
\begineq\label{eq:diyi}
D_i = \diz + \zi \dti, \qquad Y_i = \yiz + \di \tau_i
\endeq 
that 
\beginy
Y_i - D_i\vit\beta 
&\oeq{\eqref{eq:diyi}}& \yiz + \di \tau_i - \di \vit\beta \nnb \\
&=& \yiz + \di (\tau_i -  \vit\beta)\nnb\\
&\oeq{\eqref{eq:diyi}}&\yiz + \left\{\diz + \zi\dti\right\} (\tau_i -  \vit\beta)\nnb\\
%%%%%%%%%%%%%%%
&=& \yiz 
+  \diz  (\tau_i -  \vit\beta)
+  \zi\dti  (\tau_i -  \vit\beta)\nnb\\
%%%%%%%%%%%%%%%
&=& \Delta_i + Z_i \epi. \label{eq:Delta_ep_wm}
\endy
Plugging \eqref{eq:Delta_ep_wm} into \eqref{eq:bbs} yields  
 \begina
B^* &\oeq{\eqref{eq:bbs}}& \E\left\{ \tdvi \cdot (Y_i - D_iV_i^\T \beta) \right\} \\
&\overset{\eqref{eq:Delta_ep_wm}}{=}& \E\left\{ \tdvi \cdot (\Delta_i + Z_i \epi) \right\}\\
&=& \E( \tdvi \cdot \Delta_i) + \E( \tdvi\cdot Z_i \epi ).  
\enda
We show below
\begineq\label{eq:bbs_goal_2}
\E( \tdvi \cdot \Delta_i) = C_1 B_1(\beta), \qquad \E( \tdvi\cdot Z_i \epi ) = C_1  B_2(\beta) \cdot \pcd 
\endeq
to complete the proof. 
%%%
Write 
\beginy\label{eq:wt_c1c0}
\hdvi = \proj(D_iV_i\mid \zvi, X_i) = C_1 \zvi + C_0 X_i,
\endy 
where $C_1$ and $C_0$ denote the coefficient matrices. 
Then 
\beginy 
\tdvi = \res(\hdvi \mid X_i)&\overset{\eqref{eq:wt_c1c0}}{=}& \res(C_1 \zvi + C_0X_i \mid X_i) \nonumber\\ 
&=& C_1 \cdot \res(\zvi \mid X_i)\nonumber\\
%%%%%%%%%%%
&=& C_1 \Big\{\zvi - \proj(\zvi \mid X_i)\Big\}\label{eq:a1_tvxi}.
\endy

\paragraph{\underline{Proof of $ \E( \tdvi \cdot \Delta_i) = C_1B_1(\beta)$ in \eqref{eq:bbs_goal_2}.}}
Let
\beginy\label{eq:xi_res_def}
\xi_i = \Delta_i - \proj(\Delta_i \mid X_i)\endy denote the residual from the linear projection of $\Delta_i$ onto $X_i$. 
Properties of linear projection ensure $
\E\{\tdvi \cdot \proj(\Delta_i \mid X_i)\} = 0$,
so that 
\beginy\label{eq:a1b1_ss1}
\E( \tdvi \cdot \Delta_i ) \overset{\eqref{eq:xi_res_def}}{=}   \E( \tdvi \cdot \xi_i ) + \E\left\{ \tdvi \cdot \proj(\Delta_i \mid X_i) \right\} = \E( \tdvi \cdot \xi_i ).
\endy
In addition, $\xi_i$ is a function of $\ydxsetiv$, whereas $\tdvi$ is a function of $(Z_i, X_i)$. \assmiv\eqref{it:indep} ensures that 
\begineq\label{eq:a1b1_ss2}
\xi_i \indep \tdvi \mid X_i. 
\endeq
Accordingly, we have 
\begina
\E( \tdvi \cdot \Delta_i )
&\overset{\eqref{eq:a1b1_ss1}}{=}& \E( \tdvi \cdot \xi_i ) \nonumber\\
&=&\E\Big\{ \E( \tdvi \cdot \xi_i \mid X_i ) \Big\}\nonumber \\
&\overset{\eqref{eq:a1b1_ss2}}{=}&\E\Big\{\E(\tdvi\mid X_i) \cdot \E(\xi_i \mid X_i) \Big\}\nonumber\\
&\overset{\eqref{eq:a1_tvxi}+\eqref{eq:xi_res_def}}{=}& C_1 \cdot\E\Big[ \Big\{\E(\zvi \mid X_i) - \proj(\zvi\mid X_i)\Big\} \cdot \Big\{ \E(\Delta_i \mid X_i) - \proj(\Delta_i \mid X_i) \Big\} \Big]\\
&=& C_1B_1(\beta),
\enda
where the second to last equality follows from 
\begina
\E(\tdvi\mid X_i) &=& C_1  \Big\{ \E(\zvi \mid X_i) - \proj(\zvi \mid X_i) \Big\} \quad \text{by \eqref{eq:a1_tvxi},}\\
\E(\xi_i \mid X_i) &=& \E(\Delta_i \mid X_i) - \proj(\Delta_i \mid X_i)\quad \text{by \eqref{eq:xi_res_def}.}
\enda

\paragraph{\underline{Proof of $ \E( \tdvi\cdot Z_i \ep_i ) = C_1B_2(\beta) \cdot \pcd $ in \eqref{eq:bbs_goal_2}.}}
From \eqref{eq:a1_tvxi},
%%%%%%%%%%%%%%
\beginy
\tdvi \cdot Z_i \overset{\eqref{eq:a1_tvxi}}{=}C_1 Z_i\{V_i - \proj(\zvi\mid \xxi)\} = C_1 Z_i R_i,  \label{eq:a2_tvxi}
\endy
where $R_i = V_i - \proj(\zvi\mid \xxi)$. 
Note that $\epi = \dti  (\tau_i -  \vit\beta)$ is a function of $\ydxsetiv$, with $\epi = 0$ if $\dio = \diz$.
\assmiv\eqref{it:indep} ensures that 
\begineq\label{eq:a2b2_ss1}
\zri \indep  \epi \mid X_i.
\endeq
Accordingly, we have 
\beginy
\E( \zri \cdot  \ep_i )
 &=&\E \Big\{ \E( \zri \cdot  \ep_i \mid X_i ) \Big\} \nonumber \\
 %%%%%%%%%%%%
 &\overset{\eqref{eq:a2b2_ss1}}{=}&\E \Big\{ \E( \zri \mid X_i ) \cdot \E(  \ep_i \mid  X_i ) \Big\} \nonumber\\
 &=& \E\Big[ \E\Big\{ \E( \zri \mid X_i )\cdot \ep_i \mid  X_i\Big\} \Big] \nonumber\\
 %%%%%%%%%%%%
  &=&   \E\Big\{ \E( \zri \mid X_i )\cdot \ep_i \Big\}  \nonumber\\
 %%%%%%%%%%%%
 &=&\E \Big[\E( \zri \mid X_i ) \cdot \ep_i \mid \uicd \Big] \cdot \pcd\nnb \\
 &\oeq{\eqref{eq:b1_b2_star}}& B_2(\beta) \cdot \pcd .\label{eq:zxde}
 \endy
 This, combined with \eqref{eq:a2_tvxi}, ensures
 \begineqs
 \E( \tdvi\cdot Z_i \ep_i ) 
  \overset{\eqref{eq:a2_tvxi}}{=} 
  C_1\cdot \E( \zri \cdot   \ep_i)\\
  \overset{\eqref{eq:zxde}}{=}  
  C_1\cdot  B_2(\beta) \cdot \pcd  .  
 \endeqs
  
\end{proof}

\begin{lemma}\label{lem:sub}
\vlem.
Then $\E(Z_iV_i\mid X_i)$ is linear in $X_i$ and $ \E(Z_i\mid X_i) \cdot \{ V_i - \proj(Z_iV_i\mid X_i) \}$ is linear in $V_i$ if any of the following conditions holds: 
\begine[(i)]
 \item\label{it:sub_1} $V_i$ \cq; $\proj(Z_iV_i\mid X_i) = \proj(Z_iV_i \mid V_i)$ and $\E(Z_i \mid X_i) = \E(Z_i\mid V_i)$. 
 \item\label{it:sub_2} $Z_i \indep X_i$. 
 \ende
\end{lemma}

\begin{proof}[\bf Proof of Lemma \ref{lem:sub}]

To ensure that $\E(Z_iV_i \mid X_i)$ is linear in $X_i$, 
\begini
\item the sufficiency of Lemma \ref{lem:sub}\eqref{it:sub_1} follows from $\E(Z_iV_i \mid X_i) = \E(Z_iV_i\mid V_i)$, where $\E(Z_iV_i\mid V_i)$ is linear in $V_i$ by Lemma \ref{lem:exactly m}\eqref{it:exactly m ii}.
\item the sufficiency of Lemma \ref{lem:sub}\eqref{it:sub_2} follows from $\E(Z_iV_i\mid X_i) = \E(Z_i\mid X_i) \cdot V_i = \E(Z_i) \cdot V_i$. 
\endi
We verify below the sufficiency of the conditions for ensuring that $\E(Z_i\mid X_i)\cdot \{ V_i - \proj(Z_iV_i\mid X_i) \}$ is linear in $V_i$. We omit the subscript $i$ in the proof. 

\paragraph{\underline{Proof of Condition \eqref{it:sub_1}}.}
Given $\exx$ is invertible, Lemma \ref{lem:exactly m}\eqref{it:exactly m sub} ensures that $\E(VV^\T)$ is invertible. 
When $V$ is categorical with $Q$ levels, Lemma \ref{lem:exactly m}\eqref{it:exactly m ii} ensures that $VV^\T$ is componentwise linear in $V$ and that $\E(Z\mid V)$ is linear in $V$. 
We can write 
\beginy\label{eq:v_iii_ss}
V - \proj(ZV\mid V)= C V,\quad   \E(Z\mid V)=V ^\T c,  
\endy for $C =I_Q - \E(Z V V^\T) \{\E(VV^\T)\}^{-1}$ and some constant $c\in \mathbb R^Q$.
This, together with $\proj(ZV\mid X) = \proj(Z
V \mid V)$ and $\E(Z\mid X) = \E(Z \mid V)$, ensures that 
\begina
\Big\{ V - \proj(ZV\mid X) \Big\} \cdot\E(Z \mid X) = \Big\{ V - \proj(ZV\mid V) \Big\} \cdot\E(Z \mid V) \overset{\eqref{eq:v_iii_ss}}{=} CV \cdot V^\T c,
\enda
which is linear in $V$ when $V V^\T$ is componentwise linear in $V$. 

\paragraph{\underline{Proof of Condition \eqref{it:sub_2}}.}
When $Z \indep X$, we have 
\beginy\label{eq:cond2_ss}
\proj(Z X\mid X) = \E(Z) X
\endy by Lemma \ref{lem:proj}\eqref{it:lem_proj_ii}.
Given that $V = (I_Q, 0)X$, we have  
\begina
\proj(ZV\mid X) &=& \proj\left\{Z\cdot (I_Q, 0)X\mid X\right\} \\
&=& (I_Q, \ 0 ) \cdot \proj(Z X\mid X)
\overset{\eqref{eq:cond2_ss}}{=} (I_Q, \ 0) \cdot \E(Z) X = \E(Z) V 
\enda
so that 
$\E(Z \mid X) \cdot \{ V - \proj(ZV\mid X) \} = \E(Z) \cdot \{1-\E(Z)\} \cdot V$. 
\end{proof}

Lastly, define
\begina
\bcd' &=& \textup{argmin}_{b \in \mathbb R^K}\E\left[  (\tau_i - \vit b )^2 \mid \uicd\right]\nnb\\
&=& \textup{argmin}_{b \in \mathbb R^K}\E\left[ \{\tl(X_i) - \vit b\}^2 \mid \uicd\right]
\enda
as the common coefficient vector of the linear projection of $\tau_i$ and $\tl(\xxi)$ onto $V_i$ among compliers and defiers. 
Lemma~\ref{lem:assm_po_app} below shows that under \wmf, $\bi'$ from the partial \its\ recovers $\bcd'$ when the working model in \eqref{eq:wm_partial} is correctly specified.

\begin{lemma}\label{lem:assm_po_app}
Assume \assmivnomono\ and \assmwm.
If the working model in \eqref{eq:wm_partial} is correctly specified, then 
\begine[(i)]
\item $\tl( X_i) = \vit\bdv$, with $\bcd' = \bdv$; 
\item $\bi' = \bdv = \bcd'$.
\ende
\end{lemma} 

\begin{proof}[\bf Proof of \lem~\ref{lem:assm_po_app}]
When \eqref{eq:wm_partial} is correctly specified, we have
\begineqs
Y_i = D_i\vit \bdv + \xit\beta_X + \eta_i,\where\E(\eta_i\mid \xxi, \zi) = 0.
\endeqs
By definition, $\tdvi = \res(\hdvi\mid \xxi)$ is a function of $(\xxi,\zi)$ that satisfies
\begineq\label{eq:assm_po}
\E(\tdvi \cdot \xit) = 0, \quad \E(\tdvi \cdot \eta_i) = \E\left\{\E(\tdvi \cdot \eta_i \mid \xxi,\zi)\right\} \oeq{\eqref{eq:wm_partial}} 0.
\endeq
This implies that 
\beginy
\tdvi\cdot Y_i 
&=& \tdvi \cdot (\dvi)^\T \bdv + \tdvi\cdot (\xit\beta_X + \eta_i),\nnb\\
%%%%%%%%%%%%%%%%
\E\left(\tdvi\cdot Y_i\right) &=& \E\left\{\tdvi \cdot (\dvi)^\T \right\} \cdot\bdv +\E \left\{ \tdvi\cdot (\xit\beta_X + \eta_i) \right\}\nonumber\\
&\oeq{\textup{Lem.~\ref{lem:bi}}+\eqref{eq:assm_po}}& \E\left(\tdvi\cdot \tdvi^\T \right) \cdot\bdv,\label{eq:dxy_po}
\endy
so that 
\begineqs
\bi' \oeq{\eqref{eq:dxy_po}+\textup{Lem.~\ref{lem:bi}}} \bdv 
\endeqs
by \lem~\ref{lem:bi}. 

In addition, \citet[Lemma~2.1]{sloczynski2022not} implies that, under \assmivnomono\ and \assmwm,
\begineq\label{eq:tcx_1}
\tlx = \dfrac{\E(Y_i \mid \zi = 1, \xxi) - \E(Y_i \mid \zi = 0, \xxi)}{\E(D_i \mid \zi = 1, \xxi) - \E(D_i \mid \zi = 0, \xxi)}.
\endeq
The correctly specified working model \eqref{eq:wm_partial} implies that 
\begineqs
\E(Y_i \mid \zi, \xxi) = \E(D_i \mid \zi, \xxi) \cdot \vit \bdv + X_i ^\T \beta_X,   
\endeqs
so that 
\begineq\label{eq:outcome_1}
\beginar{rcl}
\E(Y_i \mid \zi=1, \xxi) &=& \E(D_i \mid \zi=1, \xxi) \cdot \vit \bdv + X_i ^\T \beta_X,\\
\E(Y_i \mid \zi=0, \xxi) &=& \E(D_i \mid \zi=0, \xxi) \cdot \vit \bdv + X_i ^\T \beta_X.
\endar  
\endeq
Plugging \eqref{eq:outcome_1} into \eqref{eq:tcx_1} implies that $\tc(\xxi) = \vit \bdv$. 
\end{proof}

\subsection{Two linear algebra propositions}\label{sec:lem_la}
Propositions~\ref{prop:la_x0}--\ref{prop:la_xx} below state two linear algebra results that underlie \thm~\ref{thm:nece}\eqref{it:assm_ezx_necessary}. 
We did not find formal statements or proofs of these results in the literature. We therefore provide original proofs in Section~\ref{sec:la_proof}. 

\begin{proposition}\label{prop:la_x0}
Let $X = (X_0, X_1, \ldots, X_m)^\T$ be an $(m + 1)\times 1$ vector such that $XX_0$ is linear in $(1,X)$. That is, there exists constants $\{c_i, b_i, a_{ij} \in \mr: i = 0, \ldots, m; \, j = \ot{m} \}$ such that 
\beginy\label{eq:ls_x0_prop}
XX_0 = \beginp
X_0\\ 
X_1 \\
\vdots\\
X_m
\endp X_0 = \beginp
c_0\\ 
c_1\\
\vdots\\
c_m
\endp 
 + 
\beginp
b_0\\ 
b_1\\
\vdots\\
b_m
\endp X_0 + \beginp
a_{01} & \ldots & a_{0m} \\ 
a_{11} & \ldots & a_{1m} \\
& \ddots & \\
a_{m1} & \ldots & a_{mm} 
\endp \beginp
X_1 \\
\vdots\\
X_m
\endp.
\endy
Then $X_0$ takes at most $m+2$ distinct values among all solutions $X$ to \eqref{eq:ls_x0_prop}. 
\end{proposition}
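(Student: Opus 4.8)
The plan is to linearize the quadratic system by appending a constant entry to the unknown vector, which converts the assertion ``$X_0$ takes few values'' into the assertion ``$X_0$ is an eigenvalue of a single fixed matrix.'' The point is that the map $X \mapsto X X_0$ is quadratic, but once we adjoin the constant $1$ the constraint \eqref{eq:ls_x0_prop} makes it \emph{affine-linear} in the enlarged vector, so the whole problem reduces to an eigenvalue count.

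First I would split $X = (X_0, Y^\T)^\T$ with $Y = (X_1, \ldots, X_m)^\T$ and name the coefficient data: let $a_0 = (a_{01}, \ldots, a_{0m})^\T$, $A = (a_{ij})_{i,j=1}^m$, $b = (b_1, \ldots, b_m)^\T$, and $c = (c_1, \ldots, c_m)^\T$. Then \eqref{eq:ls_x0_prop} is exactly the pair $X_0^2 = c_0 + b_0 X_0 + a_0^\T Y$ (row $0$) together with $X_0 Y = c + b X_0 + A Y$ (rows $1$ through $m$). Now introduce $\tilde X = (1, X_0, X_1, \ldots, X_m)^\T \in \mathbb{R}^{m+2}$ and observe that every coordinate of $X_0 \tilde X$ is affine-linear in $\tilde X$: indeed $X_0 \cdot 1 = X_0$, while $X_0 \cdot X_0 = X_0^2 = c_0 + b_0 X_0 + a_0^\T Y$ and $X_0 Y = c + b X_0 + A Y$ by the constraint. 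Collecting coefficients yields $X_0 \tilde X = N \tilde X$ with the fixed matrix
\[
N = \begin{pmatrix} 0 & 1 & 0^\T \\ c_0 & b_0 & a_0^\T \\ c & b & A \end{pmatrix} \in \mathbb{R}^{(m+2)\times(m+2)},
\]
which crucially does \emph{not} depend on the particular solution $X \in \ms$.

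The conclusion is then immediate: the identity $X_0 \tilde X = N \tilde X$ rearranges to $(N - X_0 I_{m+2})\tilde X = 0$, and since the first coordinate of $\tilde X$ equals $1$, the vector $\tilde X$ is nonzero, hence a genuine eigenvector of $N$ with eigenvalue $X_0$. Therefore every value of $X_0$ arising from a solution lies in the spectrum of the single fixed matrix $N$; as $N$ is $(m+2)\times(m+2)$, its characteristic polynomial has degree $m+2$ and so $N$ has at most $m+2$ distinct eigenvalues. This gives the bound.

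The main obstacle is conceptual rather than computational---recognizing that adjoining the constant $1$ turns the quadratic map into a linear one on $\mathbb{R}^{m+2}$. A more pedestrian alternative would be to solve $Y = (X_0 I - A)^{-1}(b X_0 + c)$ whenever $X_0$ is not an eigenvalue of $A$, substitute into the row-$0$ equation, and clear the characteristic polynomial of $A$ to obtain a degree-$(m+2)$ polynomial in $X_0$ whose leading coefficient is $1$; this controls all $X_0$ outside $\mathrm{spec}(A)$. The difficulty with that route is the finitely many $X_0$ that are eigenvalues of $A$, where $(X_0 I - A)$ is singular and $Y$ need not be determined, forcing a delicate separate argument (and the naive count via $\mathrm{spec}(A)$ alone would overshoot). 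The augmentation trick sidesteps this case split entirely, which is why I would adopt it.
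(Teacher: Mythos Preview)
Your proof is correct and takes a genuinely different---and markedly cleaner---route than the paper's. The paper proceeds by induction on $m$: it first establishes the base case $m=1$ by direct substitution, and for the inductive step it invokes an auxiliary lemma that separates the analysis into several cases depending on whether $X_0$ lies in the spectrum of the $m\times m$ block $A$, how the rank of $\lambda_0 I - A$ behaves, and whether the row $A_0$ vanishes; each case is reduced to a lower-dimensional instance via explicit row-reduction and linear substitutions. This is precisely the ``pedestrian alternative'' you anticipated, and you correctly identified its main headache: when $X_0$ is an eigenvalue of $A$, the vector $Y$ is not uniquely determined, so the paper must carry out a delicate rank-based case split rather than a single polynomial count. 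Your augmentation trick---adjoining the constant $1$ so that the quadratic constraint becomes the eigenvector equation $(N - X_0 I_{m+2})\tilde X = 0$ for a single fixed $(m+2)\times(m+2)$ matrix $N$---bypasses all of this in one stroke. What the paper's longer argument buys, however, is finer structural information (e.g., explicit linear relations among the $X_j$ on each stratum $\{X_0 = \lambda_0\}$ and $\{X_0 \neq \lambda_0\}$) that it later reuses in proving the companion result about $XX^\T$ being linear in $(1,X)$; your argument gives the eigenvalue bound directly but does not produce those stratum-wise linear constraints.
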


Proposition~\ref{prop:la_xx} below extends Proposition \ref{prop:la_x0} and shows that if $XX_0, XX_1, \ldots, XX_m$ are all linear in $(1, X)$, then the whole vector $X = (X_0, X_1, \ldots, X_m)^\T$ takes at most $m+2$ distinct values.

\begin{proposition}\label{prop:la_xx}
Assume $X = (X_1, \ldots, X_m)^\T$ is an $m\times 1$ vector
such that $XX^\T$ is componentwise linear in $(1,X)$. That is, there exists constants $\{c_{ij}, \aijk\in \mr: i, j, k = \ot{m}\}$ such that 
\begina
X_iX_j = c_{ij} + \sum_{k=1}^m \aijk X_k \quad \text{for \ $i, j=\ot{m}$}.
\enda
Then $X$ takes at most $m+1$ distinct values. 
\end{proposition}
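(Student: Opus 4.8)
The plan is to translate the hypothesis into a statement about the finite-dimensional span of polynomial functions on the solution set, and to read off the bound by a dimension count. Write $\ms$ for the set of distinct values taken by $X$ (equivalently, the solution set of the system $X_iX_j = c_{ij} + \sum_k \aijk X_k$). Since ``at most $m+1$ distinct values'' is a statement about $\ms$ itself, it suffices to prove that every finite subset $\{x^{(1)}, \ldots, x^{(n)}\} \subseteq \ms$ of $n$ distinct points satisfies $n \le m+1$; letting the subset exhaust $\ms$ then gives the claim and, in particular, forces $\ms$ to be finite. (Finiteness need not be assumed in advance, though it could alternatively be extracted coordinatewise from Proposition \ref{prop:la_x0}.)

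The key step is to show that, restricted to such a finite set of solutions, every polynomial in $X_1, \ldots, X_m$ agrees with an affine function of $X$. Let $U$ denote the linear span, inside the space of real functions on $\{x^{(1)}, \ldots, x^{(n)}\}$, of the $m+1$ functions $1, X_1, \ldots, X_m$ (restricted to these points). I would prove by strong induction on total degree that the restriction of every monomial $X_1^{\alpha_1}\cdots X_m^{\alpha_m}$ lies in $U$. Degrees $0$ and $1$ are immediate. For degree at least $2$, factor out a product $X_iX_j$ and use the hypothesis to replace it by $c_{ij} + \sum_k \aijk X_k$ at each solution; this rewrites the monomial as a linear combination of monomials of strictly smaller degree (one of degree $|\alpha|-2$ and $m$ of degree $|\alpha|-1$), all of which lie in $U$ by the inductive hypothesis. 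Since $U$ is a subspace, the monomial lies in $U$, and hence so does the restriction of every polynomial.

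To finish, I would invoke multivariate Lagrange interpolation: because the $n$ points are distinct, the restriction map from $\mathbb{R}[X_1,\ldots,X_m]$ onto the space of functions on the points is surjective, so that space equals $U$. As $U$ is spanned by $m+1$ functions, its dimension is at most $m+1$, while the space of all functions on $n$ distinct points has dimension exactly $n$. Therefore $n \le m+1$, as desired. Equivalently, the spanning statement says the $n \times (m+1)$ matrix with rows $(1, x^{(a)}_1, \ldots, x^{(a)}_m)$ has full row rank $n$, i.e.\ the augmented solution vectors are linearly independent in $\mathbb{R}^{m+1}$, which again yields $n \le m+1$.

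The main obstacle is the monomial-reduction induction: it must be set up as strong induction on degree, since eliminating a quadratic factor lowers the degree by two but the resulting cross terms $X_k \cdot X^\beta$ only drop by one, so the inductive hypothesis has to cover all lower degrees simultaneously. The remaining ingredient, surjectivity of polynomial restriction onto a finite point set, is standard but should be stated carefully, as it supplies the matching count $n = \dim$ for the function space that pairs with $\dim U \le m+1$.
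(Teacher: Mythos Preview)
Your argument is correct and takes a genuinely different route from the paper. The paper proves Proposition~\ref{prop:la_xx} by induction on $m$, bootstrapping from Proposition~\ref{prop:la_x0} and the structural Lemma~\ref{lem:la_x0}: at the inductive step it isolates the last column of $XX^\T$, extracts an eigenvalue $\lambda_0$ of the associated coefficient matrix $A$, partitions the solution set according to whether $X_{Q+1}=\lambda_0$, and then performs a rank-based reparametrization with several sub-cases (Condition~\ref{cond:all0_aa}/\ref{cond:all0}) to bound $|\msl|$ and $|\msnl|$ separately. Your approach bypasses all of this machinery by observing that the hypothesis makes the restriction of every polynomial to the solution set lie in the $(m{+}1)$-dimensional span of $1,X_1,\ldots,X_m$, and then pairing this with the surjectivity of polynomial restriction onto a finite point set. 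The paper's proof yields finer structural information about how the solutions sit relative to the eigenstructure of $A$ (which is used elsewhere, e.g.\ in Lemma~\ref{lem:la_x0}), whereas your dimension count is shorter, avoids the eigenvalue/rank bookkeeping entirely, and makes the bound $m{+}1$ transparent as the dimension of the affine coordinate ring. Both are valid; yours is the more economical route to the cardinality bound alone.
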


Corollary \ref{cor:la_x} is a direct implication of Propositions \ref{prop:la_x0}--\ref{prop:la_xx}.

\begin{corollary}\label{cor:la_x}
Let $X = (X_1, \ldots, X_m)^\T$ be an $m\times 1$ vector. 
\begine[(i)]
\item\label{it:cor_i} If there exists an $m\times 1$ constant vector $c = (c_1, \ldots, c_m)^\T$ such that $ X X^\T c$ is linear in $(1, X)$, then $X^\T c$ takes at most $m+1$ distinct values. 
\item\label{it:cor_ii}
If $ X X^\T c$ is linear in $(1, X)$ for all $m\times 1$ constant vectors $c \in \mathbb R^m$, then $X$ takes at most $m+1$ distinct values. 
\ende
\end{corollary}

\section{Proofs of the results in the main paper}\label{sec:proof}
The \its\ in Definition~\ref{def:2sls_i} is a special case of the partially \its\ in Definition~\ref{def:2sls_p}, and the \smf\ condition in \assmiv\eqref{it:mono} is a special case of the \wmf\ condition in \assmwm. 
We therefore proceed by first verifying the results for the partial \its\ and those under \wmf\ in Section~\ref{sec:ext}.
We then use these results to verify the corresponding results for the \its\ in \sec~\ref{sec:hte}. 

\subsection{Results for the partially \its\ in Section~\ref{sec:partial}}
\begin{proof}[\bf Proof of \propv\eqref{it:prop_v_level indep}, necessity]
We verify below the necessity of \eqref{eq:cond_v} for $\bi'$ to be level-independent. Its sufficiency follows from \propv\eqref{it:prop_v_bc}, which we prove subsequently.

Recall from \lem~\ref{lem:bi} that 
\begineq\label{eq:bi_proof_v}
\bi' = \left\{\E\left(\tdvi\cdot \tdvit\right) \right\}^{-1}\E(\tdvi \cdot Y_i).    
\endeq 
Let $\mathcal P$ denote the collection of all possible joint distributions of $\{\yid, \dizz, \xxi, Z_i: z,d=0,1\}$ under \assmiv. 
From \eqref{eq:bi_proof_v}, $\bi'$ is a functional mapping $\mathcal P$ to $\mbr^K$, denoted by $\bi' = \bi'(\pr)$ for $\pr\in \mathcal P$.

For any $\pr \in \mathcal P$ and an arbitrary function $f(\cdot): \mbr^K \to \mbr$, let $\pr_f$ denote the joint distribution of $(Z_i,X_i,D_i(0),D_i(1), Y_i (0) +f(X_i), Y_i(1) + f(X_i))$, where $(Z_i,X_i,D_i(0),D_i(1), Y_i (0), Y_i(1))\sim \pr$. That is, $\pr_f$ is obtained by
shifting all potential outcomes by $f(X_i)$.
Then $\pr_f$ is also an element of $\mathcal P$. 
Let $\E_f$ denote the expectation \wrt\ $\pr_f$. 
When $\bi'$ is level-independent, we have 
\begineq\label{eq:bi_ppf_v}
\bi'(\pr)= \bi'(\pr_f).
\endeq
Together, \eqref{eq:bi_proof_v} and \eqref{eq:bi_ppf_v} imply that 
\begina
\E(\tdvi \cdot Y_i) 
&\oeq{\eqref{eq:bi_proof_v}}& \E\left(\tdvi\cdot \tdvit\right)\bi'(\pr)\\
&\oeq{\eqref{eq:bi_ppf_v}}& \E_{f}\left(\tdvi\cdot \tdvit\right)\bi'(\pr_f)\\
&\oeq{\eqref{eq:bi_proof_v}}& \E_{f}(\tdvi \cdot Y_i)\\
&=& \E\left[\tdvi \cdot \{Y_i + f(X_i)\}\right], 
\enda
so that 
\begineq\label{eq:v_nece_fx}
\E\{\tdvi \cdot f(X_i)\} = 0 \quad\text{for any $f(\cdot)$ and $\pr \in \mathcal P$}. 
\endeq
We simplify below $\E\{\tdvi \cdot f(X_i)\}$ to complete the proof. 

Let 
\begineq\label{eq:v_wzixi}
w(\zi, \xxi) =\zvi - \proj(\zvi\mid \xxi)
\endeq
denote the residual from the linear projection of $\zvi$ onto $\xxi$, with 
\begineq\label{eq:v_wzixi_2} 
w(1,\xxi) = \vi - \proj(\zvi\mid \xxi),\quad
\wzxi =   - \proj(\zvi\mid \xxi).
\endeq 
Then  
\beginy\label{eq:v_nece_1}
&&\woxi \cdot   \exi 
 + \wzxi  \cdot \{1-\exi\} \nnb\\
 %%%%%%%%%
 &\oeq{\eqref{eq:v_wzixi_2}} &
 \underbrace{\{V_i - \proj(\zvi\mid \xxi)\}}_{w(1,\xxi)} \cdot   \exi 
   \underbrace{ - \proj(\zvi\mid \xxi)}_{w(0,\xxi)}  \cdot \{1-\exi\} \nnb\\
 %%%%%%%%%%%%%%%%
 &=&  \vi \cdot \exi 
  - \proj(\zvi\mid \xxi)   \nnb\\
  &=&  \E(\zvi\mid \xxi) 
  - \proj(\zvi\mid \xxi).   
\endy
Write 
\begineq\label{eq:v_hdvi}
\hdvi = \proj( D_iV_i \mid \zvi, X_i) = C_1\zvi + C_0 \xxi,
\endeq 
so that 
\beginy\label{eq:v_tdxi_w}
\tdvi &=& \hdvi - \proj(\hdvi\mid \xxi) \nnb\\
&\oeq{\eqref{eq:v_hdvi}}& C_1\zvi + C_0 \xxi -  \proj(  C_1\zvi + C_0 \xxi \mid \xxi) \nnb\\
%%%%%%
&=&C_1 \cdot \{\zvi - \proj(\zvi\mid \xxi)\}\nnb\\
&\oeq{\eqref{eq:v_wzixi}}& C_1 \cdot w(\zi, \xxi).
\endy   
This implies that 
\beginy\label{eq:v_nece_adam_0}
\E(\tdvi  \mid  \xxi ) 
%%%%%%%%%%%%%%%%%%%%%
&=& \E(\tdvi  \mid \zi = 1,\xxi)\cdot \pr(\zi = 1\mid \xxi) \nnb \\
&&+   \E(\tdvi \mid \zi = 0,\xxi)\cdot \pr(\zi = 0\mid \xxi) \nnb \\
%%%%%%%%%%
&\oeq{\eqref{eq:v_tdxi_w}}&    C_1 \cdot \woxi \cdot   \exi 
 +  C_1 \cdot \wzxi  \cdot \{1-\exi\}  \nnb \\
 %%%%%%%%%%
&\oeq{\eqref{eq:v_nece_1}}&  C_1\Big\{  \E(\zvi\mid \xxi) 
  - \proj(\zvi\mid \xxi)  \Big\} , 
\endy
so that 
\beginy\label{eq:v_nece_adam}
\E\{\tdvi \cdot f(X_i) \mid \xxi \} 
&=& f(X_i) \cdot \E(\tdvi  \mid  \xxi ) \nnb \\
 %%%%%%%%%%
&\oeq{\eqref{eq:v_nece_adam_0}}& f(X_i) \cdot C_1\Big\{  \E(\zvi\mid \xxi) 
  - \proj(\zvi\mid \xxi)  \Big\} ,\\
%%%%%%%%%%%%
\label{eq:v_nece_2}
\E\{\tdvi \cdot f(X_i) \} 
&=&
\E\Big[ \E\{\tdvi \cdot f(X_i) \mid \xxi \} \Big]  \nnb\\
 %%%%%%%%%%%%%%%%
 &\oeq{\eqref{eq:v_nece_adam}}& 
C_1 \cdot \E\Big[f(X_i) \cdot \Big\{  \E(\zvi\mid \xxi) 
  - \proj(\zvi\mid \xxi)  \Big\}\Big]. 
\endy 
Lemma~\ref{lem:nondegenerate_V} implies that under \assm~\ref{assm:rank_sub}, $C_1$ is invertible. 
Therefore, \eqref{eq:v_nece_fx} is equivalent to
\begineq\label{eq:v_nece_fx_2}
\E\Big[f(X_i) \cdot \Big\{  \E(\zvi\mid \xxi) 
  - \proj(\zvi\mid \xxi)  \Big\}\Big] = 0 \quad\text{for any $f(\cdot)$}. 
\endeq 
Since $f(\cdot)$ can be chosen to be any function of $\xxi$,
in particular we can choose $f(\cdot)$ to be each coordinate of $\E(\zvi\mid \xxi) 
  - \proj(\zvi\mid \xxi)$ in \eqref{eq:v_nece_fx_2}, which implies that 
$\E(\zvi\mid \xxi) = \proj(\zvi\mid \xxi)$.
\end{proof}

\begin{proof}[\bf Proof of \propv\eqref{it:prop_v_bc}]
The definition of $\bc'$ implies that it is level-independent. 
The necessity of \eqref{eq:cond_v} for $\bi' = \bc'$ under arbitrary potential outcomes configurations follows from its necessity for level-independence in \propv\eqref{it:prop_v_level indep} that we just proved. 

When \eqref{eq:cond_v} holds, 
letting $\beta = \bc'$ in \lem~\ref{lem:bi_plim} implies that $B_1(\bc') = 0$, so 
\begineq\label{eq:bi'_partial}
\bi'= \bc' + \{\E(\tdvi\cdot \tdvi^\T) \}^{-1}C_1 B_2(\bc') \cdot \pr(\uc), 
\endeq 
where
\begineq\label{eq:b2_bc'}
B_2 (\bc')
= \E \Big[ \E( \zri  \mid X_i ) \left\{\tcx - \vit\bc'\right\} \mid \uc \Big] 
\endeq 
with $R_i = V_i - \proj(\zvi\mid \xxi)$. 
\lem~\ref{lem:nondegenerate_V} implies that $C_1$ is invertible, so that from \eqref{eq:bi'_partial}, $\bi' = \bc'$ if and only if \begineq\label{eq:b2_cond}
B_2(\bc')=0. 
\endeq
We simplify below the expression of $B_2(\bc')$ in \eqref{eq:b2_bc'} to derive necessary and sufficient conditions for \eqref{eq:b2_cond} to hold. 

Write 
\begineq\label{eq:proj_zvi}
\proj(\zvi\mid \xxi) = \Gamma_V V_i + \Gamma_W \wip,
\endeq 
where $\wip = \res(W_i \mid V_i)$. 
Then 
\begina
R_i = V_i - \proj(\zvi\mid \xxi) \oeq{\eqref{eq:proj_zvi}} (I-\Gamma_V) V_i + \Gamma_W \wip,
\enda
so that
\beginy\label{eq:zri}
\E(\zri\mid \xxi)
&=& 
(I-\Gamma_V)  \cdot  \E(\zvi\mid \xxi) + \Gamma_W  \cdot \E(\zi \wip\mid \xxi)\nnb\\
&\oeq{\eqref{eq:cond_v}}& 
(I-\Gamma_V)  \cdot \proj(\zvi\mid \xxi) + \Gamma_W  \cdot \E(\zi \wip\mid \xxi)\nnb\\
&\oeq{\eqref{eq:proj_zvi}}& 
(I-\Gamma_V) (\Gamma_V V_i + \Gamma_W \wip) + \Gamma_W  \cdot \E(\zi \wip\mid \xxi)\nnb\\
&=& (I-\Gamma_V) \Gamma_V V_i + (I-\Gamma_V)\Gamma_W \wip + \Gamma_W  \cdot \E(\zi \wip\mid \xxi).
\endy
In addition, the definition of $\bc'$ implies that 
 \begineqs
\E (V_i \vit \mid \uc) \cdot \bc' =  \E \{V_i \cdot \tcx  \mid \uc\},
\endeqs
so that 
\begineq\label{eq:bc'_first order}
\E \Big[ V_i \left\{\tcx - \vit \bc'\right\}  \mid \uc \Big] = 0.
\endeq 
Plugging \eqref{eq:zri} and \eqref{eq:bc'_first order} into \eqref{eq:b2_bc'} implies that 
\begina
B_2(\bc')
&\oeq{\eqref{eq:b2_bc'} + \eqref{eq:zri}}& (I-\Gamma_V) \Gamma_V\cdot \E \Big[ V_i \left\{\tcx - \vit\bc'\right\} \mid \uc \Big]\\
&&+ (I-\Gamma_V)\Gamma_W\cdot \E \Big[ \wip \left\{\tcx - \vit\bc'\right\} \mid \uc \Big]\\
&&+ \Gamma_W\cdot \E \Big[ \E(\zi \wip\mid \xxi) \left\{\tcx - \vit\bc'\right\} \mid \uc \Big]\\
%%%%%%%%%%%%%
&\oeq{\eqref{eq:bs_first order}}&
 (I-\Gamma_V)\Gamma_W\cdot \E \Big[ \wip \left\{\tcx - \vit\bc'\right\} \mid \uc \Big]\\
&&+ \Gamma_W\cdot \E \Big[ \E(\zi \wip\mid \xxi) \left\{\tcx - \vit\bc'\right\} \mid \uc \Big].
\enda
Therefore, a sufficient condition for \eqref{eq:b2_cond} is that $\Gamma_W = 0$.
From \eqref{eq:proj_zvi},
this is guaranteed if $\proj(\zvi\mid \xxi) = \proj(\zvi\mid V_i)$. 
\end{proof}

%\begin{proof}[\bf Proof of Proposition \ref{prop:ate_ols}]
%The result follows from letting $Z_i = D_i$ in \thmbi\eqref{it:thm_bi_dr} and Theorem~\ref{thm:ate}. 
%\end{proof}

\subsection{Results under \wmf\ in Section~\ref{sec:wm_main}}
 
\begin{proof}[\bf Proof of \prop~\ref{prop:bi_weak}]
The necessity of \assmivps\ for level independence follows from \thmbi\eqref{it:thm_bi_nece}, which is implied by \propv\eqref{it:prop_v_level indep} we just proved. We verify below \prop~\ref{prop:bi_weak}\eqref{it:prop_bi_weak_suff}, which also implies the sufficiency of \assmivps\ for level independence, given that $\bdiff$ is level-independent. 

Recall that \assmivnomono\ implies $\pr(\uicd)> 0$. Let $\delta_i = \dio - \diz$, and let 
\beginy\label{eq:bs_def}
\bs &=& \textup{argmin}_{b \in \mathbb R^K}\E\left[ \delta_i (\tau_i - \xit b )^2 \mid \uicd\right]\nnb\\
&=&\Big[\E \left( \dti X_i \xit \mid \uicd \right)\Big]^{-1}\E \left( \dti X_i \tau_i  \mid \uicd \right)
\endy
denote the coefficient vector of $\xxi$ in the weighted linear projection of $\tau_i$ onto $\xxi$ among compliers and defiers, where the weights $\delta_i$ satisfy $\delta_i = 1$ for compliers and $\delta_i = -1$ for defiers.
The definition of $\bs$ implies that 
 \begineqs
\E \left( \dti X_i \xit \mid \uicd \right) \cdot \bs =  \E \left( \dti X_i \tau_i  \mid \uicd \right),
\endeqs
so that 
\begineq\label{eq:bs_first order}
\E \left[\delta_iX_i (\tau_i - \xit \bs)  \mid \uicd \right] = 0.
\endeq 
Under \assmivps, we have 
\begineq\label{eq:c2}
\E(\zxi \mid \xxi) = \proj(\zxi\mid \xxi) = C_2X_i,
\endeq where $C_2 = \{\E(X_i\xit)\}^{-1} \E(\zxi \xit)$.
Letting $\beta = \beta^*$ in \lem~\ref{lem:bi_plim} yields $B_1(\bs) \oeqt{Assm.~\ref{assm:ivps}} 0$ and 
 \begina
B_2(\bs) &\oeq{\eqref{eq:b1_b2_star}+\eqref{eq:c2}}&  \E \Big\{ \E( \zi \mid X_i )\cdot(\xxi - C_2 \xxi) \cdot \ep_i \mid \uicd\Big\} \\
%%%%%%%%%%%%
&=&  (I-C_2) \cdot \E \Big\{ \E( \zi \xxi \mid X_i ) \cdot \ep_i \mid \uicd\Big\} \\
&\oeq{\eqref{eq:c2}}&
    (I-C_2)C_2 \cdot \E \left\{X_i\cdot \dti(\tau_i - \xit\bs ) \mid \uicd \right\}\\
 %%%%%%%%%%%%%%%%
 &\oeq{\eqref{eq:bs_first order}}& 0,
  \enda
so that 
\begineq\label{eq:mono_1}
\bi = \bs
\endeq
under \assmivnomono. 

Further assume \assmwm. 
Following \cite{sloczynski2022not}, 
define 
\begineqs
c(x) = \sgn\Big[\pr\{D_i(1) \geq D_i(0) \mid \xxi = x\} - \pr\{D_i(1) \leq D_i(0) \mid \xxi = x\}\Big]. 
\endeqs
Under \assmwm, we have $c(x)\in \{1, -1\}$, where $c(x) = 1$ if at $\xxi = x$, there are only compliers, and $c(x) = -1$ if at $\xxi = x$, there are only defiers.
This implies 
\begineq\label{eq:cxi}
c(X_i) = \delta_i,
\endeq 
which allows us to simplify the expression of $\bs$ in \eqref{eq:bs_def} as follows:  
\beginy\label{eq:wm_b_1}
\E\left(\delta_i X_i \xit \mid \uicd\right)
&\oeq{\eqref{eq:cxi}}& 
\E\left[ c(X_i) X_i \xit \mid \uicd\right]\nnb\\
&=&
\E\left[ c(X_i) X_i \xit \mid \uc \right] \cdot 
\pr(\uc \mid \uicd)\nnb\\
&&+ 
\E\left[ c(X_i) X_i \xit \mid  \ud \right] \cdot 
\pr(\ud \mid \uicd)\nnb\\
%%%%%%%%%%%%%%%%%
%%%%%%%%%%%%%%%%%
&\oeq{\eqref{eq:cxi}}& 
\E(X_i \xit \mid  \uc) \cdot 
\pi 
-
\E(X_i \xit \mid  \ud ) \cdot 
(1-\pi)\nnb\\
&=& W_\cc \pi - W_\dd (1-\pi),
\endy
%%%%%%%%%%%
%%%%%%%%%%%%
\beginy\label{eq:wm_b_2}
\E\left(\delta_i X_i \tau_i \mid \uicd\right)
&\oeq{\eqref{eq:cxi}}& 
\E\left[ c(X_i) X_i \tau_i \mid \uicd\right]\nnb\\
&=&
\E\left[ c(X_i) X_i \tau_i \mid \uc \right] \cdot 
\pr(\uc \mid \uicd)\nnb\\
&&+ 
\E\left[ c(X_i) X_i \tau_i \mid  \ud \right] \cdot 
\pr(\ud \mid \uicd)\nnb\\
%%%%%%%%%%%%%%%%%
%%%%%%%%%%%%%%%%%
&\oeq{\eqref{eq:cxi}}& 
\E(X_i \tau_i \mid  \uc) \cdot 
\pi 
- 
\E(X_i \tau_i \mid  \ud ) \cdot 
(1-\pi)\nnb\\
&=& W_\cc\bc\pi - W_\dd\bd(1-\pi).
\endy
The last equality follows from 
\begina
\beta_u &=& \textup{argmin}_{b \in \mathbb R^K}\E\left\{  (\tau_i - \xit b )^2 \mid \ui = u \right\}\\
&=& \left\{\E( X_i \xit \mid \ui = u ) \right\}^{-1}\E\left( X_i \tau_i \mid \ui = u \right)
\\
&=& W_u^{-1}\E\left( X_i \tau_i \mid \ui = u \right)
\enda
for $u = \cc,\dd$. 
Plugging~\eqref{eq:wm_b_1} and \eqref{eq:wm_b_2} into \eqref{eq:bs_def} implies that $\bs = \bdiff$ under \assmwm.
This, together with \eqref{eq:mono_1}, completes the proof. 
\end{proof}

\subsection{Proofs of the results in Section~\ref{sec:hte}}
\begin{proof}[\bf Proof of \thmbi]
\thmbi\eqref{it:thm_bi_nece}--\eqref{it:thm_bi_suff} follow from \propv\ with $V_i = \xxi$. 
The double-robustness result follows from \thmbi\eqref{it:thm_bi_suff} and \lem~\ref{lem:assm_po_app}.
\end{proof}

\begin{proof}[\bf Proof of \thm~\ref{thm:nece}]
We verify below 
\thm~\ref{thm:nece}\eqref{it:thm_nece_sufficient} and \eqref{it:assm_ezx_necessary}, respectively.

\paragraph{\underline{Proof of \thm~\ref{thm:nece}\eqref{it:thm_nece_sufficient}.}}
The sufficiency of categorical covariates follows from Lemma~\ref{lem:exactly m}.
The sufficiency of randomly assigned IV follows from $\E(\zxi \mid X_i) = \E(Z_i \mid X_i) X_i = \E(Z_i) X_i$. 

\paragraph{\underline{Proof of \thm~\ref{thm:nece}\eqref{it:assm_ezx_necessary}.}} Let $\xn = \xincf$ with $X_i = (1, \xn^\T)^\T$. 
\assmivps\ is equivalent to
\beginy\label{eq:linear}
\text{$\E(Z_i \mid X_i)$ and $\E(Z_i \xn \mid X_i)$ are both linear in $X_i$}.  
\endy
From \eqref{eq:linear}, there exists constants $a_0\in \mr$ and $a_X\in\mr^{K-1}$, so that 
\beginy\label{eq:ezx}
\E(Z_i \mid X_i) = a_0 + \xn^\T a_X.
\endy 
This ensures 
\beginy\label{eq:ezx_1:K-1}
\E(Z_i \xn \mid X_i) =\xn \cdot \E(\zi \mid X_i)  \overset{\eqref{eq:ezx}}{=} 
% \left(a_0 + \xn^\T a_X\right) \cdot \xn = 
 \xn a_0 + \xn \xn ^\T a_X.
\endy
From \eqref{eq:ezx_1:K-1}, for $\E(Z_i \xn \mid X_i)$ to be linear in $(1, \xn )$ as suggested by \eqref{eq:linear}, we need $ \xn \xn ^\T a_X$ to be linear in $ (1, \xn ) $.
By Corollary \ref{cor:la_x}\eqref{it:cor_i}, this implies that $\xn ^\T a_X$ takes at most $K$ distinct values, which in turn ensures that $\E(Z_i\mid X_i)$ takes at most $K$ distinct values given \eqref{eq:ezx}.

In addition, if  that \assmivps\ holds for arbitrary $e(\cdot)$ implies that $ \xn \xn ^\T a_X$ is linear in $ (1, \xn ) $ for arbitrary $a_X$. 
By Corollary~\ref{cor:la_x}\eqref{it:cor_ii}, this implies that $X_i$ is categorical with at most $K$ distinct values. 
\end{proof}

\begin{proof}[\bf Proof of Example~\ref{ex:cat_bi}]
We verify below 
\begineq\label{eq:bc_cat}
\bc = (\tau_{[1]\cc}, \ldots, \tau_{[K]\cc})^\T.
\endeq
Given \eqref{eq:bc_cat}, that $\tc(X_i) = \xit \bc$ follows from the definition of $\tc(X_i)$.

Recall from \eqref{eq:bc_def} that $
\bc = \{\E(X_i\xit\mid \uc )\}^{-1}  \E( X_i  \tau_i \mid \cp )$. 
When $X_i = \dmid$, we have $ X_i \xit = \diag\{1(\xis = k)\}_{k=1}^K$ so that 
\begineq\label{eq:bc_1}
\E( X_i \xit \mid \uc ) = \diag(p_k)_{k=1}^K,\qquad \E\left( X_i \tau_i \mid \cp \right) = (a_1, \ldots, a_K)^\T, 
\endeq
where $p_k = \E\{1(\xis = k) \mid \uc\} = \pr(\xis = k \mid \uc )$ and 
\begina
a_k  
&=& \E\Big\{ 1(\xis = k) \cdot\tau_i \mid \cp \big\}\\
%%%%%%%%%%%%%%%%
&=& \E(\tau_i \mid \xis = k, \uc) \cdot \pr(\xis = k \mid \uc )= \tk \cdot p_k.
\enda
Combining \eqref{eq:bc_def} and \eqref{eq:bc_1} implies that the $k$th element of $\bc$ equals 
$p_k^{-1} \cdot a_k = \tk$ for $k = \ot{K}$, verifying \eqref{eq:bc_cat}.
\end{proof}

\subsection{Proofs of the results in Section~\ref{sec:late}}
\begin{proof}[\bf Proof of Theorem \ref{thm:ate}]
Let $\xic = (1, \xio  - \mu_1, \ldots, X_{i,K-1} - \mu_{K-1})^\T$ denote the population analog of $\hxic$.
Then $\ti$ equals the first element of $\bi(\xic)$.
The result then follows from Lemma \ref{lem:demeaned}. 
\end{proof}

\begin{proof}[\bf Proof of Example~\ref{ex:cat_tc}]
Let $\xxi' =  (\ind{\xis = K}, \ind{\xis = 1}, \ldots, \ind{\xis = K-1})^\T$ be the vector of $K$ dummies. 
The numerical equivalence between $\tslst(Y_i \sim D_i\xxi' + \xxi' \mid \zi\xxi' + \xxi')$ and $K$ category-specific \tsls\ regressions implies that 
\begineq\label{eq:ex2_category-specific}
\hbi(X_i') = (\htau_{\text{wald},[K]}, \htau_{\text{wald},[1]}, \ldots, \htau_{\text{wald},[K-1]})^\T,
\endeq 
where 
\begineqs
\htwk = \dfrac{\hat\E(Y_i \mid \xis = k, Z_i = 1) - \hat\E(Y_i \mid \xis = k, Z_i = 0)}{\hat\E(D_i \mid \xis = k, Z_i = 1) - \hat\E(D_i \mid \xis = k, Z_i = 0)}. 
\endeqs
%%%%%%%
In addition, 
\begineq\label{eq:ex2_gamma0}
\beginar{l}
\xxi = \beginp 1 \\ \ind{\xis = 1} \\ \vdots \\ \ind{\xis = K-1} \endp 
= \beginp 1 & 0 & \cdots & 0 \\ \hmu_1 & 1 & \cdots & 0  \\ \\ \hmu_{K-1} & 0 & \cdots & 1 \endp 
\beginp 1 \\ \ind{\xis = 1} - \hmu_1 \\ \vdots \\ \ind{\xis = K-1} - \hmu_{K-1} \endp = \Gamma_0 \hxic,\medskip\\
%%%%%%%%%
X_i' = \beginp  \ind{\xis = K} \\ \ind{\xis = 1} \\ \vdots \\ \ind{\xis = K-1} \endp = \beginp 1 & - 1 & \cdots & - 1 \\ 0 & 1 & \cdots & 0  \\ \\ 0 & 0 & \cdots & 1 \endp \beginp 1 \\ \ind{\xis = 1} \\ \vdots \\ \ind{\xis = K-1} \endp = \Gamma_1\xxi,
\endar 
\endeq 
where $\Gamma_0 = \beginp 1 & 0_{K-1}^\T \\ \hmu_{1:(K-1)}  & I_{K-1} \endp$ and $\Gamma_1 = \beginp 1 & -1_{K-1}^\T \\ 0 & I_{K-1}\endp$, 
so that 
\begineq\label{eq:ex2_gamma}
\xxi' \oeq{\eqref{eq:ex2_gamma0}} \Gamma_1 \Gamma_0\hxic = \Gamma\hxic,
\endeq
where 
\begineqs
\Gamma 
= \Gamma_1\Gamma_0 
= \beginp 1 & -1_{K-1}^\T \\ 0 & I_{K-1}\endp\beginp 1 & 0_{K-1}^\T \\ \hmu_{1:(K-1)}  & I_{K-1} \endp 
= \beginp 1 - \sum_{k=1}^{K-1} \hmu_k & -1_{K-1}^\T \\ \hmu_{1:(K-1)} & I_{K-1}\endp = \beginp \hmu_K & -1_{K-1}^\T \\ \hmu_{1:(K-1)} & I_{K-1}\endp. 
\endeqs
The invariance of \tsls\ to nondegenerate linear transformation in \lem~\ref{lem:2sls_invar} implies that 
\begineq\label{eq:ex2_invar}
\hbi(\hxic) \oeqt{\lem~\ref{lem:2sls_invar}} \Gamma^\T \hbi(\Gamma \hxic) \oeq{\eqref{eq:ex2_gamma}} \Gamma^\T \hbi(X_i').  
\endeq
Plugging \eqref{eq:ex2_category-specific} into \eqref{eq:ex2_invar} implies that 
\begineqs
\hbi(\hxic) 
= \Gamma^\T \hbi(X_i') =\beginp \hmu_K & \hmu_{1:(K-1)}  \\ -1_{K-1}  & I_{K-1} \endp \beginp
\htau_{\text{wald},[K]} \\ \htau_{\text{wald},[1]} \\ \vdots \\ \htau_{\text{wald},[K-1]}\endp, 
\endeqs
where the first element, $\hti$, equals $\hti = \sumk \hmu_k \htwk$.

\end{proof}

Theorem \ref{thm:ate} ensures Proposition \ref{prop:unification}\eqref{it:unification_tii}.
We verify below Proposition \ref{prop:unification}\eqref{it:unification_taa} and \eqref{it:unification_tia}.

\begin{proof}[\bf Proof of Proposition \ref{prop:unification}\eqref{it:unification_taa}]
When $\E(D_i \mid \zi, \xxi)$ is linear in $(\zi, \xxi)$, write 
\begineqs
\E(D_i \mid \zi, \xxi) = a Z_i + b^\T X_i. 
\endeqs
This implies that 
\begina
\var\{\E(\di\mid \zi, \xxi)\mid \xxi \} = a^2 \var(Z_i \mid \xxi),\\
\pi(\xxi) 
= \E(D_i \mid \zi = 1, \xxi) -  \E(D_i \mid \zi = 0, \xxi) 
= a,
\enda
so that $w_+(X_i) =  w(X_i)$. 
\end{proof}

\begin{proof}[\bf Proof of Proposition \ref{prop:unification}\eqref{it:unification_tia}]
Let $\cdi = \proj( D_i \mid \zxi, X_i)$ denote the population fitted value from the interacted first stage $\lmt(D_i \sim \zxi + X_i)$.
Let 
\begineq\label{eq:tia_tdi_def}
\tdi = \cdi - \proj(\cdi \mid X_i)
\endeq denote the residual from the linear projection of $\cdi$ on $X_i$. 
The population FWL in Lemma \ref{lem:fwl} ensures
\beginy\label{eq:fwl_add}
\tia = \dfrac{\E(Y_i \tdi )}{\E(\tdi^2 )}. 
\endy
We simplify below the numerator $\E(Y_i\tdi)$ in \eqref{eq:fwl_add}. 

Let $\delta_i = \dio - \diz$ to write 
\beginy\label{eq:di}
D_i &=& D_i(0) + \delta_i Z_i,\\
Y_i  
&=& Y_i(0) + \tau_i \cdot D_i \nnb\\
&=& Y_i(0) + \left\{\tau_i - \tc( X_i)\right\} \cdot D_i + \tc( X_i) \cdot D_i \nnb\\
&\overset{\eqref{eq:di}}{=}& Y_i(0) + \left\{\tau_i - \tc( X_i)\right\} \cdot \left\{ \diz +  \delta_i Z_i \right\} + \tc( X_i) \cdot D_i \nnb\\
%%%%%%%%%%%%%%%%%%%%%%
&=& Y_i(0) + \left\{\tau_i - \tc( X_i)\right\} \cdot \diz + \left\{\tau_i - \tc( X_i)\right\}\cdot\delta_i Z_i + \tc( X_i) \cdot D_i \nnb\\
%%%%%%%%%%%%%%%%%%%%%%%%%
&=& \Delta_i + \left\{\tau_i - \tc( X_i)\right\}\cdot\delta_i Z_i + \tc( X_i) \cdot D_i,\label{eq:tsa_yi}
\endy
where $\Delta_i = Y_i(0) + \left\{\tau_i - \tc( X_i)\right\} \cdot \diz$. 
%%%%%%%%%%%%%
%%%%%%%%%%%%
Equation~\eqref{eq:tsa_yi}  ensures
\beginy\label{eq:e_3}
\E(Y_i \tdi)= \E(\Delta_i \cdot \tdi) + \E\Big[\left\{\tau_i - \tc( X_i)\right\}\delta_i \cdot Z_i  \tdi \Big] +\E \left\{ \tc( X_i) \cdot D_i \tdi\right\}.
\endy
We compute below the three expectations on the right-hand side of \eqref{eq:e_3}, respectively.

\paragraph{\underline{Compute $\E(\Delta_i \cdot \tdi)$ in \eqref{eq:e_3}}.} 
Let $\xi_i = \Delta_i - \proj(\Delta_i \mid X_i)$ with 
\beginy\label{eq:tsa_xi_x}
\E(\xi_i\mid X_i) = \E(\Delta_i\mid X_i) - \proj(\Delta_i \mid X_i).
\endy Properties of linear projection ensure  $\E\{\proj(\Delta_i\mid X_i) \cdot \tdi\} = 0$ so that 
\beginy\label{eq:tsa_xi}
\E( \Delta_i  \cdot \tdi) = \E\left\{\proj(\Delta_i\mid X_i) \cdot \tdi\right\} + \E( \xi_i \cdot \tdi) = \E( \xi_i \cdot \tdi).
\endy
In addition, $\xi_i$ is a function of $\ydxsetiv$, whereas $\tdi$ is a function of $( X_i, Z_i)$. 
This ensures 
\beginy\label{eq:tsa_indep1}
\xi_i \indep \tdi \mid X_i
\endy
under \assmiv\eqref{it:indep}. 
%%%%%%%%%%%%
\eqref{eq:tsa_xi_x}--\eqref{eq:tsa_indep1} together ensure
\beginy\label{eq:tsa_1}
 \E(\Delta_i \cdot \tdi)
&\overset{\eqref{eq:tsa_xi}}{=}& \E(\xi_i \cdot \tdi)\nnb\\
&=& \E\left\{ \E(\xi_i \cdot \tdi \mid X_i) \right\}\nnb\\
&\overset{\eqref{eq:tsa_indep1}}{=}&\E\left\{ \E(\xi_i \mid X_i) \cdot \E( \tdi \mid X_i) \right\}\nnb\\
&\overset{\eqref{eq:tsa_xi_x}}{=}& \E\Big[\Big\{ \E(\Delta_i \mid X_i) - \proj(\Delta_i\mid X_i)\Big\}  \cdot \E(\tdi \mid X_i)\Big]\nnb\\
&\overset{\eqref{eq:tia_tdi_def}}{=}& \E\Big[\Big\{ \E(\Delta_i \mid X_i) - \proj(\Delta_i\mid X_i)\Big\}  \cdot \Big\{ \E(\hdi\mid X_i) - \proj(\hdi\mid X_i)\Big\}\Big] \nnb\\
&\equiv& B. 
\endy

\paragraph{\underline{Compute $\E[\{\tau_i - \tc( X_i)\}\delta_i \cdot Z_i  \tdi]$ in \eqref{eq:e_3}}.} 
Note that $ \{\tau_i - \tc( X_i) \} \delta_i$ is a function of $\ydxsetiv$, whereas $Z_i\tdi$ is a function of $( X_i, Z_i)$. 
This ensures 
\beginy\label{eq:tsa_indep}
\{\tau_i - \tc( X_i) \} \delta_i \indep Z_i \tdi \mid X_i
\endy
under \assmiv\eqref{it:indep}.
In addition, 
\beginy\label{eq:tsa_2_0}
\E \Big[\big\{\tau_i - \tc( X_i)\big\}  \delta_i \mid X_i \Big]= \E\big\{\tau_i - \tc( X_i) \mid \cp , X_i \big\}\cdot \pr(\cp \mid X_i) = 0.
\endy 
\eqref{eq:tsa_indep}--\eqref{eq:tsa_2_0} together ensure 
\begina
\begin{array}{rcl}
\E \Big[ \big\{\tau_i - \tc( X_i)\big\}\delta_i\cdot Z_i\tdi \mid X_i \Big]
&\overset{\eqref{eq:tsa_indep}}{=}& \E \Big[ \big\{\tau_i - \tc( X_i)\big\}  \delta_i \mid X_i \Big] \cdot\E ( Z_i   \tdi \mid X_i) \\
&\overset{\eqref{eq:tsa_2_0}}{=}& 0,
\end{array}
\enda
and therefore 
\beginy\label{eq:tsa_2}
\begin{array}{rcl}
\E \Big[ \big\{\tau_i - \tc( X_i)\big\}\delta_i\cdot Z_i\tdi \Big]&=& 
\E \Big(\E \Big[ \big\{\tau_i - \tc( X_i)\big\}\delta_i\cdot Z_i\tdi \mid X_i \Big] \Big)\\
%%%%%%%%%%%%%%%%%%
&=& 0. 
\end{array}
\endy

\paragraph{\underline{Compute $\E \{ \tc( X_i) \cdot D_i \tdi \}$ in \eqref{eq:e_3}}.} 
\begineq\label{eq:tsa_3}
\E\Big\{ \tc( X_i) \cdot D_i \tdi\Big\} = \E\Big\{ \tc( X_i) \cdot \E(D_i \tdi \mid \xxi) \Big\}.
\endeq 

Plugging \eqref{eq:tsa_1} and \eqref{eq:tsa_2} into \eqref{eq:e_3} implies 
\begina\label{eq:tia_total}
\E(Y_i \tdi) &=& B +  \E\Big\{ \tc( X_i) \cdot \E(D_i \tdi \mid \xxi) \Big\},
\enda
so that 
\beginy\label{eq:tia_simplified}
\tia &\oeq{\eqref{eq:fwl_add}}& \{\E(\tdi^2 )\}^{-1} B +   \{\E(\tdi^2 )\}^{-1} \E\Big\{ \tc( X_i) \cdot \E(D_i \tdi \mid \xxi) \Big\}\\
&=& \{\E(\tdi^2 )\}^{-1} B +   \E\Big\{ \tc( X_i) \cdot \alpha(X_i) \Big\},
\endy
where 
\begineq\label{eq:a(X_i)}
\alpha(X_i) = \dfrac{\E(D_i \tdi\mid \xxi)}{\E(\tdi^2 )}. 
\endeq

\paragraph{\underline{Simplifications under \assmivps}.}
Write 
\begineq\label{eq:hdi}
\hdi = a^\T\zxi + b^\T\xxi.
\endeq
Under  \assmivps, 
\beginy\label{eq:tia_ehdi}
\E(\hdi \mid \xxi) 
&=& a^\T\E(\zxi\mid \xxi) + b^\T\xxi \nnb\\
&\oeqt{\assmiv}& a^\T\proj(\zxi\mid \xxi) + b^\T\xxi\nnb\\
&=& \proj(\hdi \mid \xxi), 
\endy
so that 
\begineq\label{eq:B=0_assm2}
B \oeq{\eqref{eq:tia_ehdi}+\eqref{eq:tsa_1}} 0. 
\endeq

In addition, \eqref{eq:hdi} implies that 
\beginy\label{eq:tia_tdi}
\tdi &=& \cdi - \proj(\cdi\mid X_i) \nonumber\\
 &=&  \cdi - \E(\cdi\mid X_i)\nonumber\\
&\overset{\eqref{eq:hdi}+\eqref{eq:tia_ehdi}}{=}&  a^\T \zxi - \E(a^\T \zxi\mid X_i) = (a^\T X_i)\left\{Z_i  - \E(Z_i \mid X_i)\right\}. 
\endy
This implies that 
\begineqs
 \E(\tdi^2 \mid X_i) \overset{\eqref{eq:tia_tdi}}{=} (a^\T\xxi)^2 \cdot \E\Big[\big\{ Z_i - \E(Z_i\mid X_i)\big\}^2\mid X_i \Big]= (a^\T\xxi)^2 \cdot \var(Z_i \mid X_i), 
\endeqs
and therefore 
\beginy\label{eq:taa_tdi2}
\E(\tdi^2) = \E\left\{(a^\T\xxi)^2 \cdot \var(Z_i\mid X_i)\right\}. 
\endy
%%%%%%%%%%
%%%%%%%%%
In addition, \eqref{eq:tia_tdi} ensures that   
\begineqs
\E(D_i \tdi \mid Z_i, X_i) = \E(D_i \mid Z_i, X_i) \cdot \tdi 
\oeq{\eqref{eq:tia_tdi}} (a^\T\xxi)\cdot \E(D_i \mid Z_i, X_i) \cdot \{Z_i - \E(Z_i\mid X_i)\},
\endeqs
which implies
\beginy\label{eq:taa_edd01}
\begin{array}{rcl}
\E(D_i \tdi \mid Z_i = 1, X_i) &=& (a^\T\xxi)\cdot \E(D_i \mid Z_i=1, X_i) \cdot \{1 - \E(Z_i\mid X_i)\},\\
\E(D_i \tdi \mid Z_i = 0, X_i)  &=& - (a^\T\xxi)\cdot \E(D_i \mid Z_i=0, X_i) \cdot \E(Z_i\mid X_i).
\end{array}
\endy
Therefore, we have 
\beginy\label{eq:taa_dtd}
\E(D_i \tdi \mid X_i) 
&=& \E(D_i \tdi \mid Z_i = 1, X_i) \cdot \pr(Z_i = 1\mid X_i)\nonumber\\
&&+ \E(D_i \tdi \mid Z_i = 0, X_i) \cdot \pr(Z_i = 0\mid X_i)\nonumber\\
%%%%%%%%%%%%%%%%%%
%%%%%%%%%%%%%%%%%%%%
&\overset{\eqref{eq:taa_edd01}}{=}& (a^\T\xxi) \cdot \E(D_i \mid Z_i = 1, X_i)\cdot \{1 - \E(Z_i\mid X_i)\} \cdot \E(Z_i\mid X_i)\nonumber\\
&& -
(a^\T\xxi)\cdot \E(D_i \mid Z_i = 0, X_i) \cdot\E(Z_i\mid X_i)   \cdot \left\{ 1-\E(Z_i\mid X_i)\right\}\nonumber\\
%%%%%%%%%%%%%%%%%%%%%
%%%%%%%%%%%%%%%%%%%%%%%
&=& (a^\T\xxi) \cdot \Big\{\E(D_i \mid Z_i = 1, X_i) - \E(D_i \mid Z_i = 0, X_i)\Big\}\cdot\E(Z_i\mid X_i)   \cdot \left\{ 1-\E(Z_i\mid X_i)\right\}\nonumber \\
&=& (a^\T\xxi) \cdot \pi(X_i) \cdot \var(Z_i \mid X_i).
\endy
Equations~\eqref{eq:taa_tdi2} and \eqref{eq:taa_dtd} together ensure that
\beginy\label{eq:tia_ss_1}
\alpha(X_i) = \dfrac{\E(D_i \tdi \mid X_i)}{\E(\tdi^2)} \overset{\eqref{eq:taa_tdi2}+\eqref{eq:taa_dtd}}{=}
 \dfrac{ (a^\T X_i) \cdot \pi(X_i) \cdot \var(Z_i \mid X_i)}{ \E\left\{(a^\T X_i)^2 \cdot \var(Z_i\mid X_i)\right\}}.
\endy
We show below $a = \E \{\var(Z_i \mid X_i) \cdot X_iX_i^\T  \}^{-1}\E \{ \var(Z_i \mid X_i) \cdot X_i \, \pi(X_i)\}$ to complete the proof. 

By the population FWL, we have 
\begina
a = \E\left\{\res(\zxi\mid X_i)\cdot Z_i X_i^\T\right\}^{-1}\E\left\{\res(\zxi\mid X_i) \cdot D_i\right\}.
\enda
Therefore, it suffices to show that 
\beginy\label{eq:a_i_1}
\E\left\{\res(\zxi\mid X_i)\cdot Z_i X_i^\T\right\} &=& \E \{\var(Z_i \mid X_i) \cdot X_iX_i^\T  \},\\
\E\left\{\res(\zxi\mid X_i) \cdot D_i\right\}&=&\E \{ \var(Z_i \mid X_i) \cdot X_i \, \pi(X_i)\}. \label{eq:a_i_2}
\endy 
A useful fact is that when $\E(\zxi \mid X_i)$ is linear in $X_i$, we have 
\begina
\res(\zxi\mid X_i) 
&=& \zxi - \proj(\zxi \mid X_i) \\
&=& \zxi - \E(\zxi\mid X_i) = \left\{Z_i - \E(Z_i\mid X_i)\right\}\cdot X_i
\enda 
with 
\beginy\label{eq:a_2}
\E\Big\{\res(\zxi\mid X_i) \mid Z_i = 1, X_i \Big\} &=& \left\{1-\E(Z_i\mid X_i)\right\} \cdot X_i,\\
\E\Big\{\res(Z_i X_i\mid X_i) \cdot h(X_i)\Big\} &=& 0\quad \text{for arbitrary $h(\cdot)$},\label{eq:a_3}\\
\res(\zxi \mid X_i) \cdot Z_i &=& \{1-\E(Z_i\mid X_i)\} \cdot \zxi\label{eq:a_4}.
\endy 

\paragraph{\underline{Proof of \eqref{eq:a_i_1}}.}
From \eqref{eq:a_2}, we have 
\begina
\E\Big\{\res(\zxi\mid X_i) \cdot Z_i X_i^\T\mid X_i \Big\} & =& \E\Big\{\res(\zxi\mid X_i) \cdot Z_i \mid X_i \Big\} \cdot X_i^\T\\
%%%%%%%%%%%%%%%%%%%%%%
&=&\pr(Z_i = 1 \mid X_i)\cdot \E\Big\{\res(\zxi\mid X_i) \mid Z_i = 1, X_i \Big\} \cdot X_i^\T  \\
%%%%%%%%%%%%%%%%%%%%%%%%%%
&\overset{\eqref{eq:a_2}}{=}& \E(Z_i  \mid X_i) \cdot \left\{1-\E(Z_i\mid X_i)\right\} \cdot X_i \cdot X_i^\T\\
%%%%%%%%%%%%%%%%%%%%%%%%%%%%%
&=& \var(Z_i \mid X_i) \cdot X_i X_i^\T.
\enda
This ensures 
\beginy\label{eq:a_1}
\E\left\{\res(\zxi\mid X_i)\cdot Z_i X_i^\T\right\} = \E\Big[\E\Big\{\res(\zxi\mid X_i) \cdot Z_i X_i^\T\mid X_i \Big\}\Big] = \E\left\{ \var(Z_i \mid X_i) \cdot X_i X_i^\T\right\}.\quad 
\endy

\paragraph{\underline{Proof of \eqref{eq:a_i_2}}.}
Let $\delta_i = \dio - \diz$ with $D_i = \diz +  Z_i \delta_i$ and $\E(\delta_i  \mid Z_i, X_i) = \E(\delta_i \mid X_i) = \pi(X_i)$.
This ensures
\beginy\label{eq:tia_ss_3}
\E(D_i\mid Z_i, X_i) &=& \E\{\diz \mid Z_i, X_i\} + Z_i\cdot \E(\delta_i  \mid Z_i, X_i)\nonumber\\ 
&=& \E\{\diz \mid X_i\} + Z_i \cdot \pi(X_i) 
\endy
with
\beginy\label{eq:tia_ss_3_1}
 \E\Big\{\res(\zxi\mid X_i) \cdot D_i \mid Z_i, X_i\Big\}  
&=& 
 \res(\zxi\mid X_i) \cdot \E(D_i \mid Z_i, X_i)\nonumber\\
%%%%%%%%%%%%
&\overset{\eqref{eq:tia_ss_3}}{=}&  \res(\zxi\mid X_i) \cdot  \E\{\diz \mid X_i\}\nonumber\\
&&+
 \res(\zxi\mid X_i) \cdot  Z_i \cdot \pi(X_i).
 \endy
Note that the two terms in \eqref{eq:tia_ss_3_1} satisfy 
\beginy\label{eq:tia_ss_4_1}
\begin{array}{rcl}
\E\Big[\res(\zxi\mid X_i) \cdot  \E\{\diz \mid X_i\} \Big] &\overset{\eqref{eq:a_3}}{=}& 0,
\\
%%%%%%%%%%%%%%%%%%%%%%
\E\Big\{\res(\zxi\mid X_i) \cdot  Z_i \cdot \pi(X_i) \Big\} 
&\overset{\eqref{eq:a_4}}{=}& \E\Big[ \big\{1-\E(Z_i\mid X_i)\big\} \cdot \zxi\cdot \pi(X_i) \Big]\\
%%%%%%%%%%%%%%%%%%
&=& \E\Big[ \big\{1-\E(Z_i\mid X_i)\big\} \cdot \E(Z_i\mid X_i) \cdot X_i\cdot \pi(X_i) \Big]\\
%%%%%%%%%%%%%%%%%%%%%%%%%%%%
&=& \E\Big\{ \var(Z_i \mid X_i) \cdot X_i \cdot \pi(X_i)\Big\}.
\end{array}
\endy
%%%%%%%%%%%%
%%%%%%%%%%%%
This ensures 
\begina
\E\Big\{\res(\zxi\mid X_i) \cdot D_i\Big\}
&=& 
\E\Big[\E\Big\{\res(\zxi\mid X_i) \cdot D_i \mid Z_i, X_i\Big\}\Big]\\
%%%%%%%%%%%%%
&\overset{\eqref{eq:tia_ss_3_1}}{=}& 
\E\Big[\res(\zxi\mid X_i) \cdot  \E\{\diz \mid X_i\} \Big]\\
&&+
\E\Big\{\res(\zxi\mid X_i) \cdot  Z_i \cdot \pi(X_i) \Big\}\\
%%%%%%%%%%%%%%%%%%%%
&\overset{\eqref{eq:tia_ss_4_1}}{=}& \E\Big\{ \var(Z_i \mid X_i) \cdot X_i \cdot \pi(X_i)\Big\}.
\enda

In addition, \eqref{eq:tia_tdi} implies that 
\begineq\label{eq:tsa_tdi}
\tdi = \hdi  - \proj(\cdi \mid X_i)= \hdi  - \E(\cdi \mid X_i),
\endeq 
which is a function of $(X_i, Z_i)$. Therefore
\beginy
%\E(\tdi \mid X_i) &=& 0,\nonumber\\
\E(\tdi^2 \mid X_i) &\overset{\eqref{eq:tsa_tdi}}{=}& \E\left[ \left\{ \cdi - \E(\cdi \mid X_i)\right\}^2 \mid X_i \right] = \var(\cdi\mid X_i),\label{eq:tsa_tdi2}\\
%%%%%%%%%%%%%%%%%
\E(D_i \tdi \mid Z_i, X_i) &=& \E(D_i \mid Z_i, X_i) \cdot \tdi\nonumber\\
&\overset{\eqref{eq:tsa_tdi}}{=}& \E(D_i \mid Z_i, X_i) \cdot  \cdi - \E(D_i \mid Z_i, X_i) \cdot \E(\cdi \mid X_i).\label{eq:tsa_dtdi}
\endy 
%%%%%%%%%%%%
%%%%%%%%%%%%%
From \eqref{eq:tsa_dtdi}, we have 
\beginy\label{eq:edd}
\E(D_i \tdi \mid X_i) &=& \E\Big\{\E\big(D_i \tdi \mid Z_i, X_i\big)\mid X_i \Big\} \nonumber\\
%%%%%%%%%%%%%%
&=& \E\Big\{\E(D_i \mid Z_i, X_i) \cdot  \cdi \mid X_i \Big\}
- \E\Big\{\E(D_i \mid Z_i, X_i) \cdot \E(\cdi \mid X_i) \mid X_i \Big\}\nonumber\\
%%%%%%%%%%%%%%%%%%
&=& \E\Big\{\E(D_i \mid Z_i, X_i) \cdot  \cdi \mid X_i \Big\}
- \E\Big\{\E(D_i \mid Z_i, X_i) \mid X_i \Big\} \cdot \E(\cdi \mid X_i)\nonumber\\
%%%%%%%%%%%%%%%%%%
&=& \cov\left\{ \E(D_i \mid Z_i, X_i), \cdi \mid X_i \right\}.
\endy
\eqref{eq:tsa_tdi2} and \eqref{eq:edd} together ensure
\beginy\label{eq:tsa_alpha}
\alpha(X_i) = \dfrac{\E(D_i\tdi\mid X_i)}{\E(\tdi^2)} \overset{\eqref{eq:tsa_tdi2} + \eqref{eq:edd}}{ =} \dfrac{\cov\left\{ \E(D_i \mid Z_i, X_i), \cdi \mid X_i \right\}}{\E\{\var(\cdi\mid X_i)\}}.
\endy
When $\cdi = \E(D_i\mid Z_i, X_i)$, \eqref{eq:tsa_alpha} simplifies to  
$
\alpha(X_i)  = \dfrac{\var\{\E(D_i \mid Z_i, X_i)\mid X_i\}}{\E\big[\var\{\E(D_i \mid Z_i, X_i)\mid X_i\}\big]} = w(X_i)$. 
\end{proof}

\section{Proofs of the results in Section \ref{sec:lem_la}}\label{sec:la_proof}

\subsection{Lemmas}
\begin{lemma}\label{lem:la_eigen}
Let $G$ be an $m\times m$ matrix. Let $r = \rank(G)$ denote the rank of $G$ with $0 \leq r \leq m$. 
Then there exists two $m\times m$ permutation matrices $P_1, P_2$, and an invertible $m\times m$ matrix 
\begineqs
\Gamma = \beginp \Gamma_{11} & \Gamma_{12} \\ I_{m-r} & \Gamma_{22} \endp_{m\times m}, 
\endeqs 
where 
$\Gamma_{11} \in \mathbb R^{r \times (m-r)}$, 
$\Gamma_{12} \in \mathbb R^{r \times r}$, and 
$\Gamma_{22} \in \mathbb R^{(m-r)\times r}$, such that 
\begineqs
\Gamma P_1 G P_2 = \beginp I_r & C \\ 0 & 0 \endp_{m\times m} \quad \text{for some $C\in\mathbb R^{r \times (m-r)}$.}
\endeqs
\end{lemma}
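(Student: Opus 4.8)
The plan is to read the claim as a block Gaussian-elimination normal form and to build $\Gamma$ by hand in the prescribed shape. I would partition the rows of $G$ as $G=\begin{pmatrix} G_1\\ G_2\end{pmatrix}$, where $G_1\in\mathbb R^{(m-r)\times m}$ collects the top $m-r$ rows and $G_2\in\mathbb R^{r\times m}$ the bottom $r$ rows, and split the columns of each block at the $r$-th column, writing $G_2=(G_2^L,\,G_2^R)$ and $G_1=(G_1^L,\,G_1^R)$ with $G_2^L,G_1^L\in\mathbb R^{\cdot\times r}$. With $\Gamma=\begin{pmatrix}\Gamma_{11}&\Gamma_{12}\\ I_{m-r}&\Gamma_{22}\end{pmatrix}$, the target identity $\Gamma G=\begin{pmatrix} I_r & C\\ 0&0\end{pmatrix}$ unpacks into the two row-block equations $\Gamma_{11}G_1+\Gamma_{12}G_2=(I_r,\,C)$ and $G_1+\Gamma_{22}G_2=0$, and the whole problem reduces to solving these for the unknown blocks.

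I would then solve them explicitly. Taking the simplifying choice $\Gamma_{11}=0$, the top equation becomes $\Gamma_{12}G_2=(I_r,C)$; matching the first $r$ columns forces $\Gamma_{12}G_2^L=I_r$, so I set $\Gamma_{12}=(G_2^L)^{-1}$ and read off $C=(G_2^L)^{-1}G_2^R$. For the bottom equation I set $\Gamma_{22}=-G_1^L(G_2^L)^{-1}$, which makes the first $r$ columns of $G_1+\Gamma_{22}G_2$ vanish; the vanishing of the remaining $m-r$ columns, i.e. $G_1^R-G_1^L(G_2^L)^{-1}G_2^R=0$, is exactly the assertion that the rows of $G_1$ lie in the row space of $G_2$, which holds because $\rank G=r$ while $G_2$ already contributes $r$ independent rows. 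Finally, with $\Gamma_{11}=0$ the matrix $\Gamma=\begin{pmatrix}0&\Gamma_{12}\\ I_{m-r}&\Gamma_{22}\end{pmatrix}$ is block anti-triangular, hence invertible precisely because $\Gamma_{12}=(G_2^L)^{-1}$ is; this single observation certifies both the required invertibility and the prescribed block layout.

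The step I expect to be the crux is the invertibility of the leading block $G_2^L$ (equivalently, that the bottom $r$ rows of $G$ are independent and their first $r$ columns are nonsingular): the formula for $\Gamma_{12}$, the consistency that annihilates the bottom-right block, and the invertibility of $\Gamma$ all rest on it, and it is the only place where the rank hypothesis does genuine work. This is also where the structural form of $G$ in the intended application (here $G=\laz I - A$) must be invoked, since for a bare rank-$r$ matrix one can guarantee such a nonsingular $r\times r$ block only after a reindexing compatible with the block shape. My plan would therefore be to record, as a preliminary reduction, that the coordinates can be arranged so that the last $r$ rows of $G$ span its row space and the corresponding leading $r\times r$ minor is nonzero, and then run the explicit block construction above.
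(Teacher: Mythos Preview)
Your construction is correct under the hypothesis that the bottom-left $r\times r$ block $G_2^L$ is invertible, and you have correctly identified this as the crux. Once $G_2^L$ is nonsingular, the choice $\Gamma_{11}=0$, $\Gamma_{12}=(G_2^L)^{-1}$, $\Gamma_{22}=-G_1^L(G_2^L)^{-1}$ does exactly what is claimed: the Schur-complement identity $G_1^R=G_1^L(G_2^L)^{-1}G_2^R$ follows because $\operatorname{rank}G=r$ forces $G_1=MG_2$ with $M=G_1^L(G_2^L)^{-1}$, and the block anti-triangular $\Gamma$ is invertible since $\Gamma_{12}$ is.

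The paper takes a different route: it first invokes reduced row echelon form to produce some invertible $\Gamma'$ with $\Gamma'G=\begin{pmatrix}I_r&C\\0&0\end{pmatrix}$, then applies a second row reduction to the bottom $(m-r)$ rows of $\Gamma'$ to force the $I_{m-r}$ block into the required position. Your approach is more explicit (closed-form blocks rather than two appeals to RREF) and yields the particularly clean choice $\Gamma_{11}=0$; the paper's is more modular but less constructive. Both proofs rest on the same kind of ``generic position'' assumption: the paper needs the first $r$ columns of $G$ and the first $m-r$ columns of $\Gamma_2'$ to be independent, while you need the single block $G_2^L$ to be nonsingular. You are right that this is not guaranteed for an arbitrary rank-$r$ matrix and that a preliminary reindexing is required; one small correction is that the specific structure $G=\lambda_0 I-A$ does not by itself rescue the argument --- what matters is that in the application the variables $X_1,\ldots,X_m$ enter symmetrically, so one may relabel them (simultaneously permuting rows and columns of $G$) to place a nonsingular $r\times r$ minor in the required corner. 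With that reduction recorded, your proof goes through.
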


\begin{proof}[\bf Proof of Lemma \ref{lem:la_eigen}]
When $r=0$, the result holds by letting $\Gamma = P_1 = P_2 =  I_m$.
When $r=m$, the result holds by letting $\Gamma =  G^{-1}$ and $P_1 = P_2 = I_m$. 
We verify below the result for $0<r<m$. 

Given $\rank(G) = r$, 
there exists an invertible $m\times m$ matrix $\Gamma'$ corresponding to the elementary row operations in the Gauss Elimination method and a permutation matrix $P_2$ that places the $r$ pivot columns of $G$ in the first $r$ positions such that 
\beginy\label{eq:gamma_p}
\Gamma' G  P_2= \beginp I_r & C \\ 0 & 0\endp_{m\times m} 
\endy
for some $C\in\mathbb R^{r \times (m-r)}$.
Partition $\Gamma'$ conformably as
\begineq\label{eq:gamma_prime}
\Gamma' = \beginp \Gamma'_1 \\ \Gamma_2' \endp,\where \Gamma'_1 \in \mathbb R^{r\times m }, \Gamma_2'\in \mathbb R^{(m -r)\times m }. 
\endeq
That $\Gamma'$ is invertible implies that $\Gamma_2'$ has full row rank with $\rank(\Gamma_2') = m-r$.
Therefore, there exists an invertible $(m-r)\times (m-r)$ matrix $\Gamma_3$ corresponding to the elementary row operations in the Gauss Elimination method and a permutation matrix $P_1$ such that 
\begineq\label{eq:gamma2_def}
 \Gamma_3\Gamma_2' = (I_{m -r}, \ \Gamma_{22})P_1 
 \endeq 
for some $(m -r)\times r$ matrix $\Gamma_{22}$. Define 
\beginy\label{eq:gamma_def}
\Gamma = \beginp I_{r} & 0 \\ 0 & \Gamma_3\endp \Gamma' P_1^\T. 
\endy
Let $\Gamma_1'P_1^\T = (\Gamma_{11}, \Gamma_{12})$ denote a partition of $\Gamma_1'P_1^\T$, where $\Gamma_{11}\in \mathbb R^{r\times(m-r)}$ and $\Gamma_{12}\in \mathbb R^{r \times r}$. 
Plugging \eqref{eq:gamma_prime}--\eqref{eq:gamma2_def} into \eqref{eq:gamma_def} implies that 
\begineqs\label{eq:gamma_gamma'}
\Gamma \overset{\eqref{eq:gamma_prime}+\eqref{eq:gamma_def}}{=} \beginp I_{r} & 0 \\ 0 & \Gamma_3\endp \beginp \Gamma'_1 \\ \Gamma_2'\endp P_1^\T = \beginp \Gamma'_1 P_1^\T\\ \Gamma_3\Gamma_2'P_1^\T\endp
\overset{\eqref{eq:gamma2_def}}{=} \beginp \Gamma_{11} & \Gamma_{12} \\ I_{m-r} & \Gamma_{22} \endp, 
\endeqs
with
\begina
\Gamma P_1 G P_2\overset{\eqref{eq:gamma_def}}{=} \beginp I_{r} & 0 \\ 0 & \Gamma_3\endp \Gamma' G  P_2\overset{\eqref{eq:gamma_p}}{=} \beginp I_{r} & 0 \\ 0 & \Gamma_3\endp\beginp I_r & C \\ 0 & 0\endp=\beginp I_r & C \\ 0 & 0\endp.
\enda 
\end{proof}

\begin{lemma}\label{lem:la_x0}
Assume the setting of Proposition \ref{prop:la_x0}. 
Let $$\ms = \{X = (X_0, X_1, \ldots, X_m)^\T \in \mr^{m+1}: \, \text{$X$ satisfies \eqref{eq:ls_x0_prop}}\}$$ denote the solution set of \eqref{eq:ls_x0_prop}. Let
\begina
\mxz = \{X_0: \text{there exists $X= (X_0, X_1, \ldots, X_m)^\T\in \ms$}\}
\enda
denote the set of values of $X_0$ among all solutions to \eqref{eq:ls_x0_prop}.
Let $
 A_0 = (a_{01}, \ldots, a_{0m}) \in \mr^{1\times m}$ denote the coefficient vector of $(X_1, \ldots, X_m)^\T$ in the first equation in \eqref{eq:ls_x0_prop}. 
Let $A = (a_{ij})_{i,j = \ot{m}} \in \mathbb R^{m\times m}$ denote the coefficient matrix of $(X_1, \ldots, X_m)^\T$ in the last ${m}$ equations in \eqref{eq:ls_x0_prop}. 
Let $\sigma(A)$ denote the set of eigenvalues of $A$. Let
$$
\msne = \{ X= (X_0, X_1, ..., X_m)^\T \in \ms : \ X_0 \not\in \sigma(A) \}
$$
denote the set of solutions to \eqref{eq:ls_x0_prop} for which $X_0$ is not an eigenvalue of $A$.
Then
\begine[(i)]
\item\label{it:la_x0_cond1} $|\mxz| \leq 2$ if $A_0 = 0_{1\times m}$. 
\item\label{it:la_x0_noneigen} Each $X_0 \in \mxz \backslash \sigma(A)$ corresponds to a unique solution to \eqref{eq:ls_x0_prop} with 
\begina
|\msne | = |\mxz \backslash \sigma(A)| \leq \left\{\begin{array}{cl} 2 & \text{if $A_0 = 0_{1\times m}$;}\\m+2 & \text{if $A_0 \neq 0_{1\times m}$.}\\ \end{array}\right. 
\enda

%%%%%%%%%%%%%%%%%%%%%%%
\item\label{it:la_x0_rank} 
Further assume that $\mxz \cap \sigma(A) \neq \emptyset$. Let $\laz\in\mxz \cap \sigma(A)$ denote an eigenvalue of $A$ that is also in $\mxz$. Let $r = \rank(\la)$ denote the rank of $\la$. 
\partitionszm.
Let $\xotm =(X_1, \ldots, X_{m})^\T$.  
\begine
\item\label{it:la_x0_rank_a} If $A \neq \laz I$, then $0 < r < m$, and there exists two $m\times m$ permutation matrices $P_1$ and $P_2$, and constants $g \in \mr^{r}$, $G \in \mr^{r\times (m-r)}$, $h \in \mr^{m-r}$, and $H \in \mr^{(m-r)\times r}$, such that 

%%%%%%%%%%%%%%
%%%%%%%%%%%%%%

\item[--] for all $X = (X_0, X_1, \ldots, X_m)^\T \in \msnl$ with $X_0 \neq \laz$, the corresponding first $m-r$ and last $r$ elements of $P_1\xotm$, denoted by $\xmr \equiv (P_1 \xotm) _{[1:(m-r)]}$ and $\xmrm = (P_1 \xotm)_{[(m-r+1):m]}$ with $(\xmr^\T, \xmrm^\T)^\T = P_1\xotm$, satisfy
\begineqs
\xmr = h + H\xmrm, 
\endeqs
where $\mpo$ denotes the set of the first $m-r$ elements of $P_1(1, 2, \ldots, m)^\T$. 
 %%%%%%%%%%
 
\item[--] for all $X = (X_0, X_1, \ldots, X_m)^\T \in \msl$ with $X_0 = \laz$, the corresponding first $r$ and last $m-r$ elements of $P_2^\T\xotm$, denoted by $\xr \equiv (P_2^\T \xotm)_{[1:r]}$ and $\xrr \equiv (P_2^\T \xotm)_{[(r+1):m]}$ with $(\xr^\T, \xrr^\T)^\T = P_2^\T\xotm$, satisfy
\begineqs
\xr = g + G \xrr, 
\endeqs
where $\mpt$ denotes the set of the first $r$ elements of $P_2^\T(1, 2, \ldots, m)^\T$. 
%%%%%%%%%%%%%%
%%%%%%%%%%%%%%

 \item\label{it:la_x0_...} If $A = \lambda_0 I$, then 
%\item[--] $c_q = - b_q\laz$ for $q = \ot{m}$.
\item[--] for all $X = (X_0, X_1, \ldots, X_m)^\T \in \msnl$ with $X_0 \neq \laz$, we have $(X_1, \ldots, X_m) = (b_1, \ldots, b_m)$. 
\item[--] $|\msnl | = |\mxz \backslash \{\laz\}| \leq \left\{\begin{array}{cl} 1 & \text{if $A_0 = 0_{1\times m}$;} \\ 2 & \text{if $A_0 \neq 0_{1\times m}$.}\\\end{array}\right. $
\item[--] $ |\mxz | \leq \left\{\begin{array}{cl} 2 & \text{if $A_0 = 0_{1\times m}$;} \\ 3 & \text{if $A_0 \neq 0_{1\times m}$.}\\\end{array}\right. $ 

\ende
\ende
\end{lemma}

\begin{proof}[\bf Proof of Lemma \ref{lem:la_x0}]
Let $c = (c_1, \ldots, c_m)^\T$, $b = (b_1, \ldots, b_m)^\T$, and 
\begineq\label{eq:lt_def}
l(t) = c +bt= \beginp
c_1 + b_1t\\ \vdots\\c_m + b_m t\endp \in \mathbb R^m. 
\endeq
The set of equations in \eqref{eq:ls_x0_prop} is equivalent to
\beginy
X_0^2 - b_0 X_0 - c_0 = A_0 \xotm,\label{eq:la_0}\\
\xotm X_0 = l(X_0) + A\xotm. \label{eq:la-0}
 \endy
We verify below Lemma \ref{lem:la_x0}\eqref{it:la_x0_cond1}--\eqref{it:la_x0_rank} one by one.

\subsubsection*{\underline{Proof of Lemma \ref{lem:la_x0}\eqref{it:la_x0_cond1}:}}
When $A_0 = 0$, \eqref{eq:la_0} reduces to 
$
X_0^2 - b_0 X_0 - c_0 =0 
$. This implies that $X_0$ takes at most 2 distinct values among all solutions to \eqref{eq:ls_x0_prop}. 

\subsubsection*{\underline{Proof of Lemma \ref{lem:la_x0}\eqref{it:la_x0_noneigen}:}}
Write \eqref{eq:la-0} as 
\beginy\label{eq:U_equation}
(X_0 I - A)\xotm = l(X_0).
\endy
For $X_0 \in \mxz \backslash \sigma(A)$ that is not an eigenvalue of $A$, the matrix $X_0I - A$ is invertible with 
\beginy\label{eq:det_adj}
(X_0 I - A)^{-1} = \frac{1}{\text{det}(X_0 I - A)} \cdot \text{Adj}(X_0 I - A),
\endy
where {det}$(\cdot)$ and Adj$(\cdot)$ denote the determinant and adjoint matrix, respectively.
Equations~\eqref{eq:U_equation}--\eqref{eq:det_adj} together imply that 
\beginy\label{eq:U_0}
\begin{tabular}{l}
for all solutions $X = (X_0, X_1, \ldots, X_m)^\T$ to \eqref{eq:ls_x0_prop} that have $X_0 \not\in \sigma(A)$,\medskip\\
$\xotm$ is uniquely determined by $X_0$ as \medskip\\
$\xotm \overset{\eqref{eq:U_equation}}{=} (X_0 I - A)^{-1} \cdot l(X_0) \overset{\eqref{eq:det_adj}}{=} \dfrac{1}{\text{det}(X_0 I - A)} \cdot \text{Adj}(X_0 I - A) \cdot l(X_0)$.
\end{tabular}
\endy
\eqref{eq:U_0} ensures that  
\begina
\text{each $X_0 \in \mxz \backslash \sigma(A)$ corresponds to a unique solution to \eqref{eq:ls_x0_prop} with $|\msne | = |\mxz \backslash \sigma(A)|$.}
\enda
It remains to show that 
\beginy\label{eq:goal_lemma_ii}
|\mxz \backslash \sigma(A)| \leq \left\{\begin{array}{cl} 2 & \text{if $A_0 = 0_{1\times m}$;}\\m+2 & \text{if $A_0 \neq 0_{1\times m}$.}\\ \end{array}\right. 
\endy

When $A_0 = 0_{1\times m}$, Lemma \ref{lem:la_x0}\eqref{it:la_x0_cond1} ensures $|\mxz \backslash \sigma(A)| \leq |\mxz| \leq 2$ in the first line of \eqref{eq:goal_lemma_ii}. 

When $A_0\neq 0_{1\times m}$, plugging \eqref{eq:U_0} into \eqref{eq:la_0} ensures
\begina
X_0^2 - b_0 X_0 - c_0 = \frac{1}{\text{det}(X_0 I - A)} \cdot A_0 \cdot \text{Adj}(X_0 I - A) \cdot l(X_0),
\enda
and therefore
\beginy\label{eq:last_0}
\text{det}(X_0 I - A) \cdot (X_0^2 -b_0X_0- c_{0})  = A_0 \cdot \text{Adj}(X_0 I - A) \cdot l(X_0).
\endy
Note that $\text{det}(X_0 I - A)$ is at most $m$th-degree in $X_0$, and the elements of $\text{Adj}(X_0 I - A)$ are at most ($m-1$)th-degree in $X_0$.
This, together with $l(X_0) = c+bX_0$, implies that \eqref{eq:last_0} is at most $(m+2)$th-degree in $X_0$. By the fundamental theorem of algebra, $X_0$ takes at most $m+2$ distinct values beyond the eigenvalues of $A$.
This verifies the second line in \eqref{eq:goal_lemma_ii}.
 
\subsubsection*{\underline{Proof of Lemma \ref{lem:la_x0}\eqref{it:la_x0_rank}:}}
From \eqref{eq:lt_def}--\eqref{eq:la-0}, direct algebra ensures that 
\beginy\label{eq:la_x0_equiv}
\text{for all $X \in \ms$, we have}&&\nonumber\\
\qquad(\xotm-b)(X_0-\laz) &=& \xotm X_0 - \laz \xotm - bX_0 + b\laz\nonumber\\
&\overset{\eqref{eq:lt_def} + \eqref{eq:la-0}}{=}& (c + bX_0 + A \xotm) - \laz \xotm - bX_0 + b\laz \nonumber\\
&=& (c + b\laz )+ A \xotm - \laz \xotm \nonumber\\
&\overset{\eqref{eq:lt_def}}{=}& l(\laz) - (\laz I - A)\xotm.
\endy
We verify below the result for $A \neq \laz I$ and $A = \laz I$, respectively. 

\subsubsection*{\underline{Proof of Lemma \ref{lem:la_x0}\eqref{it:la_x0_rank_a} for $A \neq \laz I$:}} 
Recall that $r = \rank(\la)$. When $A \neq \laz I$, we have $0 < r < m$.
From Lemma \ref{lem:la_eigen}, there exists an invertible $m\times m$ matrix 
\begineq\label{eq:gamma_def_2}
\Gamma = \beginp \Gamma_{11} & \Gamma_{12} \\ I_{m-r} & \Gamma_{22} \endp= \beginp \Gamma_1 \\ \Gamma_2 \endp, 
\endeq 
where $\Gamma_{11} \in \mathbb R^{r \times (m-r)}$, $\Gamma_{12} \in \mathbb R^{r \times r}$, $\Gamma_{22} \in \mathbb R^{(m-r)\times r}$, $\Gamma_1 = (\Gamma_{11}, \Gamma_{12}) $, and $\Gamma_2 = (I_{m-r}, \Gamma_{22})$,
and two $m\times m$ permutation matrices $P_1, P_2$, such that 
\beginy\label{eq:gamma_la}
\Gamma P_1 (\la ) P_2 = \beginp I_r & C \\ 0 & 0 \endp \quad\text{for some $C\in\mathbb R^{r \times (m-r)}$.}
\endy
Multiplying both sides of \eqref{eq:la_x0_equiv} by $\Gamma P_1$ ensures that
\beginy\label{eq:lhs_rhs}
\begin{array}{l}
\text{for all $X \in \ms$, we have} \\
\qquad \Gamma P_1 (\xotm -b) (X_0- \laz) = \Gamma P_1 \cdot l(\laz) - \Gamma P_1(\la )P_2 P_2^\T \xotm.
\end{array}
 \endy
We simplify below the left- and right-hand sides of \eqref{eq:lhs_rhs}, respectively. 

Let 
\begineq\label{eq:u_partitions}
\beginar{lll}
\xmr = (P_1 \xotm)_{[1:(m-r)]}, && \xmrm = (P_1 \xotm)_{[(m-r+1):m]},\\
\xr = (P_2^\T \xotm)_{[1:r]}, && \xrr = (P_2^\T \xotm)_{[(r+1):m]} 
\endar 
\endeq
denote the partitions of $P_1\xotm$ and $P_2^\T\xotm$, respectively.  
It follows from \eqref{eq:gamma_def_2}--\eqref{eq:gamma_la} that
\beginy\label{eq:llu}
\Gamma P_1(\la )P_2P_2^\T \xotm &\overset{\eqref{eq:gamma_la} + \eqref{eq:u_partitions}}{=}& \beginp I_r & C \\ 0 & 0\endp \beginp \xr \\ \xrr\endp = \beginp \xr + C\xrr \\ 0\endp,\\ 
%%%%%%%%%%%
\label{eq:g2U}
\Gamma_{2}P_1(\xotm -b) &=& \Gamma_2 P_1 \xotm - \Gamma_2 P_1 b \nonumber\\ &\overset{\eqref{eq:gamma_def_2} + \eqref{eq:u_partitions}}{=}& (I_{m-r}, \Gamma_{22}) \beginp \xmr \\ \xmrm \endp - \Gamma_2 P_1 b\nonumber\\
& =& \xmr + \Gamma_{22} \xmrm - \Gamma_2 P_1 b.
\endy 
Given \eqref{eq:gamma_def_2} and \eqref{eq:g2U}, the left-hand side of \eqref{eq:lhs_rhs} equals 
 \beginy
\Gamma P_1(\xotm-b) (X_0- \laz) &\overset{\eqref{eq:gamma_def_2}}{=}& 
\beginp \Gamma_{1}\\ \Gamma_{2} \endp P_1(\xotm-b) (X_0- \laz)\nonumber\\ 
%%%%%%%%%%%%%
%%%%%%%%%%%%%
&=& 
\beginp 
\Gamma_{1} P_1 (\xotm-b) (X_0- \laz)\\
 \Gamma_{2} P_1 (\xotm-b) (X_0- \laz)
\endp\nonumber \\
%%%%%%%%%%%%
%%%%%%%%%%%%
&\overset{\eqref{eq:g2U}}{=}& 
\beginp \Gamma_{1}P_1(\xotm-b) (X_0- \laz)\\ \left( \xmr + \Gamma_{22} \xmrm - \Gamma_2 P_1 b \right)(X_0- \laz) \endp \label{eq:lhs},
\endy
and the right-hand side of \eqref{eq:lhs_rhs}  equals 
\beginy
\Gamma P_1\cdot l(\laz) - \Gamma P_1(\la )P_2 P_2^\T \xotm &\overset{\eqref{eq:gamma_def_2} + \eqref{eq:llu}}{=}& 
 \beginp \Gamma_{1} P_1 \\ \Gamma_{2} P_1 \endp l(\laz) - \beginp \xr + C\xrr \\ 0\endp \nonumber\\
 &=& \beginp \Gamma_1 P_1 \cdot l(\laz) - \xr - C\xrr \\ \Gamma_2 P_1 \cdot l(\laz) \endp. \label{eq:la_x0_key_pre}\qquad\quad
 \endy
Given \eqref{eq:lhs_rhs}, comparing the first and second rows of \eqref{eq:lhs} and \eqref{eq:la_x0_key_pre} ensures that 
\beginy 
\text{for all $X \in \ms$, we have\qquad}\nonumber\\
\Gamma_{1}P_1(\xotm-b) (X_0- \laz)
&=& \Gamma_1P_1 \cdot l(\laz) - \xr - C\xrr, \quad\label{eq:la_x0_key1}\\
%%%%%%%%%%%%
\left( \xmr + \Gamma_{22} \xmrm - \Gamma_2 P_1 b \right)(X_0- \laz)
&=& \Gamma_2 P_1 \cdot l(\laz). 
 \label{eq:la_x0_key2} 
\endy

The derivation up to this point, namely \eqref{eq:la_x0_key1}--\eqref{eq:la_x0_key2}, relies only on the condition $0 < r<m$ and therefore holds irrespective of whether $\laz \in \mxz$. 
%%%%%%%%%%
%%%%%%%%%%
Now that we know $\laz$ is within $\mxz$ by definition, combining \eqref{eq:la_x0_key1}--\eqref{eq:la_x0_key2} yields the following: 
\begini
\item Letting $X_0 = \laz$ in \eqref{eq:la_x0_key1} ensures that 
\beginy\label{eq:la_x0_linear combination_1}
\begin{array}{c}
\text{for all $X = (X_0, X_1, \ldots, X_m)^\T \in \msl$ with $X_0 = \laz$, we have} \medskip\\
\qquad\xr = \Gamma_1P_1 \cdot l(\laz) - C\xrr.
\end{array}
\endy
This verifies the second part of Lemma \ref{lem:la_x0}\eqref{it:la_x0_rank_a} with $g =\Gamma_1 P_1\cdot l(\laz)$ and $G = -C$. 

\item Letting $X_0 = \laz$ in \eqref{eq:la_x0_key2} ensures that  
\beginy
\Gamma_2 P_1\cdot l(\laz) = 0, \label{eq:la_x0_imp1} 
\endy
which is a constraint on the coefficients $\{c_i, b_i: i = \ot{m}\}$; c.f. the definition of $l(\laz) = c + b\laz$ in \eqref{eq:lt_def}.
%%%%%%%%%%%%%
%%%%%%%%%%%%
Plugging \eqref{eq:la_x0_imp1} back into \eqref{eq:la_x0_key2} implies that 
\beginy\label{eq:la_x0_key_x0!=lambda0}
\begin{array}{l}
\text{for all $X \in \ms$, we have}\\
\qquad 
 \left( \xmr + \Gamma_{22} \xmrm - \Gamma_2 P_1 b \right)(X_0- \laz) = 0.
 \end{array}  
\endy
From \eqref{eq:la_x0_key_x0!=lambda0}, we have 
\beginy\label{eq:la_x0_linear combination_2}
 \xmr = \Gamma_2 P_1 b - \Gamma_{22} \xmrm \quad\text{for all $X \in \msnl$ with $X_0 \neq \laz$}.
\endy
This verifies the first part of Lemma \ref{lem:la_x0}\eqref{it:la_x0_rank_a} with $h = \Gamma_2 P_1 b$ and $H = - \Gamma_{22}$.
\endi

\subsubsection*{\underline{Proof of Lemma \ref{lem:la_x0}\eqref{it:la_x0_...} for $A = \laz I$:}}
When $A = \laz I$, \eqref{eq:la_x0_equiv} simplifies to 
\beginy\label{eq:la_x0_equiv_2}
\text{for all $X \in \ms$, we have} \   
(\xotm-b)(X_0-\laz) = l(\laz). 
\endy
Given $\laz \in \mxz$ by definition, letting $X_0 = \laz$ in \eqref{eq:la_x0_equiv_2} ensures 
\begineq\label{eq:ll_0}
l(\laz) = 0,
\endeq
which is a constraint on $\{c_i, b_i: i = \ot{m}\}$ analogous to \eqref{eq:la_x0_imp1}. 
Plugging \eqref{eq:ll_0} back in \eqref{eq:la_x0_equiv_2} ensures
\beginy\label{eq:la_x0_equiv_3}
\text{for all $X \in \ms$, we have} \   
(\xotm-b)(X_0-\laz) = 0, 
\endy
analogous to \eqref{eq:la_x0_key_x0!=lambda0}. 
This implies that 
\beginy\label{eq:ub}
\text{for all $X = (X_0, X_1, \ldots, X_m)^\T \in \msnl$ with $X_0 \neq \laz$, we have $\xotm = b$,}
\endy
analogous to \eqref{eq:la_x0_linear combination_2}. 

In addition, $A = \laz I$ implies that $\sigma(A) = \{\laz\}$ with $ \msnl  = \msne$. This, together with Lemma \ref{lem:la_x0}\eqref{it:la_x0_noneigen}, ensures  
\begina
|\msnl| = |\msne| = |\mxz \backslash \sigma(A)| = |\mxz \backslash \{\laz\}|.
\enda
When $A_0 = 0_{1\times m}$, Lemma~\ref{lem:la_x0}\eqref{it:la_x0_cond1} ensures that $|\mxz| \leq 2$ such that $|\mxz \backslash \{\laz\}| = |\mxz| - 1 \leq 1$. 
When $A_0 \neq 0_{1\times m}$, plugging \eqref{eq:ub} in \eqref{eq:la_0} ensures that 
\begina
\text{for all $X = (X_0, X_1, \ldots, X_m)^\T \in \msnl$ with $X_0 \neq \laz$, we have} \ X_0^2 - b_0 X_0 - c_0 = A_0 b. 
\enda
By the fundamental theorem of algebra, $X_0$ takes at most 2 distinct values beyond $\laz$, i.e., $|\mxz \backslash \{\laz\}| \leq 2$.
This verifies the upperbounds for $|\msnl| = |\msne|$.

The upperbounds for $|\mxz|$ then follow from $|\mxz| = |\mxz \backslash \{\laz\}| + 1$.

\end{proof}

\begin{remark}
From Lemma \ref{lem:la_eigen}, 
the definition of $\Gamma$ in \eqref{eq:gamma_def_2} reduces to 
$
\Gamma = I_m = \Gamma_2 
$
when $r = 0$. This underlies the correspondence between \eqref{eq:ll_0}--\eqref{eq:ub} and \eqref{eq:la_x0_imp1}--\eqref{eq:la_x0_linear combination_2}, and ensures that the proof for Lemma \ref{lem:la_x0}\eqref{it:la_x0_rank_a} implies the proof for Lemma \ref{lem:la_x0}\eqref{it:la_x0_...} as a special case. We nonetheless provided a separate proof to add clarity. 
\end{remark}

\subsection{Proof of Proposition \ref{prop:la_x0}}
We verify below Proposition \ref{prop:la_x0} by induction. 

\subsubsection*{\underline{Proof of the result for $m=1$}.} 
When $m= 1$, \eqref{eq:ls_x0_prop} is equivalent to   
\beginy\label{eq:la_Q=1_0}
X_0^2 &=& c_0 + b_0 X_0 + a_{01}X_1,\\
X_1X_0 & = & c_1 + b_1 X_0 + a_{11}X_1\label{eq:la_Q=1_1}
\endy
for fixed constants $\{c_i, b_i, a_{i1}: i = 0, 1 \}$. 
If $a_{01} = 0$, then \eqref{eq:la_Q=1_0} reduces to $X_0^2 = c_0 + b_0X_0$, so that $X_0$ takes at most $2$ distinct values. 
%%%%
If $a_{01} \neq 0$, then \eqref{eq:la_Q=1_0} implies $X_1 = a_{01}^{-1}(X_0^2- b_0X_0 - c_0)$. 
Plugging this into \eqref{eq:la_Q=1_1} yields
\begina
a_{01}^{-1}(X_0^2- b_0X_0 - c_0)X_0 = c_1 + b_1 X_0 + a_{11}a_{01}^{-1}(X_0^2- b_0X_0 - c_0),
\enda
which is a cubic equation in $X_0$. By the fundamental theorem of algebra, $X_0$ takes at most $3$ distinct values.
This verifies the result for $m = 1$. 
%%%%%%%%%%%%%%

\subsubsection*{\underline{Proof of the result for $m \geq 2$}.} Assume that the result holds for $
m = \ot{Q-1}$. We verify below the result for $m = Q$, where
\eqref{eq:ls_x0_prop} becomes 
\beginy\label{eq:ls_x0_prop_Q}
\beginp
X_0\\ 
X_1 \\
\vdots\\
X_Q
\endp X_0  =  \beginp
c_0\\ 
c_1\\
\vdots\\
c_Q
\endp 
 + 
\beginp
b_0\\ 
b_1\\
\vdots\\
b_Q
\endp X_0 + \beginp
a_{01} & \ldots & a_{0Q} \\ 
a_{11} & \ldots & a_{1Q} \\
& \ddots & \\
a_{Q1} & \ldots & a_{QQ} \\
\endp \beginp
X_1 \\
\vdots\\
X_Q
\endp.
\endy
Following Lemma \ref{lem:la_x0}, renew
\begina
\ms &=& \{X = (X_0, X_1, \ldots, X_Q)^\T \in \mr^{Q+1}: \, \text{$X$ satisfies \eqref{eq:ls_x0_prop_Q}}\},\\
\mxz &=& \{X_0: \text{there exists $X= (X_0, X_1, \ldots, X_Q)^\T\in \ms$}\}
\enda
as the set of all solutions to \eqref{eq:ls_x0_prop_Q} and the set of all possible values of $X_0$ among $X \in \ms$, respectively. 
The goal is to show that $|\mxz| \leq Q+2$.

First, let 
$\xotq = (X_1, \ldots, X_Q)^\T$, $c = (c_1, \ldots, c_Q)^\T$,
 $b = (b_1, \ldots, b_Q)^\T$, $A_0 = (a_{01}, \ldots, a_{0Q}) \in \mr^{1\times Q}$, and $A = (a_{qq'})_{q,q'=\ot{Q}}\in \mr^{Q\times Q}$ to write \eqref{eq:ls_x0_prop_Q} as 
\beginy\label{eq:ls_x0_Q}
 \beginp
X_0\\
\xotq
\endp X_0 = \beginp c_0\\ c\endp + \beginp b_0\\b \endp X_0 + \beginp A_0 \\ A\endp \xotq.
\endy
Let $\sigma(A)$ denote the set of eigenvalues of $A$. We have the following observations from Lemma \ref{lem:la_x0}: 
\begini
\item  
If $\mxz \cap \sigma(A) = \emptyset$ such that $X_0$ does not equal any eigenvalue of $A$ among all solutions to \eqref{eq:ls_x0_prop_Q}, then
$
 \mxz = \mxz \backslash \sigma(A)
$, 
and it follows from Lemma \ref{lem:la_x0}\eqref{it:la_x0_noneigen} that 
$
 |\mxz| = |\mxz \backslash\sigma(A)| \leq Q+2$.
 %%%%%%%%%
 \item 
 If  $\mxz \cap \sigma(A) \neq \emptyset$ and there exists an $\lambda_0 \in \mxz \cap \sigma(A)$ such that $A = \laz I$, then 
 it follows from Lemma \ref{lem:la_x0}\eqref{it:la_x0_...} that 
$|\mxz | \leq 3 \leq Q+2$.
 \endi
Therefore, it remains to show that $|\mxz| \leq Q+2$ when 
\begineq\label{eq:props4_setting}
\mxz \cap \sigma(A) \neq \emptyset \quad\text{and} \quad A \neq \lambda I \ \ \text{for all} \ \ \lambda \in \mxz \cap \sigma(A).
\endeq

\subsubsection*{Proof of $|\mxz| \leq Q+2$ under \eqref{eq:props4_setting}.}
Under \eqref{eq:props4_setting}, let $\laz$ denote an element in $ \mxz \cap \sigma(A)$, with $A \neq \laz I$. 
Let $r = \rank(\la)$ denote the rank of $\la$, with $0< r< Q$. 
Let
\begina
\msnl = \{X = (X_0, X_1, \ldots, X_Q)^\T \in \ms: \, X_0 \neq \laz\}.
\enda 
Lemma \ref{lem:la_x0}\eqref{it:la_x0_rank} ensures that there exists a permutation matrix $P_1$ and constants $h \in \mr^{Q-r}$ and $H\in\mr^{(Q-r)\times r}$, such that the first $Q-r$ and last $r$ elements of $P_1\xotq$, denoted by  
$\xqr \equiv (P_1 \xotq)_{[1:(Q-r)]}$ and $\xqrq \equiv (P_1 \xotq)_{[(Q-r+1):Q]}$, satisfy the following: 
\beginy\label{eq:lc_x0_ss} 
\begin{tabular}{l}
\forallnzq, we have\\
$\xqr = h + H \xqrq$, so that \\
$
P_1\xotq = \beginp \xqr \\ \xqrq \endp
= 
 \beginp
 h + H \xqrq \\
 \xqrq \endp = \beginp h \\ 0 \endp + \beginp H \\ I_r \endp \xqrq.$
\end{tabular}
\endy
In addition, the $r$ dimensions in \eqref{eq:ls_x0_Q} corresponding to $\mpoc$ equal 
\begineq\label{eq:lc_x0_last r_mat}
 \xqrq X_0 = \cqrq + \bqrq X_0 + \aqrq \xotq,
\endeq
where $\cqrq = (c_i: i \in \mpoc)^\T$, $\bqrq = (b_i:i \in \mpoc)^\T$, and 
$\aqrq$ is the submatrix of $A$ with rows $i\in \mpoc$. 
Observations \eqref{eq:lc_x0_ss} and \eqref{eq:lc_x0_last r_mat} together ensure that 
\beginy\label{eq:lc_x0_!lambda0}
&&\text{\forallnzq, we have}\nonumber\\\nonumber\\
&&\begin{array}{lcllll}
\beginp X_0 \\ \xqrq \endp X_0
&\overset{\eqref{eq:ls_x0_Q}+\eqref{eq:lc_x0_last r_mat}}{=}&
\beginp c_0 \\ \cqrq \endp
+ 
\beginp b_0 \\ \bqrq \endp X_0
+ \beginp A_0\\\aqrq\endp \xotq \medskip\\
%%%%%%%%%%%%%%%%
%%%%%%%%%%%%%%%%%
&=&
\beginp c_0 \\ \cqrq \endp
+ 
\beginp b_0 \\ \bqrq \endp X_0
+ \beginp A_0\\\aqrq\endp P_1^\T \beginp \xqr \\ \xqrq \endp \medskip\\
%%%%%%%%%%%%%%%%
%%%%%%%%%%%%%%%%%
&\overset{\eqref{eq:lc_x0_ss}}{=}& 
\beginp
c_0 \\ 
\cqrq
\endp
+ 
\beginp
b_0 \\ 
\bqrq
\endp X_0
\\
&&+ \beginp 
A_0\\
\aqrq
\endp P_1^\T \left\{\beginp h \\ 0 \endp + \beginp H \\ I_r \endp \xqrq\right\}. 
\end{array}
\endy
%%%%%
In words, \eqref{eq:lc_x0_!lambda0} suggests that 
\begina
\begin{tabular}{l}
for all solutions $X = (X_0, X_1, \ldots, X_Q)^\T$ to \eqref{eq:ls_x0_prop_Q} that have $X_0\neq \lambda_0$, \\
the corresponding $X' = (X_0, \xqrq^\T)^\T$ is an $(r +1)\times 1$ vector that satisfies\\
 $X' X_0$ is linear in $(1, X')$ with $r \leq Q-1$.
\end{tabular}
\enda
By induction, applying Proposition \ref{prop:la_x0} to $X' = (X_0, \xqrq^\T)^\T$ at $m = r \, (\leq Q-1)$ ensures that 
\begina
\begin{tabular}{l}
among all solutions $X = (X_0, X_1, \ldots, X_Q)^\T$ to \eqref{eq:ls_x0_prop_Q} that have $X_0\neq \lambda_0$,\\
 $X_0$ takes at most $r+2$ distinct values, where $r \leq Q-1$.
 \end{tabular}
 \enda This ensures $
 |\mxz \backslash\{\laz\}| \leq r+2 \leq Q+1$, and therefore 
 $|\mxz| = |\mxz \backslash\{\laz\}| + 1 \leq Q+2$.

\subsection{Proof of Proposition \ref{prop:la_xx}}
We verify below Proposition \ref{prop:la_xx} by induction.

\subsubsection*{\underline{Proof of the result for $m = 1$.}}

Assume $X = X_1 \in \mr$ is a scalar such that $XX^\T = X_1^2$ is linear in $(1, X) = (1, X_1)$. Then there exists constants $a_0, a_1\in \mr$ such that 
$
X_1^2 = a_0 + a_1 X_1$. 
By the fundamental theorem of algebra, $X_1$ takes at most $2$ distinct values.

\subsubsection*{\underline{Proof of the result for $m \geq 2$.}}

Assume that the result holds for $m = 1, \ldots, Q$. We verify below the result for $m = Q + 1$. 
%%%%
Assume $X = (X_1, \ldots, X_Q, \xqp )^\T$ is a $(Q+1)\times 1$ vector such that all elements of 
\beginy\label{eq:xx}
XX^\T =
\left(
\begin{array}{ccc|c}
X_1X_1 & \ldots & X_1 X_Q & X_1 \xqp \\
\vdots & & \vdots & \vdots\\
X_QX_1 & \ldots & X_{Q}X_Q & X_Q \xqp \\\hline
\xqp X_1 & \ldots & \xqp X_Q & \xqp \xqp 
\end{array}
\right) 
\endy
are linear in $(1, X)$ in that 
\beginy\label{eq:ls_xx}
X_qX_{q'} = c_{qq'} + \sum_{k=1}^{Q+1} \aqqk X_k \quad \text{for $q, q'=\ot{Q+1}$} 
\endy
for constants $\{c_{qq'}, \aqqk \in \mr: q,q', k = \ot{Q+1}\}$.
Without loss of generality, assume the coefficients for $X_qX_{q'}$ and $X_{q'}X_q$ are identical with 
\beginy\label{eq:symmetry}
c_{qq'} = c_{q'q}, \quad \aqqk = a_{q'q[k]} \quad \text{for} \ \ q, q', k =\ot{Q+1}.
\endy 
Let $$
\ms = \{X = \xoqp \in \mr^{Q+1}: \, \text{$X$ satisfies \eqref{eq:ls_xx}}\}$$
denote the solution set of \eqref{eq:ls_xx}. 
To goal is to show that 
$|\ms| \leq Q+2$.

To this end, consider the last column of $XX^\T$ in \eqref{eq:xx}, i.e., $(X_1\xqp, \ldots, X_Q\xqp, \xqp^2)^\T$. 
The corresponding part in \eqref{eq:ls_xx} is 
\begina
X_q\xqp = c_{q,Q+1} + \sum_{k=1}^{Q+1} a_{q,Q+1[k]} X_k \quad \text{for $q =\ot{Q+1}$},
\enda
and can be summarized in matrix form as 
 \beginy\label{eq:ls_xx_Q_mat}
\beginp
X_{Q+1}\\\hline
X_1 \\
\vdots\\
X_Q
\endp \xqp 
&=& 
\beginp c_{Q+1,Q+1}\\\hline 
c_{1,Q+1}\\ \vdots \\ c_{Q,Q+1}\endp
+ 
\beginp a_{Q+1,Q+1[Q+1]} \\ \hline a_{1,Q+1[Q+1]}\\ \vdots \\ a_{Q,Q+1[Q+1]}\endp \xqp \nonumber\\
&&+ 
 \beginp
 a_{Q+1,Q+1[1]} & \ldots & a_{Q+1,Q+1[Q]} \\\hline
a_{1,Q+1[1]} & \ldots & a_{1,Q+1[Q]} \\
& \ddots & \\
a_{Q,Q+1[1]} & \ldots & a_{Q,Q+1[Q]} \\
\endp \beginp
X_1 \\
\vdots\\
X_Q
\endp. \qquad
\endy
Let
\begina
A = \beginp
a_{1,Q+1[1]} & \ldots & a_{1,Q+1[Q]} \\
& \ddots & \\
a_{Q,Q+1[1]} & \ldots & a_{Q,Q+1[Q]} \\
\endp \in \mathbb R^{Q\times Q} 
\enda
denote the $Q \times Q$ coefficient matrix of $(X_1, \ldots, X_Q)$ in the last $Q$ rows of \eqref{eq:ls_xx_Q_mat}. 
Let $\sigma(A)$ denote the set of eigenvalues of $A$. 
Let 
\begina
\mxq = \{X_{Q+1}: 
\text{there exists $X= (X_1, \ldots, \xqp )^\T\in \ms$}\}
\enda
denote the set of values of $\xqp $ among all solutions to \eqref{eq:ls_xx}.
Note that all solutions to \eqref{eq:ls_xx} must satisfy \eqref{eq:ls_xx_Q_mat}.
Applying Proposition \ref{prop:la_x0} and Lemma \ref{lem:la_x0}\eqref{it:la_x0_noneigen} to \eqref{eq:ls_xx_Q_mat} with $\xqp $ treated as $X_0$
 ensures 
\begine[(i)]
\item $|\mxq| \leq Q+2$ by Proposition \ref{prop:la_x0}. 
\item If $\mxq \cap \sigma(A) = \emptyset$, i.e., $\xqp$ does not equal any eigenvalue of $A$ for any solution to \eqref{eq:ls_xx}, then $\mxq =\mxq \backslash \sigma(A)$, and it follows from Lemma \ref{lem:la_x0}\eqref{it:la_x0_noneigen} that 
$
|\ms| = |\mxq \backslash \sigma(A)| = |\mxq|.$
\ende
These two observations together imply that 
\begina
|\ms| = |\mxq| \leq Q+2\quad \text{when $\mxq \cap \sigma(A) = \emptyset$.}
\enda 
It remains to show that $|\ms| \leq Q+2$ in the case where $\mxq \cap \sigma(A) \neq \emptyset$.

Let $\laz$ denote an element in $\mxq \cap \sigma(A)$; i.e., $\laz$ is an eigenvalue of $A$ that $\xqp$ takes in at least one solution to \eqref{eq:ls_xx}.
We verify below that $|\ms| \leq Q+2$ in the cases $A = \laz I$ and $A \neq \laz I$, respectively. 
Throughout, 
\partitionsoqp.

\subsubsection{$|\ms| \leq Q+2$ when $A = \laz I$.}
Letting $q = q' = Q+1$ in \eqref{eq:ls_xx} implies that 
\beginy\label{eq:xqp2}
\xqp ^2 = c_{Q+1, Q+1} + a_{Q+1,Q+1[Q+1]} \xqp+ \sum_{k=1}^Q 
a_{Q+1,Q+1[k]}X_k . 
\endy
Condition \ref{cond:all0_aa} below states the condition that the coefficients for $X_1, \ldots, X_Q$ in \eqref{eq:xqp2} all equal zero. We verify below that 
\beginy
|\msnl| &\leq& 
\left\{\begin{array}{cc} \quad 1\ \ & \text{ if Condition \ref{cond:all0_aa} holds;}\\
\quad 2 \ \ & \text{otherwise,}
\end{array}\right. \label{res1:part nl}\\
%%%%%%%%%%%%%%%%%%%%
\nonumber\\
%%%%%%%%%%%%%%%%%%%%%%
|\msl| &\leq& 
\left\{\begin{array}{cc} Q+1 & \text{if Condition \ref{cond:all0_aa} holds;}\\
Q & \text{otherwise.}
\end{array}\right. \label{res1:part l}
\endy
Equations~\eqref{res1:part nl} and \eqref{res1:part l} together imply that $|\ms| = |\msl| + |\msnl| \leq Q+2$.

\begin{condition}\label{cond:all0_aa}
$a_{Q+1, Q+1[k]} = 0$ for $k = \ot{Q}$. 
\end{condition}

\subsubsection*{Part I: Proof of \eqref{res1:part nl}.} 
Inequality \eqref{res1:part nl} follows from Lemma \ref{lem:la_x0}\eqref{it:la_x0_...}, by treating $\xqp $ as $X_0$ and $A_{Q+1} = (a_{{Q+1},Q+1 [1]},\ldots, a_{Q+1, Q+1[Q]})$ as $A_0$.

\subsubsection*{Part II: Proof of \eqref{res1:part l}.}
First, \eqref{eq:ls_xx} ensures 
 \beginy\label{eq:xx_ss_11}
 X_qX_{q'} = c_{qq'} + a_{qq'[Q+1]}\xqp+ \sum_{k=1}^{Q} \aqqk X_k \quad \text{for \ $q, q'=\ot{Q}$}. 
 \endy
Letting $X_{Q+1} = \laz$ in \eqref{eq:xx_ss_11} implies that 
 \beginy\label{eq:lc_xx_lambda0}
 \begin{tabular}{l}
 for all solutions $X = (X_1, \ldots, X_Q, \xqp )^\T$ to \eqref{eq:ls_xx} that have $\xqp = \laz$, \\
 the corresponding subvector $X' = (X_1,\ldots, X_Q)^\T$ satisfies\smallskip\\
\multicolumn{1}{c}{$
 X_qX_{q'} = c_{qq'}+ a_{qq'[Q+1]}\laz + \displaystyle\sum_{k=1}^{Q} \aqqk X_k$ \quad for \ $q, q'=\ot{Q}$;}\\
 i.e.,  $X' X'^\T$ is componentwise linear in $(1, X')$.
\end{tabular}
\endy
By induction, applying Proposition \ref{prop:la_xx} to $X' = (X_1,\ldots, X_Q)^\T$ at $m = Q$ ensures that 
\begina
\begin{tabular}{l}
among all solutions $X = (X_1, \ldots, X_Q, \xqp)^\T$ to \eqref{eq:ls_xx} that have $\xqp = \lambda_0$,\\
the corresponding subvector $X' = (X_1,\ldots, X_Q)^\T$ takes at most $Q+1$ distinct values.
\end{tabular}
\enda 
This ensures 
$
|\msl| \leq Q+1$, and verifies the first line of \eqref{res1:part l} when Condition \ref{cond:all0_aa} holds.

When Condition \ref{cond:all0_aa} does not hold, there exists $k \in \{\ot{Q}\}$ such that $a_{Q+1, Q+1[k]} \neq 0$. 
Without loss of generality, assume that $a_{Q+1, Q+1[Q]} \neq 0$. 
Letting $X_{Q+1} = \laz$ in \eqref{eq:xqp2} implies that 
 \beginy\label{eq:lc_xx_lambda0_xq}
 \begin{tabular}{l}
 for all solutions $X = (X_1, \ldots, X_Q, \xqp )^\T$ to \eqref{eq:ls_xx} that have $\xqp = \laz$,\\
the corresponding $X' = (X_1,\ldots, X_{Q})^\T$ satisfies\medskip\\
\multicolumn{1}{c}{$
 \laz^2 = c_{Q+1, Q+1}+ a_{Q+1,Q+1[Q+1]} \laz + \displaystyle\sum_{k=1}^Q a_{Q+1,Q+1[k]}X_k$} \bigskip\\
with \bigskip\\
 \multicolumn{1}{c}{$X_Q=a_{Q+1, Q+1[Q]}^{-1}\left( \laz^2 - c_{Q+1, Q+1}-a_{Q+1,Q+1[Q+1]}\laz - \displaystyle\sum_{k=1}^{Q-1} a_{Q+1,Q+1[k]} X_k\right)$.}
 \end{tabular} 
 \endy
 Plugging the expression of $X_Q$ in \eqref{eq:lc_xx_lambda0_xq} back into \eqref{eq:lc_xx_lambda0} implies that 
 \begina
\begin{tabular}{l}
for all solutions $X = (X_1, \ldots, X_Q, \xqp )^\T$ to \eqref{eq:ls_xx} that $\xqp = \laz$,\\
the corresponding subvector $X' = (X_1,\ldots, X_{Q-1})^\T$ satisfies\medskip\\
\multicolumn{1}{c}{$X_qX_{q'} = c_{qq'}^* + \displaystyle\sum_{k=1}^{Q-1} \aqqk^* X_k$ \quad for \ \ $q, q'=\ot{Q-1}$,}\medskip\\
where $\{c_{qq'}^*, \aqqk^*: q, q', k = \ot{Q-1}\}$ are constants determined \\
by $\laz$ and $\{c_{qq'}, \aqqk: q, q', k = \ot{Q+1}\}$; \\
i.e., $X' X'^\T$ is componentwise linear in $(1, X')$.
\end{tabular}
\enda
By induction, applying Proposition \ref{prop:la_xx} to $X' = (X_1,\ldots, X_{Q-1})^\T$ at $m = Q-1$ ensures that 
\beginy\label{obs:xx_ss_111}
\begin{tabular}{l}
among all solutions $X = (X_1, \ldots, X_Q, \xqp)^\T$ to \eqref{eq:ls_xx} that have $\xqp = \lambda_0$,\\
the corresponding subvector $X' = (X_1,\ldots, X_{Q-1})^\T$ takes at most $Q$ distinct values.
\end{tabular}
\endy
The expression of $X_Q$ in \eqref{eq:lc_xx_lambda0_xq} further ensures that 
\beginy\label{obs:xx_ss_112}
\begin{tabular}{l}
each solution $X = (X_1, \ldots, X_Q, \xqp)^\T$ to \eqref{eq:ls_xx} with $\xqp = \lambda_0$\\
is uniquely determined by its subvector $X' = (X_1,\ldots, X_{Q-1})^\T$.
\end{tabular}
\endy
Observations \eqref{obs:xx_ss_111}--\eqref{obs:xx_ss_112} together imply that 
$|\msl| \leq Q$ in the second line of \eqref{res1:part l} when Condition \ref{cond:all0_aa} does not hold.

\bigskip

\subsubsection{$|\ms| \leq Q+2$ when $A \neq \laz I$.}
We now show that $|\ms| \leq Q+2$ when $A \neq \laz I$. The proof parallels that for the case of $A = \laz I$ but involves more technicality. 
We will first reparameterize $X$ to state a condition analogous to Condition \ref{cond:all0_aa}, and then bound $|\msl|$ and $|\msnl|$ when the new condition holds and does not hold, respectively, analogous to \eqref{res1:part nl}--\eqref{res1:part l}.

Let $r = \rank(\la)$, with $1 \leq r \leq Q-1$ when $A \neq \laz I$. Let
$\xotq = (X_1, \ldots, X_Q)^\T$. 
Applying Lemma \ref{lem:la_x0}\eqref{it:la_x0_rank_a} to \eqref{eq:ls_xx_Q_mat} with $X_{Q+1}$ treated as $X_0$ implies that there exists $Q\times Q$ permutation matrices $P_1$ and $P_2$, and constants $g \in \mr^{r}$, $G \in \mr^{r\times (Q-r)}$, $h \in \mr^{Q-r}$, and $H \in \mr^{(Q-r)\times r}$, such that 
\begineqs
\beginar{lll}
\xmr \equiv (P_1 \xotq)_{[1:(Q-r)]}, && \xmrm \equiv (P_1 \xotq)_{[(Q-r+1):Q]},\\
\xr \equiv (P_2^\T \xotq)_{[1:r]}, && \xrr \equiv (P_2^\T \xotq)_{[(r+1):Q]}
\endar  
\endeqs
satisfy the following: 
\beginy\label{eq:lc_lambda0}
\begin{array}{lll}
\xr = g + G\xrq &&\text{\foralllong},\medskip\\
%%%%%%%%%%%%%%
 \xqr = h + H \xqrq &&\text{\forallnlong}.
 \end{array}\quad
\endy

\subsubsection*{\underline{A reparametrization.}}
Motivated by the first row of \eqref{eq:lc_lambda0}, define $X^* = (X_1^*, \ldots, \xqp^*)^\T$ as a reparametrization of $X = \xoqp$ with 
\begineq\label{eq:xs_def_0}
\beginar{l}
\xrs  \equiv (X^*_1, \ldots, X^*_r)^\T = \xr - G\xrq,\\
\xrqs \equiv (X^*_{r+1}, \ldots, X^*_Q)^\T  = \xrq,\\
 \xqps = \xqp. 
 \endar
\endeq
This implies that 
\begineq\label{eq:xs_def_1}
\xotq^*=
\beginp
\xrs\\
\xrqs
\endp
\oeq{\eqref{eq:xs_def_0}}
\beginp
I_r & - G\\
0 & I_{Q-r}\\
\endp 
\beginp
\xr\\
\xrq
\endp
= 
\beginp
I_r & - G\\
0 & I_{Q-r}\\
\endp 
P_2^\T
\xotq,
\endeq 
so that 
\beginy
X^* = \beginp
\xotq^*\\
\xqps
\endp
=
\beginp
\beginp
I_r & - G\\
0 & I_{Q-r}\\
\endp 
P_2^\T & 0\\
0 & 1
\endp 
\beginp
\xotq\\
\xqp
\endp
= \Gamma_1 X,\label{eq:xs_def}
\endy
where $\Gamma_1 = 
\beginp\beginp
I_r & - G\\
0 & I_{Q-r}\\
\endp 
P_2^\T & 0\\
0 & 1
\endp$. 
Denote by 
\begineq\label{eq:ls_xx_star}
X_q^* X_{q'}^* = c_{qq'} + \sum_{k=1}^{Q+1} \aqqk^* X_k^* \quad \text{for $q, q'=\ot{Q+1}$} 
\endeq
the reparameterization of \eqref{eq:ls_xx} in terms of $X^* = (X_1^*, \ldots, \xqp^*)^\T$ based on \eqref{eq:xs_def}. The coefficients $\{\aqqk^*: q, q', k = \ot{Q+1}\}$ in \eqref{eq:ls_xx_star} are constants fully determined by $\aqqk$'s and $G$, and satisfy 
\beginy\label{eq:symmetry_star}
\aqqks = a^*_{q'q[k]} \quad \text{for} \ \ q, q', k = \ot{Q+1}
\endy
under \eqref{eq:symmetry}. We omit their explicit forms because they are irrelevant to the proof.
%%%%
%
%
Let 
\begina
\mss = \{\xs = \xoqps \in \mr^{Q+1}: \ \text{$\xs$ satisfies \eqref{eq:ls_xx_star}}\}
\enda
denote the solution set of \eqref{eq:ls_xx_star}. Let 
\begina
&&\msls =\{\xs = \xoqps \in \mss: \ \xqps = \laz\},\\
&&\msnls = \{\xs = \xoqps \in \mss: \ \xqps \neq \laz\}
\enda 
denote a partition of $\mss$, containing solutions to \eqref{eq:ls_xx_star} that have $\xqps = \laz$ and $\xqps \neq \laz$, respectively.
The definition of \eqref{eq:ls_xx_star} and \eqref{eq:xs_def} ensures 
\beginy\label{eq:x_xs_equiv}
\begin{array}{lll}
X \in \msl &\Longleftrightarrow& X^* = \Gamma_1 X \in \msls\\
X \in \msnl &\Longleftrightarrow& X^* = \Gamma_1 X \in \msnls
\end{array}\quad \text{with}\quad |\msl| = |\msls|, \quad |\msnl| = |\msnls|.
\endy
Equation~\eqref{eq:x_xs_equiv} ensures that
\begina\label{eq:lc_lambda0_star_pre}
\begin{tabular}{l}
\forallslong,\\
the corresponding $X = (X_1, \ldots, X_{Q+1})^\T = \Gamma_1^{-1} \xs$ is in $\msl$ with $\xqp = \laz$.
\end{tabular}
\enda  
This, together with \eqref{eq:lc_lambda0} and $\xrs = \xr - G \xrq$ from \eqref{eq:xs_def_0}, ensures that
\beginy\label{eq:lc_lambda0_star}
\begin{tabular}{l}
\forallslong, the subvector  \\ 
$\xrs = (X_1^*, \ldots, X_r^*)^\T$ satisfies $\xrs = g$ \  with $
X_q^* = g_q$ \ for \ $q = \ot{r}$,
\end{tabular}
\endy
where $g_q$ denotes the $q$th element of $g$ for $q = \ot{r}$.

\subsubsection*{\underline{A condition analogous to Condition \ref{cond:all0_aa}.}}
Write \eqref{eq:ls_xx_star} as
\beginy\label{eq:ls_xx_star_equiv}
X_q^* X_{q'}^* = c_{qq'} +\sum_{k=1}^{r} \aqqk^* X_k^* + \sum_{k=r+1}^{Q } \aqqk^* X_k^* + a^*_{qq'[Q+1]}X_{Q+1}^* \ \  (q, q'=\ot{Q+1}). \quad 
\endy 
Plugging \eqref{eq:lc_lambda0_star} into the right-hand side of \eqref{eq:ls_xx_star_equiv} ensures that 
\beginy\label{eq:ls_xstar_l0}
\begin{tabular}{l}
\forallslong, we have\bigskip\\
\multicolumn{1}{c}{
$
\begin{array}{ccl}
X_q^* X_{q'}^* &=& c_{qq'} + \displaystyle\sum_{k=1}^{r} \aqqk^* g_k + \displaystyle\sum_{k=r+1}^{Q } \aqqk^* X_k^* + a^*_{qq'[Q+1]} \laz\\
&=& \gqq + \displaystyle\sum_{k = r+1}^{Q} \aqqk^*\xs_k, \\
&&\text{where $\gqq = c_{qq'} + \sum_{k=1}^{r} \aqqk^* g_k + a^*_{qq'[Q+1]}\laz$,}
\end{array}$
}\bigskip\\
for \ $q, q'=\ot{Q+1}$. 
\end{tabular} 
\endy
Divide combinations of $(q, q')$ for $q, q' = \ot{Q+1}$ into nine blocks as follows:
\beginy\label{eq:mat_qq'}
\begin{tabular}{l|c|c|c} \hline 
& $q' \in \{\ot{r}\}$ & $q' \in \{r+1, \ldots, Q\}$& $q' = Q+1$ \\\hline
$q \in \{\ot{r}\}$ & (i) & (ii) & (iii) \\\hline
$q \in \{r+1, \ldots, Q\}$ & & & (iv) \\\hline
$q = Q+1$ & & & (v)\\\hline
\end{tabular}
\endy
Given \eqref{eq:lc_lambda0_star}, writing out $X^*_q = g_q$ for $q = \ot{r}$ on the left-hand side of \eqref{eq:ls_xstar_l0} implies the following for combinations of $(q,q')$ in blocks (i)--(v) in \eqref{eq:mat_qq'}:
\beginy\label{eq:5cases}
&&\text{\forallslong, we have}\\
&&\begin{array}{lccllllll}
\text{(i)} & g_qg_{q'} &=& \gqq &+ \sum_{k=r+1}^Q \aqqks X_k^* &\text{for \ \ $q, q' \in\{\ot{r}\}$;} \medskip\\
%%%%%%%%%%%%%%%%%%%%%
\text{(ii)} & g_q X_{q'}^* &=& \gqq &+ \sum_{k=r+1}^Q \aqqks X_k^*  &\text{for \ \ $q\in\{\ot{r}\}$ and $q' \in \{r+1,\ldots, Q\}$;} \medskip\\
%%%%%%%%%%%%%%%%%%%%%
\text{(iii)} &g_q \laz &=& \gamma_{q,Q+1} &+ \sum_{k=r+1}^Q a^*_{q,Q+1[k]} X_k^* &\text{for \ \ $q\in\{\ot{r}\}$ and $q' = Q +1$;}\medskip\\
%%%%%%%%%%%%%%%%%%%%%
%%%%%%%%%%%%%%%%%%%%%
%%%%%%%%%%%%%%%%%%%%%
\text{(iv)} &X_q^* \laz &=& \gamma_{q,Q+1} &+ \sum_{k=r+1}^Q a^*_{q,Q+1[k]} X_k^* &\text{for \ \ $q\in \{r+1, \ldots, Q\}$ and $q' = Q +1$;}\medskip\\ 
%%%%%%%%%%%%%%%%%%%%%
%%%%%%%%%%%%%%%%%%%%%
%%%%%%%%%%%%%%%%%%%%%
\text{(v)} &\laz^2 &=& \gamma_{Q+1,Q+1} &+ \sum_{k=r+1}^Q a^*_{Q+1,Q+1[k]} X_k^* &\text{for \ \ $q = q' = Q +1$.}
\end{array}\nonumber
\endy
Condition \ref{cond:all0} below parallels Condition \ref{cond:all0_aa}, and states the condition that all coefficients on $(\xs_{r+1}, \ldots, \xs_Q)$ in \eqref{eq:5cases} are zero. 
We show below that 
\beginy
|\msnl| &\leq& 
\left\{\begin{array}{cc} \quad r+1\ \ & \text{ if Condition \ref{cond:all0} holds;}\\
\quad r+2 \ \ & \text{otherwise,}
\end{array}\right. \label{res2:part nl}\\
%%%%%%%%%%%%%%%%%%%%
\nonumber\\
%%%%%%%%%%%%%%%%%%%%%%
|\msl| &\leq& 
\left\{\begin{array}{cc} Q-r+1 & \text{if Condition \ref{cond:all0} holds;}\\
Q-r & \text{otherwise.}
\end{array}\right. \label{res2:part l}
\endy
These two parts together ensure that $|\ms| = |\msl| + |\msnl| \leq Q+2$.

\begin{condition}\label{cond:all0}
All coefficients on $(\xs_{r+1}, \ldots, \xs_Q)$ in \eqref{eq:5cases} are zero. That is, 
\begina
\begin{tabular}{lllllllll}
(i) & $\aqqks = 0$ \quad for $k = r+1, \ldots, Q$ & in \eqref{eq:5cases} row (i) &for & $q, q' \in \{\ot{r}\}$; \bigskip \\
%%%%%%%%%%%%%%%%%%%%%%
(ii) & $\left\{\begin{array}{ll}
a_{qq'[q']}^* =g_q\\
\aqqks = 0 \quad\text{for $k \in \{r+1, \ldots, Q\} \backslash \{q'\}$}
\end{array}\right. $ & in \eqref{eq:5cases} row (ii)& for & $\begin{array}{l}
q \in \{\ot{r}\}; \\ q' \in \{r+1,\ldots, Q\};
\end{array}$ \bigskip \\
%%%%%%%%%%%%%%%%%%%%%%%
(iii) & $a_{q,Q+1[k]}^* = 0$ \quad for $k = r+1, \ldots, Q$ & in \eqref{eq:5cases} row (iii) & for & $q \in \{\ot{r}\}$; \bigskip \\
%%%%%%%%%%%%%%%%%%%%%%%
(iv) & $\left\{\begin{array}{l}
 a_{q, Q+1[q]}^* = \laz\\
 a_{q,Q+1[k]}^* = 0 \quad \text{for $k \in\{r+1, \ldots, Q\}\backslash \{q\}$}\\
\end{array}\right.$ & in \eqref{eq:5cases} row (iv)& for & $q \in \{r+1, \ldots, Q\}$; \bigskip\\
%%%%%%%%%%%%%%%%%%%%%%%
%%%%%%%%%%%%%%%%%%%%%%%
%%%%%%%%%%%%%%%%%%%%%%%
%%%%%%%%%%%%%%%%%%%%%%%
(v) & $a_{Q+1,Q+1[k]}^* = 0$ \quad for $k = r+1, \ldots, Q$ & in \eqref{eq:5cases} row (v). 
\end{tabular}
\enda
\end{condition}

\bigskip
\subsubsection*{Part I: Proof of \eqref{res2:part nl}.}
\paragraph{\underline{$|\msnl| \leq r+1$ when Condition \ref{cond:all0} holds.}}
Recall that 
$\aqqks = a^*_{q'q[k]}$ for $q, q', k = \ot{Q+1}
$ from \eqref{eq:symmetry_star}. 
Condition \ref{cond:all0}(i), (iii), and (v) together ensure
\begina
\begin{tabular}{l}
$a_{qq'[r+1]}^*, a_{qq'[r+2]}^*, \ldots, a_{qq'[Q]}^* = 0$ \ \ for all $q, q' \in \{\ot{r}\} \cup \{Q+1\}.$
\end{tabular}
\enda
Therefore, when Condition \ref{cond:all0} holds, the coefficients on $\xs_{r+1}, \ldots, \xs_Q$ are all zero in the linear expansions of $\{\xs_q \xs_{q'}: q, q' = \ot{r}; \, Q+1\}$ in \eqref{eq:ls_xx_star}. The corresponding part of \eqref{eq:ls_xx_star} 
reduces to 
\beginy\label{eq:ls_xx_star_equiv_2_5cases}
 X_q^* X_{q'}^* 
= c_{qq'} +\sum_{k=1}^{r} \aqqk^* X_k^* + a^*_{qq'[Q+1]}X_{Q+1}^* \quad \text{for} \ \ q, q' = \ot{r}; \, Q+1.
\endy
In words, \eqref{eq:ls_xx_star_equiv_2_5cases} ensures that
\begina
\begin{tabular}{l}
for all solutions $X^* = (X_1^*, \ldots, \xqps)^\T$ to \eqref{eq:ls_xx_star}, the subvector \\ 
$X' = (\xs_1, \ldots, \xs_r, \xs_{Q+1})^\T = (\xrst, \xqp^*)^\T$ is an $(r+1)\times 1$ vector that satisfies \\
$X' X'^\T$ is componentwise linear in $(1, X')$.
\end{tabular}
\enda
By induction, applying Proposition \ref{prop:la_xx} to $X' = (\xs_1,\ldots, \xs_r, \xqps)^\T$ at $m = r+1 \,(\leq Q)$ ensures that 
\begina
\begin{tabular}{l}
among all solutions $X^* = (X_1^*, \ldots, \xqps)^\T$ to \eqref{eq:ls_xx_star},\\ 
the corresponding subvector $X' = (\xrst, \xqp^*)^\T$ takes at most $r+2$ distinct values.
\end{tabular}
\enda
Given that $\xqps = \laz$ in at least one solution of $\xs$ to \eqref{eq:ls_xx_star} by the definition of $\laz$, this implies that 
\beginy\label{eq:conclusion_22_ss}
\begin{tabular}{l}
among all solutions $\xs = (\xs_1,\ldots, \xqps)^\T$ to \eqref{eq:ls_xx_star}
that have $\xqps \neq \laz$, \\
the corresponding subvector $X'= (\xrst, \xqp^*)^\T$ takes at most $r+1$ distinct values. 
\end{tabular}
\endy

In addition, Condition \ref{cond:all0}(iv) ensures that 
\begina
\begin{tabular}{l}
for $q = r+1, \ldots, Q$, 
$\begin{array}{l  }
a_{q, Q+1[q]}^* = \laz,\\
 a_{q,Q+1[k]}^* = 0 \quad \text{for $k \in\{r+1, \ldots, Q\}\backslash \{q\}$.}
 \end{array}$ 
 \end{tabular} 
 \enda
The linear expansions of $\{X_q^*\xqps: q = r+1, \ldots, Q\}$ in \eqref{eq:ls_xx_star} reduce to 
\begina
X_q^* \xqps = c_{q, Q+1} +\sum_{k=1}^{r} a^*_{q, Q+1[k]} X_k^* + \laz X_q^* + a^*_{q, Q+1[Q+1]}X_{Q+1}^* \ \ \text{for} \ \ q = r+1, \ldots, Q,
\enda
implying that 
\beginy\label{eq:obs_11}
X_q^*(\xqp ^* - \laz) 
 = c_{q, Q+1} +\sum_{k=1}^{r} a^*_{q, Q+1[k]} X_k^* + a^*_{q, Q+1[Q+1]}X_{Q+1}^* \ \ \text{for} \ \ q = r+1, \ldots, Q. \quad 
\endy
In words, \eqref{eq:obs_11} ensures that 
\begina
\begin{tabular}{l}
for all solutions $\xs = (X_1^*, \ldots, \xqps)^\T$ to \eqref{eq:ls_xx_star} with $\xqps \neq \laz$,\\
the corresponding $\xrqs = (\xs_{r+1}, \ldots, \xs_Q)^\T$ is uniquely determined\\
 by $X'= (\xs_1, \ldots, \xs_r, \xqps)^\T = (\xrst, \xqp^*)^\T$.
 \end{tabular}
\enda
This, together with \eqref{eq:conclusion_22_ss}, ensures that 
\begina
&&\text{there are at most $r+1$ distinct solutions of $\xs = (\xs_1, \ldots, \xs_{Q+1})^\T$ to \eqref{eq:ls_xx_star} that}\nonumber\\
&&\text{have $\xqps \neq \laz$, i.e., $|\msnls|\leq r+1$.}
\enda
This ensures $|\msnl|\leq r+1$ from \eqref{eq:x_xs_equiv}.

\bigskip

\paragraph{\underline{$|\msnl| \leq r+2$ when Condition \ref{cond:all0} does not hold.}}
Recall from \eqref{eq:lc_lambda0} that 
\begina
 \xqr = h + H \xqrq \quad \text{\forallnlong}.
 \enda
Combining this with \eqref{eq:ls_xx} implies that 
\begina
&&\text{\forallnlong, we have}\bigskip\\
&&\quad 
\begin{array}{lcllll}
X_qX_{q'} 
&=& c_{qq'} + \displaystyle\sum_{k=1}^{Q+1} \aqqk X_k \medskip\\
&=& c_{qq'} + A_{qq'} \beginp \xotq \\ X_{Q+1} \endp,\where   A_{qq'} = (a_{qq'[1]}, \ldots, a_{qq'[Q+1]}), \\
&=& c_{qq'} + A_{qq'} \beginp P_1^\T & 0 \\ 0 & 1\endp \beginp \xqr \\ \xqrq \\ \hline X_{Q+1} \endp \smallskip\\
%%%%%%%%%%%%
&\oeq{\eqref{eq:lc_lambda0}}& c_{qq'} + A_{qq'} \beginp P_1^\T & 0 \\ 0 & 1\endp \beginp h + H \xqrq \\ \xqrq \\ X_{Q+1} \endp \smallskip\\
%%%%%%%%%%
&=& c_{qq'} + A_{qq'} \beginp P_1^\T & 0 \\ 0 & 1\endp \beginp h \\ 0 \\ 0 \endp +  A_{qq'} \beginp P_1^\T & 0 \\ 0 & 1\endp \beginp H & 0\\ I & 0 \\ 0 & 1 \endp \beginp  \xqrq \\  X_{Q+1} \endp
\end{array}\\
&&\text{for $q,q'\in \mpoc \cup\{Q+1\}$}.
\enda
In words, this ensures that 
\begina
\begin{tabular}{l}
\forallnlong, \\
the corresponding subvector $X' = (\xqrq^\T, \xqp )^\T$ is an $(r+1)\times 1$ vector satisfying that  \\
$X' X'^\T$ is componentwise linear in $(1, X' )$.
\end{tabular}
\enda
By induction, applying Proposition \ref{prop:la_xx} to $X' = (\xqrq^\T, \xqp )^\T$ at $m = r+1 \ (\leq Q)$ ensures 
\beginy\label{eq:ss_22_2}
\begin{tabular}{l}
 among \alln, \\
 the corresponding subvector
$X' = (\xqrq^\T, \xqp )^\T$ takes at most $r+2$ distinct values.
\end{tabular}
\endy
In addition, the second row in \eqref{eq:lc_lambda0} further ensures that 
\begina
\begin{tabular}{l}
all solutions $X = (X_1, \ldots, X_{Q+1})^\T$ to \eqref{eq:ls_xx} that have $\xqp \neq \laz$\\
are uniquely determined by their corresponding $(\xqrq, \xqp )$.
\end{tabular}
\enda
%%%%%%%%%%%%%%
This, together with \eqref{eq:ss_22_2}, ensures $|\msnl| \leq r+2$.

\bigskip

\subsubsection*{Part II: Proof of \eqref{res2:part l}.}
Let $\mxrq$
denote the set of values of $\xrq$ among all solutions $X = (X_1, \ldots, X_{Q+1})^\T$ to \eqref{eq:ls_xx} that have $\xqp = \laz$. 
The first line in \eqref{eq:lc_lambda0} ensures that 
\begina
\begin{tabular}{l}
all solutions $X = (X_1, \ldots, X_{Q+1})^\T$ to \eqref{eq:ls_xx} that have $\xqp = \laz$\\ are uniquely determined by their corresponding $\xrq$.
\end{tabular}
\enda
This ensures
$
|\msl| = |\mxrq| 
$
such that it suffices to show that 
\begineqs
|\mxrq|  \leq  
\left\{\begin{array}{cc} Q-r+1 & \text{if Condition \ref{cond:all0} holds;}\\
Q-r & \text{otherwise.}
\end{array}\right. 
\endeqs
A useful fact is that \eqref{eq:ls_xstar_l0} ensures 
\beginy\label{eq:start}
&&\text{\forallslong, we have}\nonumber\\
&&\qquad \qquad X_q^* X_{q'}^* = \gqq + \sum_{k = r+1}^{Q} \aqqk^*\xs_k \quad \text{for $q, q'=r+1, \ldots, Q$}. 
\endy

\paragraph{\underline{$|\mxrq| \leq Q-r+1$ when Condition \ref{cond:all0} holds.}} 
In words, \eqref{eq:start} implies that 
\begina
\begin{tabular}{l}
\forallslong,\\
the corresponding subvector $\xrqs = (\xs_{r+1}, \ldots, \xs_{Q})^\T$ is a $(Q-r)\times 1$ vector\\ 
that satisfies $\xrqs X^{*\T}_{(r+1):Q}$ is componentwise linear in $(1, \xrqs)$.
\end{tabular}
\enda
By induction, applying Proposition \ref{prop:la_xx} to $\xrqs$ at $m = Q-r \ (\leq Q)$ ensures that 
\begina
\begin{tabular}{l}
among {\alls} that have $\xqp ^* = \laz$,\\
the corresponding subvector $\xrqs$ takes at most $Q-r+1$ distinct values.
\end{tabular}
\enda
Combining this with the correspondence between $\msls$ and $\msl$ from \eqref{eq:x_xs_equiv} and that $\xrqs = \xrq$ in \eqref{eq:xs_def_0} ensures that 
\begina
\begin{tabular}{l}
among all $X = (X_1, \ldots, X_{Q+1})^\T \in \msl$ that have $\xqp = \laz$,\\
the corresponding subvector
$\xrq$ take at most $Q-r+1$ distinct values,\\
 so that $|\mxrq| \leq Q-r+1$. 
\end{tabular}
\enda

\bigskip

\paragraph{\underline{$|\mxrq| \leq Q-r$ when Condition \ref{cond:all0} does not hold.}} 
We next improve the bound by one when Condition \ref{cond:all0} does not hold. 

When Condition \ref{cond:all0} does not hold, at least one coefficient on $(\xs_{r+1}, \ldots, \xs_Q)$ in \eqref{eq:5cases} is nonzero.
This ensures that there exists constants $\{\beta_0, \beta_{r+1}, \ldots, \beta_Q \in \mr: \text{$\beta_{r+1}, \ldots, \beta_Q$ are not all zero}\}$ such that
\begina
\begin{tabular}{l}
\forallslong, we have $
 \beta_0 + \displaystyle\sum_{k=r+1}^Q \beta_k \xs_k = 0$.
 \end{tabular} 
\enda
Without loss of generality, assume that $\beta_Q \neq 0$ such that 
\beginy\label{eq:lc_lambda0_xq}
\begin{tabular}{l}
\forallslong, we have\\
$
\xs_Q = -\beta_Q^{-1} \left(\beta_0 + \displaystyle\sum_{k = r+1}^{Q-1} \beta_k \xs_k\right)
$.
 \end{tabular} \quad
\endy
Plugging \eqref{eq:lc_lambda0_xq} in \eqref{eq:start} ensures that 
\beginy\label{eq:ls_xstar_l0_hashtag}
&&\text{\forallslong, we have} \nonumber\\
&&\qquad
\begin{array}{lcll}
X_q^* X_{q'}^* 
&=& \gqq + \displaystyle\sum_{k = r+1}^{Q} \aqqk^*\xs_k\medskip\\
&=& \gqq + \displaystyle\sum_{k = r+1}^{Q-1} \aqqk^*\xs_k + a_{qq'[Q]}^* \xs_Q\medskip\\
&\overset{\eqref{eq:lc_lambda0_xq}}{=}& \gqq + \displaystyle\sum_{k = r+1}^{Q-1} \aqqk^*\xs_k - a_{qq'[Q]}^* \beta_Q^{-1} \left(\beta_0 + \displaystyle\sum_{k = r+1}^{Q-1} \beta_k \xs_k\right)
\end{array}\\
&&\text{for \ $q, q' = r+1, \ldots, Q-1$.}\nonumber
\endy
In words, \eqref{eq:ls_xstar_l0_hashtag} ensures that 
\begina
\begin{tabular}{l}
\forallslong, \\
the corresponding subvector $X' = (\xs_{r+1}, \ldots, \xs_{Q-1})^\T$ is a $(Q-r-1)\times 1$ vector that satisfies \\
 $X' X'^\T$ is componentwise linear in $(1, X')$. 
\end{tabular}
\enda
By induction, applying Proposition \ref{prop:la_xx} to $X' = (\xs_{r+1}, \ldots, \xs_{Q-1})^\T$ at $m = Q-r-1 \, (< Q)$ ensures that 
\beginy\label{eq:conclusion_11_prep}
\text{\begin{tabular}{l}
among \allslong, \\
the corresponding subvector 
$X' = (\xs_{r+1}, \ldots, \xs_{Q-1})^\T$ takes at most $ Q-r$ distinct values.
\end{tabular}}
\endy
Combining \eqref{eq:conclusion_11_prep} with \eqref{eq:lc_lambda0_xq} ensures that
\begina
\begin{tabular}{l}
among {\alls} that have $\xqp ^* = \laz$, the corresponding\\ subvector 
$\xrqs= (\xs_{r+1}, \ldots, \xs_{Q-1}, \xs_{Q})^\T$ takes at most $Q-r$ distinct values.
\end{tabular}
\enda
It then follows from $\xrqs = \xrq$ in \eqref{eq:xs_def_0} that $|\mxrq| \leq Q-r$.

\subsection{Proof of Corollary \ref{cor:la_x}}
We verify below Corollary \ref{cor:la_x}\eqref{it:cor_i}--\eqref{it:cor_ii} one by one. 

\begin{proof}[\bf Proof of Corollary \ref{cor:la_x}\eqref{it:cor_i}.]
The result is trivial when $c = 0_m$. 
%%%%%%%%%%%%
When $c\neq 0_m$, define $X_{0} = X^\T c = \sum_{j=1}^{m} c_j X_j $.
Without loss of generality, assume $c_m \neq 0$ with $
X_{m} = c_{m}^{-1} (X_0 - \sum_{j=1}^{m-1} c_i X_i)$. 
This ensures $X = (X_1, \ldots, X_{m-1}, X_{m})^\T$ and $X' = (X_0, X_1, \ldots, X_{m-1})^\T$ are linearly equivalent.
Therefore, 
\begina\label{eq:ss1}
\text{$ X X^\T c$ is linear in $(1, X^\T)^\T$}
& \Longleftrightarrow &\text{$X X_0$ is linear in $ (1, X^\T)^\T$}\\
& \Longleftrightarrow &\text{$X X_0$ is linear in $(1, X'^\T)^\T$}\\
%& \Longleftrightarrow &\text{$X X_0$ is linear in $(1, X_0, X_1, \ldots, X_{m-1})^\T$}\\
%& \Longleftrightarrow &\text{$X_q X_0$ is linear in $(1, X_0, X_1, \ldots, X_{m-1})^\T$ for all $q = \ot{m}$}\\
%& \Longleftrightarrow &\text{$X_q X_0$ is linear in $(1, X_0, X_1, \ldots, X_{m-1})^\T$ for all $q = 0,1, \ldots, m$}\\
& \Longleftrightarrow &\text{$X' X_0$ is linear in $(1, X'^\T)^\T$}.
\enda
The result follows from applying Proposition \ref{prop:la_x0} to $X' = (X_0, X_1, \ldots, X_{m-1})^\T$. 
\end{proof}

\begin{proof}[\bf Proof of Corollary \ref{cor:la_x}\eqref{it:cor_ii}.]

Let $e_j$ denote the $j$th column of the $m\times m$ identity matrix for $j = \ot{m}$. 
Then 
$
X X^\T 
= (XX^\T e_1, XX^\T e_2, \ldots, XX^\T e_m)$
is componentwise linear in $(1, X)$. 
The result follows from Proposition \ref{prop:la_xx}. 

\end{proof}

\end{document}